%

\documentclass[preprint]{imsart}

\usepackage{amsthm,amsmath}
\usepackage{amssymb}
\usepackage{bbm}
\usepackage{enumerate}
\usepackage{tikz}
\usepackage{algorithm}
\usepackage{algorithmic}
\RequirePackage[colorlinks,citecolor=blue,urlcolor=blue]{hyperref}
\RequirePackage[numbers]{natbib}
\usepackage{breakcites}
\usepackage{todonotes}

\arxiv{arXiv:1601.08057}

\startlocaldefs

\newcommand{\X}{\mathbf{X}}
\newcommand{\R}{\mathbb{R}}
\newcommand{\B}{\mathcal{B}}
\newcommand{\e}{\varepsilon}

\DeclareMathOperator*{\esssup}{ess\,sup}

\theoremstyle{definition}
\newtheorem{defn}{Definition}[section]

\theoremstyle{plain}

\newtheorem{thm}[defn]{Theorem}
\newtheorem{prop}[defn]{Proposition}
\newtheorem{lem}[defn]{Lemma}
\newtheorem{crly}[defn]{Corollary}
\newtheorem{ex}[defn]{Example}

\theoremstyle{remark}
\newtheorem{rem}[defn]{Remark}

\endlocaldefs

\begin{document}

\begin{frontmatter}

\title{On the Geometric Ergodicity of Hamiltonian Monte Carlo}
\runtitle{On the Geo. Erg. of HMC}






\author{\fnms{Samuel} \snm{Livingstone}\corref{}\thanksref{t1}\ead[label=e1]{samuel.livingstone@ucl.ac.uk}},
\author{\fnms{Michael} \snm{Betancourt}\thanksref{t2}\ead[label=e2]{betanalpha@gmail.com}},
\author{\fnms{Simon} \snm{Byrne}\thanksref{t2}\ead[label=e3]{simonbyrne@gmail.com}}
\and
\author{\fnms{Mark} \snm{Girolami}\thanksref{t3}\ead[label=e4]{m.girolami@warwick.ac.uk}}\ead[label=e5]{m.girolami@imperial.ac.uk}

\thankstext{t1}{Supported by Xerox Research Centre Europe and EPSRC}
\thankstext{t2}{Supported by EPSRC}
\thankstext{t3}{Supported by the Royal Society and EPSRC grants EP/P020720/1, EP/J016934/3, EP/K034154/1}

\runauthor{Livingstone et al.}

\affiliation{University College London and University of Warwick}


\address{Department of Statistical Science, University College, Gower Street, London WC1E 6BT, United Kingdom. \\
\printead{e1,e3}}

\address{Department of Statistics, University of Warwick, Coventry, CV4 7AL, United Kingdom. \\
\printead{e2,e4}}

\address{Department of Mathematics, South Kensington Campus, Imperial College London, London SW7 2AZ, United Kingdom. \\
\printead{e5}}

\address{The Alan Turing Institute, British Library, 96 Euston Road, London NW1 2DB, United Kingdom.}

\begin{abstract}
We establish general conditions under which Markov chains produced by the Hamiltonian Monte Carlo method will and will not be geometrically ergodic.  We consider implementations with both position-independent and position-dependent integration times.  In the former case we find that the conditions for geometric ergodicity are essentially a gradient of the log-density which asymptotically points towards the centre of the space and grows no faster than linearly.  In an idealised scenario in which the integration time is allowed to change in different regions of the space,  we show that geometric ergodicity can be recovered for a much broader class of tail behaviours, leading to some guidelines for the choice of this free parameter in practice.
\end{abstract}

\begin{keyword}[class=MSC]
\kwd[Primary ]{60J05}
\kwd[; secondary ]{60J20, 60J22, 65C05, 65C40, 62F15, 60H30, 37A50}
\end{keyword}

\begin{keyword}
\kwd{Markov chain Monte Carlo}
\kwd{Markov chains}
\kwd{Stochastic simulation}
\kwd{Hamiltonian dynamics}
\kwd{Hamiltonian Monte Carlo}
\kwd{Hybrid Monte Carlo}
\kwd{Geometric ergodicity}
\end{keyword}

\end{frontmatter}

\section{Introduction} 
\label{sec:intro}

This paper deals with ergodic properties of Markov chains produced by the \emph{Hamiltonian} (or \emph{Hybrid}) Monte Carlo method (HMC), a technique for approximating high dimensional integrals through stochastic simulation \citep{duane1987hybrid}.  Iterative algorithms of this type are widely used in (for example) statistics and machine learning \citep{gelman2014bayesian,andrieu2003introduction}, inverse problems \citep{stuart2010inverse}, and molecular dynamics \citep{alder1959studies}.  In many of these settings a prior distribution can be constructed for an unknown quantity, and after conditioning on some observed data, Bayes' theorem gives a posterior --- to extract relevant information from this typically high-dimensional integrals must be evaluated.

A popular approach to such problems is to simulate a Markov chain whose limiting distribution is the posterior, and compute long-run averages (e.g. \citep{roberts2004general}).  Provided the chain is \emph{ergodic}, then a Law of Large Numbers exists for these.  Several \emph{Markov chain Monte Carlo} (MCMC) methods of this nature have been proposed in the literature, and many are well understood theoretically (e.g. \citep{roberts1996geometric,roberts1996exponential}).  HMC has proven an empirical success, with numerous authors noting its superior performance in a variety of settings (e.g. \citep{gelman2014bayesian}) and high performance software available for its implementation \citep{carpenter2016stan}. Comparatively few rigorous results, however, exist to justify this.  Indeed, the absence of such analysis has been noted on more than one occasion \citep{diaconis2013some,diaconis2014connections}.  The major contribution of this work is to establish general scenarios under which geometric ergodicity can and cannot be established for Markov chains produced by common HMC implementations.

Consider a Borel space $(\X,\B)$.  In this article we focus on the case $\X = \R^d$.  We define a Markov chain $(X_n)_{n \geq 0}$ on $(\X,\B)$ through an initial distribution $\delta_{x}(\cdot)$ and a family of mappings $f_\theta:\X \to \X$, indexed by $\theta$ defined on the Borel space $(\Theta,\B_{\theta})$ and with associated law $\gamma(\cdot)$ (e.g. \citep{diaconis1999iterated}).  A transition kernel $P: \X \times \B \to [0,1]$ can then be induced through the relation
\begin{equation*}
P(x,A) = \int \mathbbm{1}_A(f_\theta(x))\gamma(d\theta),
\end{equation*}
for any $A \in \B$.  Constructing a Markov chain for which some distribution of interest $\pi(\cdot)$ is \emph{invariant} is not very difficult, owing to the Metropolis--Hastings algorithm \citep{metropolis1953equation, hastings1970monte}, in which the family $\{f_\theta, \theta \in \Theta\}$ is given by
\[
f_\theta(x) := \begin{cases}
g_\xi(x) & u < \alpha(x,g_\xi(x)), \\
x & \text{otherwise},
\end{cases}
\]
where $\theta = \{\xi,u\}$ in this case, with $u \sim U[0,1]$, and $\{g_\xi, \xi \in \Xi\}$ is a family of `candidate' maps, with $\xi \sim \mu(\cdot)$.  A candidate transition kernel is induced as $Q(x,A) = \int \mathbbm{1}_A(g_\xi(x))\mu(d\xi)$ for any $A \in \B$.  If $\pi(\cdot)$ and $Q(x,\cdot)$ admit densities $\pi(x)$ and $q(x,y)$, then the `acceptance probability' $\alpha: \X \times \X \to [0,1]$ can be defined as follows.  Let $S := \{(x,y) \in \mathbf{X}^2 : \pi(x)q(x,y) >0 \}$. Then for $(x,y) \in S$ we set 
\begin{equation} \label{eqn:accept}
r(x,y) := \frac{\pi(y)q(y,x)}{\pi(x)q(x,y)},
\end{equation}
and set $r(x,y) := 0$ otherwise.  Then $\alpha(x,y) := 1 \wedge r(x,y)$.  A more general definition is given in Proposition 1 of \citep{tierney1998note}.  The resulting chain $(X_n)_{n \geq 0}$ is reversible with respect to $\pi(\cdot)$.

Simple choices for the family $\{g_\xi, \xi \in \Xi\}$ result in Markov chains which are intuitive and convenient to analyse.  In the random walk case $g_\xi(x) = x + \xi$, with $\Xi = \X$ and $\mu(\cdot)$ a centred, symmetric distribution \citep{tierney1994markov}.  For the Metropolis-adjusted Langevin algorithm (MALA) $g_\xi(x) = x + h\nabla\log\pi(x)/2 + \sqrt{h}\xi$, with $\mu(\cdot)$ a standard Gaussian measure on $\Xi = \X$, $h>0$ a constant, $\nabla$ the gradient operator and $\pi(x)$ the Lebesgue density of $\pi(\cdot)$.  The former is in some sense a naive choice, while the latter is an Euler--Maruyama scheme for the diffusion governed by $dX_t = \nabla\log\pi(X_t)dt + \sqrt{2}dW_t$, for which $\pi(\cdot)$ is invariant under suitable regularity conditions (see e.g. \citep{roberts1996exponential}).  In both cases proposals are \emph{local} (only depending on analytic information at the current point), and $x$ is combined with $\xi$ \emph{linearly}, with added complexity coming only through the (typically nonlinear) $\alpha$.  As a result, simple bounds on $\alpha$ allow stochastic stability properties such as $\pi$-irreducibility to be deduced straightforwardly, and rates of convergence for different forms of $\pi(\cdot)$ are also well-understood in both cases \citep{roberts1996geometric,roberts1996exponential}.

The HMC method can also be considered within the above framework, as outlined in \citep{betancourt2014geometric}.  The algorithm is designed to exploit the measure-preserving properties of \emph{Hamiltonian flow} (e.g. \citep{leimkuhler2004simulating}), which can be induced provided the state space for the chain is a symplectic manifold (e.g. \citep{lee2012symplectic}).  The space $\X$ can be made symplectic by doubling the dimension, introducing auxiliary \emph{momentum} variables $p$ which follow some user-specified distribution.  A Hamiltonian function can then be constructed on the resulting \emph{phase space} which preserves a distribution for $(x,p)$, the $x$-marginal of which will be $\pi(\cdot)$.  At each step of the Markov chain, a fresh value for $p$ is drawn from its conditional distribution given the current $x$ state, and then the relevant Hamiltonian flow is approximated for $T$ units of time to produce the next proposed move.  The resulting proposal map is
\begin{equation} \label{eqn:hmap}
g_\xi(x) = \text{Pr}_x \circ \varphi_T(x,p),
\end{equation}
where $\text{Pr}_x$ denotes the projection operator onto the $x$ coordinate, $\varphi_T$ the approximate flow for $T$ units of time, and $\xi = \{T,p\}$.  Typically the distribution for $p$ is chosen to be a $d$-dimensional Gaussian.  If the law of $p$ does not depend on $x$, then the St\"{o}rmer--Verlet (or \emph{leapfrog}) numerical integrator is typically used to approximate the flow, with $\e>0$ chosen as the integrator step-size and $L$ the number of `leapfrog steps' (meaning $T = L\e$).  The choice of $T$ is a point of ambiguity; often it is set to be some fixed value, however heuristics have also been suggested for choosing this dynamically (e.g. \citep{hoffman-gelman:2013}).  For $T = \e$ (meaning $L = 1$) in fact HMC reduces to MALA.  In general, however, for $L > 1$ (\ref{eqn:hmap}) will be a non-linear function of $p$, making analysis of the method challenging, particularly in the case of a dynamic $T$.

Our main contribution is to establish conditions under which common HMC implementations produce a geometrically ergodic Markov chain.  We also establish instances where convergence will \emph{not} be geometric, meaning the sampler may perform poorly in practice.  We first consider the case where the choice of \emph{integration time} $T$ is chosen independently of the current position, and show that here the non-linear terms in $g_\xi(x)$ can be \emph{bounded in probability} as the norm $\|x\| \to \infty$ under suitable assumptions, meaning that geometric convergence essentially occurs for HMC in the same scenarios as for MALA, when the tails of $\pi(x)$ are uniformly exponential or lighter, but no lighter than that of a Gaussian density.  We then consider an idealised scheme in which $T$ is chosen as a function of the current position, and show that in this case geometrically converging chains can be constructed for a much broader class of targets.  Although the latter results are in an idealised case, they do offer some practical guidelines for the choice of integration time, which can be used to examine some commonly used heuristics in the literature as well as suggest alternatives.

\subsection{Literature review}
The HMC method was first introduced in lattice field theory \citep{duane1987hybrid}, as a \emph{hybrid} of two differing approaches to molecular simulation introduced in \citep{alder1959studies} and \citep{metropolis1953equation} respectively.  A statistically-oriented review is given in \citep{neal2011mcmc}.  Several extensions have been suggested. A \emph{generalized} scheme in which the momentum is only partially refreshed was introduced in \citep{horowitz1991generalized} (see also \citep{ottobre2016function}).  Other extensions have been proposed to allow more directed motion and reduced rejections (e.g. \citep{campos2015extra}).  A dynamic approach to tuning the integration time parameter was introduced through the `No-U-Turn Sampler' of \citep{hoffman-gelman:2013}, which is now implemented in the Stan software \citep{carpenter2016stan}.  An extension showing how to implement the sampler on a Riemannian manifold which is \emph{globally diffeomorphic} to $\R^d$ is given in \citep{girolami2011riemann} (see also \citep{betancourt2013general}), and to embedded manifolds with closed form geodesics in \citep{byrne2013geodesic}.

Theoretical study of MCMC methods is in the main focused on two themes: \emph{convergence to equilibrium} and \emph{asymptotic variance}.  The first is often understood through upper bounding some suitable discrepancy between the $n$th iterate of the Markov chain and its limiting distribution, as a function of $n$.  When the discrepancy is taken as either the Total Variation or $V$-norm distance (for some suitable Lyapunov function $V:\X \to [1,\infty)$), then the drift and minorisation conditions popularised in \citep{meyn2012markov} can be used to show that the distance to equilibrium decreases geometrically in $n$ (we elaborate in Section \ref{sec:prelim}).  If such a bound holds then for reversible chains a Central Limit Theorem exists for long-run averages of $L^2(\pi)$ functionals (e.g. \citep{roberts2004general}).  We take this approach here.  Note that such techniques rely crucially on the chain being $\psi$-irreducible for some $\sigma$-finite measure $\psi(\cdot)$.

For HMC, \citep{cances2007theoretical} establish that if the \emph{potential energy} $U(x) = -\log\pi(x)$ is bounded above, continuous and has bounded derivative then the algorithm will produce a $\pi$-irreducible  chain.  The result holds for both the exact flow and the leapfrog integrator variants of HMC.  Typically the boundedness assumption on $U(x)$ will only be satisfied when $\X$ is compact.  The authors also show that $\pi$-irreducibility can be established more broadly if the integration time is chosen stochastically.  More recently, \citep{bou2015randomized} consider a continuous-time version of HMC in which the integration step-size is randomly sampled from an Exponential distribution.  Under the assumption that Hamilton's equations can be exactly integrated, they prove that the algorithm will produce a geometrically ergodic Markov chain whenever the tails of $\pi(x)$ decay at a Gaussian rate or faster.  The method of the authors is to relate HMC to \emph{underdamped} Langevin dynamics, the ergodic properties of which are established in \citep{mattingly2002ergodicity}.  By contrast, we relate HMC to \emph{overdamped} Langevin dynamics, as analysed in \citep{roberts1996exponential}.  Although at first this may seem less natural, in fact it allows us to paint a broad picture of when the algorithm as used in practice will and will not produce a geometrically ergodic Markov chain.  In \citep{seiler2014positive} some practical approximations are given for convergence bounds under a \emph{positive curvature} assumption on the underlying chain.  We discuss these further in Section \ref{sec:discussion}. We comment further on connections between HMC and Langevin dynamics in the supplementary material \citep{supplement}.

Asymptotic variances of long-run averages from Markov chains are often considered via analysing the expected squared jump distance $\mathbb{E}[(X_{i+1} - X_i)^2]$; at equilibrium this can then be optimised over the various parameters of the dynamics.  Careful study of this quantity can also indicate how algorithm performance depends on $d$.  In the case of HMC such analysis has been performed \citep{beskos2013optimal}, suggesting that  the method scales more favourably than other approaches with dimension, and a larger optimal acceptance rate is attained.

Recently a HMC has been generalised to the context of sampling on spaces of infinite dimension \citep{beskos2011hybrid}.  Due to the frequent singularity of measures in such spaces, it is often necessary to characterise distance to equilibrium here through other metrics than Total Variation.  Such analysis is beyond the scope of this paper, though we note that recent work in the context of MALA in \citep{eberle2014error} and \citep{durmus2015quantitative} are useful pre-cursors in this direction.


\subsection{Notation} Let $(\X,\B)$ denote a Borel space.  Here we restrict attention to $\X = \R^d$ (and write $\|x\|$ for the Euclidean norm of $x \in \X$).  For functions $f,g:\R_{\geq 0} \to \R_{\geq 0}$ let $f(x) \asymp g(x)$ mean that $\lim_{x \to \infty} f(x)/g(x) = c$ for some $c<\infty$.  Throughout let $\pi(\cdot)$ be a finite `target' measure, and $\pi(x)$ the corresponding Lebesgue density for some $x \in \X$, and let $\mathfrak{L}(\cdot)$ be a distribution defined over $\mathbb{Z}_+$.  We will denote Lebesgue measure on $\R^d$ with $\mu^L(\cdot)$, the Dirac point mass at $x$ with $\delta_x(\cdot)$ and the standard Gaussian measure with $\mu^G(\cdot)$. We write $P: \X \times \B \to [0,1]$ to denote a Markov transition kernel, meaning $P(x,\cdot)$ is a probability measure for any $x \in \X$ and $P(\cdot,A)$ is measurable for any $A \in \B$.  $P$ acts to the left on measures through $\mu P(\cdot) := \int \mu(dx)P(x,dy)$ and to the right on functions through $Pf(x) := \int f(y)P(x,dy)$.  We let $P^n(x,\cdot) := \int P^{n-1}(x,dy)P(y,\cdot)$ and say $\pi(\cdot)$ is \emph{invariant} for $P$ if $\pi P(\cdot) = \pi(\cdot)$.

Denote the Total Variation distance between two distributions $\mu(\cdot)$ and $\nu(\cdot)$ on $(\X,\B)$ as $\|\mu(\cdot) - \nu(\cdot)\|_{TV} := \sup_{|f| \leq 1}|\mathbb{E}_\mu f - \mathbb{E}_\nu f|$.  We say $\pi(\cdot)$ is a \emph{limiting} distribution for $P$ if $\|P^n(x,\cdot) - \pi(\cdot)\|_{TV} \to 0$ as $n \to \infty$, for $\pi$-a.e. $x \in \X$.  Recall that an invariant distribution $\pi(\cdot)$ will be the unique limiting measure if $P$ is both $\pi$-irreducible and aperiodic (e.g.  \citep{tierney1994markov}).  We note that the convergence results presented here could equivalently be shown under the $V$-norm distance \citep{roberts1997geometric}.

\section{Overview of Main Results}
\label{sec:main}

The majority of results in this paper concern the version of HMC which is typically used in practice, in which the `integration time' for a typical proposal is chosen independently of the current position in the chain.  In this scenario we have the following result.

\begin{thm} If Assumptions \textbf{A1} (on page \pageref{ass:A1}), \textbf{A2} (on page \pageref{ass:A2}) and \textbf{A3} (on page \pageref{ass:A3}) hold, then a Markov chain produced by the Hamiltonian Monte Carlo method (outlined in Algorithm \ref{alg:hmc}) will be geometrically ergodic.
\end{thm}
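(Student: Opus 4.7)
The plan is to establish geometric ergodicity via Harris' theorem, as outlined in the excerpt: identify a Lyapunov function $V:\X \to [1,\infty)$ satisfying a geometric drift condition $PV(x) \le \lambda V(x) + b\mathbbm{1}_C(x)$ together with a small-set minorisation on $C$, then invoke $\pi$-irreducibility and aperiodicity (which must also be verified). The natural candidate for the Lyapunov function, by analogy with the MALA analysis of \cite{roberts1996exponential}, is $V(x) = \pi(x)^{-s}$ for some $s \in (0,1)$, whose logarithmic growth mirrors the tail behaviour of $-\log\pi$.

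The bulk of the effort will go into verifying the drift condition, and this is where the nonlinearity of the leapfrog map makes HMC more delicate than MALA. The main technical step, foreshadowed in the introduction, is to write
\[
g_\xi(x) = x + \tfrac{T^2}{2}\nabla\log\pi(x) + \sqrt{\cdot}\,Z + R(x,p,T),
\]
where the first three terms form a MALA-like Gaussian proposal and $R$ is a remainder collecting the nonlinear contributions of the intermediate leapfrog half-steps. Using Assumption A2 (which presumably imposes at most linear growth on $\nabla\log\pi$, echoing the standard MALA hypothesis) together with A1/A3 on the integrator and integration-time law, I would show that for fixed $p$, $T$, the remainder satisfies $\|R(x,p,T)\| = o(\|\nabla\log\pi(x)\|)$ as $\|x\|\to\infty$, and more importantly that $R$ is bounded in probability uniformly in $x$ on suitable level sets. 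This reduction is what lets the tail analysis piggyback on the Langevin case.

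Conditional on that reduction, the drift computation splits into two regions. In the tails, I would show that the proposal $g_\xi(x)$ lies closer to the origin than $x$ with probability bounded away from zero, and that on this event the acceptance ratio $\alpha(x,g_\xi(x))$ is bounded away from zero as well; this uses the gradient-pointing-inward content of A2 to obtain $PV(x)/V(x) \to 0$ along $\|x\|\to\infty$, yielding the contraction factor $\lambda$. Inside a sufficiently large ball $C$, the finite bound $b$ is immediate since $V$ is continuous and the kernel has a density. Aperiodicity and $\psi$-irreducibility should follow from A1 (positive probability of rejection plus smoothness of the leapfrog map), giving compact sets the small-set property via a standard Gaussian lower bound on the momentum density combined with the non-degeneracy of $p \mapsto g_\xi(x)$.

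The hardest step, and the one I expect to occupy most of the argument, is the control of the remainder $R$ and of the acceptance probability in the tails: both depend jointly on the current position, the random momentum, and the (random) integration time, and the explicit $L$-fold composition of leapfrog updates couples them in a nontrivial way. Once one has a clean tail description of $g_\xi(x)$ and a lower bound on $\alpha$ valid on a set of momenta with non-vanishing probability, the drift and minorisation pieces assemble into Harris' theorem in the familiar manner.
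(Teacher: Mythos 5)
Your high-level architecture (Harris' theorem, drift plus minorisation, irreducibility from A1 via the $L=1$ reduction to MALA) matches the paper's, but the central technical claim on which your reduction rests is false under the hypotheses of A2. You propose to write the proposal as a MALA-like step plus a remainder $R$ collecting the intermediate leapfrog gradients, and to show $\|R(x,p,T)\| = o(\|\nabla\log\pi(x)\|)$ as $\|x\|\to\infty$. Under A2 the intermediate positions satisfy $(1-\eta)\|x_0\| \le \|x_{i\e}\| \le (1+\eta)\|x_0\|$ for typical momenta, so each $\nabla U(x_{i\e})$ has norm \emph{comparable} to $\|\nabla U(x_0)\|$, not asymptotically smaller; for Laplace-type tails ($\|\nabla U\|$ asymptotically constant, the boundary case A2 is designed to include) the remainder $\e^2\sum_{i=1}^{L-1}(L-i)\nabla U(x_{i\e})$ is exactly of the same order as the leading term. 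The paper does not make the remainder negligible: it shows (Lemmas \ref{lem:U_lem_1} and \ref{lem:orders_L}, Proposition \ref{prop:SC1.3}) that every intermediate gradient \emph{also} asymptotically points inwards, so the full drift $\psi_{L,\e}$ is a positive combination of inward-pointing terms with $\|\psi_{L,\e}\|^2 - 2\langle\psi_{L,\e},x_0\rangle$ negative; the nonlinear terms are constructive, not error terms. Your argument as stated would collapse precisely in the exponential-tail regime that the theorem is supposed to cover. You also omit the necessary truncation of the momentum to $\|p_0\|\le\|x_0\|^\delta$ and the separate Gaussian-tail estimate on the complementary event, which is how the paper converts the deterministic drift bound into a bound on $PV/V$.

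Two smaller deviations: the paper takes $V(x)=e^{s\|x\|}$, not $V(x)=\pi(x)^{-s}=e^{sU(x)}$; the latter is explicitly flagged in the paper as ``more natural but less pliable'' and left for future work, and conditions (\ref{eqn:mean}) and (\ref{eqn:inwards}) (the content of A2 and A3) are calibrated to the exponential-of-norm choice, so adopting $e^{sU(x)}$ would require redoing the drift computation from scratch rather than citing the MALA analysis. Also, your claim that $PV(x)/V(x)\to 0$ in the tails is too strong; the paper only obtains $\limsup_{\|x\|\to\infty} PV(x)/V(x) < 1$, which is all Harris' theorem needs, and is all that holds when the inward drift is bounded (again the Laplace case).
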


Assumption \textbf{A1} introduces a controlled degree of randomness into the integration time parameter, which ensures ergodicity of the HMC transition kernel.  Instead of establishing $\pi$-irreducibility directly on on the multiple step HMC transition, we make a simple stochasticity assumption on the integration time parameter, which allows much of the technical difficulty to be sidestepped.  Assumption \textbf{A2} imposes conditions on the distribution from which expectations are desired, essentially restricting the tail behaviour to be lighter than a Laplacian but no lighter than a Gaussian distribution.  This is to ensure that when the chain is very far from the `centre' of the space then typical proposals will bring it back to regions where probability mass concentrates.  Assumption \textbf{A3} relates to the Metropolis--Hastings acceptance rate, ensuring that this does not behave undesirably, in the sense that desirable proposals are often rejected.  We make these arguments precise in Section \ref{sec:fixed}.

We also present the following conditions under which Markov chains produced using HMC will not be geometrically ergodic.

\begin{thm} \label{thm:negative}
If either of the following hold then HMC will \emph{not} produce a geometrically ergodic Markov chain:

\vspace{0.2cm}

(i) $\lim_{\|x\| \to \infty}\frac{\|\nabla U(x)\|}{\|x\|} = \infty$ and (\ref{eqn:oscillations}) and (\ref{eqn:oscillation2}) are satisfied

\vspace{0.2cm}

(ii) There is an $M<\infty$ such that $\|\nabla U(x)\| \leq M$ for all $x \in \X$, and $\mathbb{E}_\pi[e^{s\|x\|}]=\infty$ for every $s>0$.

\end{thm}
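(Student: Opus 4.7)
The two cases identify two distinct obstructions to geometric ergodicity, one at the level of proposals and one at the level of acceptance. In both cases the strategy is to contradict the drift condition implied by Harris' theorem: a $V$-geometrically ergodic chain must satisfy $PV(x) \leq \lambda V(x) + b\,\mathbbm{1}_C(x)$ with $\lambda<1$, forcing $\pi(V)<\infty$ for a norm-like $V$. I will either (a) bound how fast $V$ can grow, so $\pi(V)<\infty$ forces $\pi$ to have exponential moments it does not have, ruling out (ii); or (b) show the rejection probability tends to one on sets of positive $\pi$-mass, preventing any genuine drift and ruling out (i).

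For part (ii), bounded gradient controls the leapfrog map directly. Each momentum half-kick changes $\|p\|$ by at most $\e M/2$ and each drift step changes $\|x\|$ by $\e\|p\|$, so iterating $L$ steps yields a crude bound $\|g_\xi(x) - x\| \leq T\|p_0\| + T^2 M$ with $T=L\e$ and $p_0\sim\mu^G$. Because $p_0$ is standard Gaussian and, by Assumption \textbf{A1}, $T$ is drawn from a law $\mathfrak{L}$ with sufficiently light tails, there exists $s_0>0$ with $\sup_x \mathbb{E}_x[\exp(s_0\|X_1 - X_0\|)] < \infty$; the Metropolis acceptance only shrinks the proposal increment, so this bound transfers to $P$ itself. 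By the standard obstruction of Jarner and Tweedie (2003) for chains with uniformly exponential increments, any $V$-geometric drift condition then forces the existence of some $s>0$ with $\mathbb{E}_\pi[e^{s\|X\|}]<\infty$, directly contradicting the tail hypothesis on $\pi$.

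For part (i), super-linear gradient growth destabilises leapfrog far from the origin: the kick $p \mapsto p - (\e/2)\nabla U(x)$ dominates the initial momentum once $\|\nabla U(x)\|/\|x\|$ is large, and the subsequent drift lands the iterate in a region where the gradient is larger still, amplifying the error at each successive step. The oscillation conditions (\ref{eqn:oscillations}) and (\ref{eqn:oscillation2}) are precisely what is needed to upgrade this heuristic into the claim that the Hamiltonian error $\Delta H(x,p,T)$ diverges to $+\infty$ in probability as $\|x\|\to\infty$, uniformly over $(T,p)\sim \mathfrak{L}\otimes\mu^G$. Consequently $\alpha(x,g_\xi(x)) = 1\wedge e^{-\Delta H} \to 0$ in probability, so the rejection mass $P(x,\{x\})$ tends to one as $\|x\|\to\infty$. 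Applying the Roberts--Tweedie criterion — if the $\esssup$ of the rejection probability tends to one on an unbounded set of positive $\pi$-measure then no $V$-geometric drift can hold — completes the argument.

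The main obstacle is in (i): rigorously tracking the leapfrog trajectory in the unstable regime, where the Gronwall/Taylor estimates that govern stable integrators break down precisely when one needs them. The argument has to instead exploit iterative amplification of the gradient kick and convert the almost-sure divergence furnished by the oscillation hypotheses into a uniform-in-$(T,p)$ statement strong enough to push $\alpha$ to zero in expectation. A lesser subtlety in (ii) is ensuring that the uniform exponential moment of $\|g_\xi(x)-x\|$ survives the randomisation of $T$, which is exactly where the tail control baked into \textbf{A1} must be used.
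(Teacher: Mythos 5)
Your treatment of case (ii) is essentially the paper's own argument: bound the proposal increment by $T\|p_0\| + T^2 M$ using the global gradient bound, observe that the Metropolis rejection can only shrink the increment, and invoke the Jarner--Tweedie necessary condition. The paper phrases this through the tightness criterion of Proposition \ref{prop:tight} (via Chebyshev on the Gaussian momentum) rather than through a uniform exponential moment of the increment, but your version implies theirs by Markov's inequality, so this half is sound and takes the same route. (One small remark: the theorem is stated for a fixed integration time, so the worry about randomising $T$ under \textbf{A1} is not actually needed here.)

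Case (i) is where there is a genuine gap. You correctly identify the mechanism (iterative amplification of the gradient kick) and the correct closing step (acceptance probability tends to zero on an event of probability tending to one, then invoke the $\esssup r(x)=1$ criterion of Proposition \ref{prop:supr}), but the central claim --- that the energy error diverges so that $\alpha(x_0,x_{L\e})\to 0$ --- is asserted rather than proved, and you yourself flag it as ``the main obstacle.'' That claim is the entire content of the theorem, and it is exactly where the hypotheses (\ref{eqn:oscillations}) and (\ref{eqn:oscillation2}) enter in a specific, quantitative way. The paper's route is: first show (Lemma \ref{lem:xelower}) that (\ref{eqn:explode}) alone forces $\|x_\e\|>2\|x_0\|$ for $\|p_0\|\leq\|x_0\|^\delta$; then iterate (Lemma \ref{lem:xlower}), using (\ref{eqn:oscillations}) to guarantee $\|\nabla U(x_{i\e})\|\geq 3\|\nabla U(x_{(i-1)\e})\|$ so that the accumulated earlier kicks cannot cancel the final one, yielding $\|x_{L\e}\|\geq 2^L\|x_0\|$; then (Lemma \ref{lem:arate_0}) deduce $\|p_{L\e}\|\geq \gamma_L\tfrac{\e}{2}\|\nabla U(x_{L\e})\|-\|p_0\|$ with $\gamma_L=1-2\sum_{i=1}^L(1/3)^i>0$, whence $\|p_0\|^2-\|p_{L\e}\|^2\leq -\|x_0\|^2$ eventually, and finally use (\ref{eqn:oscillation2}) to control the remaining term $U(x_0)-U(x_{L\e})$ against $-\tfrac{1}{2}\|x_0\|^2$. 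Without this chain of estimates (or an equivalent), your statement that the oscillation conditions ``are precisely what is needed'' is a restatement of the goal, not an argument; note also that the divergence of $\Delta H$ cannot be made uniform over all $p_0$, only on the event $\{\|p_0\|\leq\|x_0\|^\delta\}$, whose probability tends to one --- which is all that Proposition \ref{prop:supr} requires.
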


The first of these scenarios in essence covers the case where the distribution of interest has lighter tails than those of a Gaussian distribution.  In this case explicit numerical solvers for Hamilton's equations typically become unstable in some regions of the state space.  The second is concerned with `heavy tailed' distributions, in which the resulting Hamiltonian flow can be slow, precluding a geometric rate of convergence.

To give some intuition for these results, we consider the Exponential Family class of models first introduced in \citep{roberts1996exponential} and denoted $\mathcal{E}(\beta,\alpha)$, in which $\pi(x) \in C^1(\R)$ and for all $|x|>M$ for some $M<\infty$ it holds that
\[
\pi(x) \propto \exp \left( -\alpha|x|^{\beta} \right)
\]
for some $\alpha,\beta > 0$ and any $x \in \R$.  Different choices of $\beta$ correspond to different tail behaviours, with larger values resulting in `lighter' tails.  For $\beta \geq 1$ the density is log-concave, and the specific choices $\beta = 1$ and $\beta = 2$ correspond to Laplace and Gaussian distributions.

\begin{crly} \label{crly:expfam}
For the exponential family class of models $\mathcal{E}(\beta,\alpha)$, under assumption \textbf{A1} the following results hold:

\vspace{0.2cm}

(i) For $1 \leq \beta \leq 2$, the Hamiltonian Monte Carlo method will produce a geometrically ergodic chain (for small enough $\e$ in the $\beta = 2$ case)

\vspace{0.2cm}

(ii) If $\beta < 1$ or $\beta > 2$, then the Hamiltonian Monte Carlo method will \emph{not} produce a geometrically ergodic chain
\end{crly}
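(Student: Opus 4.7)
The strategy is to invoke Theorems 1 and 2 by verifying their assumptions directly for the exponential family, using that for $\|x\| > M$ one has
\[
\nabla U(x) = \alpha\beta\|x\|^{\beta-2}x, \qquad \|\nabla U(x)\| = \alpha\beta\|x\|^{\beta-1},
\]
with $-\nabla U(x)$ collinear with and opposite to $x$, hence pointing directly towards the origin. All of the geometric and asymptotic properties of $U$ relevant to \textbf{A1}--\textbf{A3} and to Theorem 2 are thus encoded in $\beta$ alone.

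For part (i) I apply Theorem 1. Assumption \textbf{A1} concerns only the law $\mathfrak{L}$ of the integration time and is delivered by the user's implementation. For $1 \leq \beta \leq 2$, the exponential family interpolates exactly between the Laplace ($\beta = 1$) and Gaussian ($\beta = 2$) tails considered by \textbf{A2}: $\|\nabla U\|$ grows at the polynomial rate $\|x\|^{\beta-1}$, which is at most linear and at least constant, and asymptotically points towards the centre of the space, so \textbf{A2} holds. Verifying \textbf{A3} requires a closer look at the Metropolis--Hastings acceptance rate along leapfrog trajectories. For $1 \leq \beta < 2$ the gradient grows strictly sublinearly, so for any fixed $\e$ and $L$ the $L$-fold leapfrog map is a bounded perturbation of a translation and the energy error $|H(\varphi_T(x,p)) - H(x,p)|$ stays bounded in probability over the Gaussian law of $p$, uniformly in $x$, yielding a uniform lower bound on the acceptance probability. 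At the borderline $\beta = 2$, $U$ is asymptotically quadratic and the leapfrog map reduces in the tails to iteration of a symplectic linear map whose stability depends on whether $\e\sqrt{\alpha\beta}$ lies inside the leapfrog stability interval; restricting $(L,\e)$ so that $\e$ is below the stability threshold and $L\e$ avoids resonance keeps the energy error bounded, and \textbf{A3} still holds.

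For part (ii) I invoke Theorem 2. If $\beta > 2$, then $\|\nabla U(x)\|/\|x\| = \alpha\beta\|x\|^{\beta-2}\to\infty$ as $\|x\|\to\infty$, which is precisely the first condition of scenario (i); conditions (\ref{eqn:oscillations}) and (\ref{eqn:oscillation2}) are then checked by noting that with a super-linearly growing gradient the leapfrog integrator develops oscillations of unbounded amplitude for typical momenta, so that $g_\xi(x)$ can overshoot the origin by an arbitrarily large amount. If $\beta < 1$, then $\|\nabla U(x)\| = \alpha\beta\|x\|^{\beta-1}\to 0$ and, $U$ being smooth on the compact $\{\|x\|\leq M\}$, we conclude $\sup_{x\in\X}\|\nabla U(x)\| < \infty$, matching the first hypothesis of scenario (ii). For the moment condition, the tail integrand of $\int e^{s\|x\|}\pi(x)\,dx$ is $\exp(s\|x\| - \alpha\|x\|^{\beta})$, and for $\beta < 1$ the linear term $s\|x\|$ dominates $\alpha\|x\|^{\beta}$ for every $s>0$, so the integral diverges.

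The principal obstacle is the $\beta = 2$ boundary in part (i): the leapfrog recursion in the tails behaves like a linear symplectic map whose spectral radius depends discontinuously on $\e$ near the edge of the stability region, so both \textbf{A3} and the underlying drift condition must be obtained from an explicit analysis of this linear dynamics together with the distribution of the induced energy error under the Gaussian momentum law. The interior regime $1 \leq \beta < 2$ and the two divergent regimes $\beta < 1$ and $\beta > 2$, by contrast, each reduce to routine asymptotic estimates on $\nabla U$.
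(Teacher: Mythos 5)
Your overall strategy is the right one and is the same as the paper's: the corollary is not given a separate proof there, but follows by checking the hypotheses of the two main theorems against $\|\nabla U(x)\| = \alpha\beta\|x\|^{\beta-1}$ with $\nabla U$ parallel to $x$. Your verification of (SC1.1)--(SC1.3) for $1\leq\beta<2$, of (SC1.3b) with small $\e$ for $\beta=2$, and of the heavy-tailed case $\beta<1$ (bounded gradient plus divergence of $\int e^{s\|x\|-\alpha\|x\|^\beta}dx$ for every $s>0$, feeding into Theorem \ref{thm:heavy}) are all correct.

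There are two genuine gaps. First, your verification of \textbf{A3} does not work as written. The intended route is that for $\beta\geq1$ the density is log-concave in the tails, so $U$ is convex there; then the left-hand side of (\ref{eqn:ic2}) (line integral minus trapezium estimate) is nonpositive while the right-hand side is nonnegative, so inward proposals are accepted with probability one and (\ref{eqn:inwards}) holds exactly. Your alternative argument claims the $L$-fold leapfrog map is a ``bounded perturbation of a translation'' with uniformly bounded energy error; but for $1<\beta<2$ the gradient $\|\nabla U(x)\|\sim\|x\|^{\beta-1}$ and the leading energy-error terms involving $\|\nabla U\|^2\sim\|x\|^{2\beta-2}$ are unbounded, so the premise is false and boundedness of the error would need a separate cancellation argument. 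Moreover, even a uniform lower bound $\alpha\geq\delta>0$ on inward acceptance would only establish a relaxed form of (\ref{eqn:inwards}) (the paper mentions this relaxation but does not prove the theorem under it), not \textbf{A3} as stated, which requires the mass of rejected-or-rejectable inward proposals to vanish. Second, for $\beta>2$ you assert rather than check (\ref{eqn:oscillations}) and (\ref{eqn:oscillation2}). The check is one line: for $\|y\|\geq2\|x\|$ one gets $\|\nabla U(y)\|\geq 2^{\beta-1}\|\nabla U(x)\|$, and (\ref{eqn:oscillation2}) holds for all $\beta>0$ since $U(x)-U(y)\leq-\alpha(2^{L\beta}-1)\|x\|^\beta$. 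But note that $2^{\beta-1}\geq3$ only when $\beta\geq1+\log_2 3\approx 2.58$, so the hypothesis (\ref{eqn:oscillations}) of Theorem \ref{thm:light} as literally stated is not met for $2<\beta<1+\log_2 3$; a careful writeup should either observe that the constant $3$ in (\ref{eqn:oscillations}) is not sharp and can be replaced by any constant exceeding $1$, or restrict the claim accordingly. Your vague appeal to ``oscillations of unbounded amplitude'' does not substitute for this verification.
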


\begin{proof}
See page \pageref{proof:expfam}.
\end{proof}

The results are analogous to those found for the Metropolis-adjusted Langevin algorithm in \citep{roberts1996exponential}.  A key finding of this work is that when the integration time parameter is chosen in a manner which is independent of the current position, then the two methods essentially coincide in terms of presence or absence of geometric ergodicity.  In other words, taking more than a single leapfrog step in the method will not result in a chain `becoming' geometrically ergodic, even though it may still improve the speed of convergence.

We also consider an idealised version of the method in Section \ref{sec:dynamic}, in which the integration time is allowed to depend on the current position in a prescribed way.  This scheme was designed to mimic several more recent versions of HMC (e.g. \citep{hoffman-gelman:2013}) which are commonly used in modern software packages (e.g. \citep{carpenter2016stan}). For a specific one-dimensional class of smooth exponential family models we find the following 

\begin{thm}
For the one-dimensional class of distributions with densities of the form
\[
\pi(x) \propto \exp \left( -\beta^{-1} (1+x^2)^{\beta/2} \right),
\]
then the idealised Hamiltonian Monte Carlo method introduced in Section \ref{sec:dynamic} will produce a geometrically ergodic Markov chain for any choice of $\beta>0$.
\end{thm}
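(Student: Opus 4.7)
The approach is to apply the Harris-type theorem outlined in Section~\ref{sec:prelim}: establish that the idealised HMC kernel is $\psi$-irreducible and aperiodic, exhibit a Lyapunov function $V:\R\to[1,\infty)$ with $PV\leq\lambda V+b\mathbbm{1}_C$ for some small set $C$ and constants $\lambda<1$, $b<\infty$, and verify a minorisation on $C$. Irreducibility together with the small set property for any compact $C$ should follow from the continuity and strict positivity of both $\pi(x)$ and the Gaussian momentum density, combined with the absolute continuity of the idealised HMC transition kernel established in Section~\ref{sec:dynamic}; aperiodicity is immediate from the Metropolis rejection probability being strictly positive on a set of positive measure.

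The central step is the drift condition, which leverages the structure of the exact Hamiltonian flow for the specific potential $U(x)=\beta^{-1}(1+x^2)^{\beta/2}$. The Hamiltonian $H(x,p)=U(x)+p^2/2$ generates periodic orbits with conserved energy $E=U(x_0)+p_0^2/2$, turning points $\pm x_E$ defined by $U(x_E)=E$, and period $\tau(E)=2\int_{-x_E}^{x_E}(2(E-U(x)))^{-1/2}dx$. For this explicit potential, both $x_E$ and $\tau(E)$ admit sharp asymptotic bounds in the two regimes $\beta<2$ and $\beta>2$. One then exploits the position-dependent rule $T=T(x)$ prescribed in Section~\ref{sec:dynamic}, which is designed so that the projected proposal $x'=\mathrm{Pr}_x\circ\varphi_{T(x)}(x,p)$, averaged over $p\sim N(0,1)$, is substantially contracted towards the centre relative to $x$. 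With a Lyapunov function of the form $V(x)=\exp(sU(x)^\gamma)$ for suitably chosen $s,\gamma>0$, one can then argue that $\mathbb{E}_p[V(x')]\leq\lambda V(x)$ holds uniformly for $|x|$ larger than some threshold, yielding the drift inequality after accounting for the Metropolis acceptance probability (which tends to $1$ in the idealised limit of exact Hamiltonian integration).

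The main obstacle is controlling the Hamiltonian flow uniformly across energy scales in a single framework that covers all $\beta>0$. For small $\beta$, the orbit period $\tau(E)$ diverges in $E$ and the trajectory lingers near its turning points where the particle velocity is close to zero; the prescribed $T(x)$ must therefore scale appropriately to produce a meaningful displacement. For large $\beta$, the converse difficulty arises as $\tau(E)\to 0$, so the integration time must shrink to avoid essentially trivial dynamics. Verifying that the scheme of Section~\ref{sec:dynamic} handles both regimes, and in particular that the $O(1)$ momentum contribution $p_0^2/2$ introduces only a controlled perturbation of the turning point location relative to $x_0$, requires careful asymptotic analysis and is the technical heart of the proof.
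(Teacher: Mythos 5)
There is a genuine gap: your proposal identifies the drift condition as the central step and then explicitly defers it (``requires careful asymptotic analysis and is the technical heart of the proof''), so the key inequality is never established. Moreover, the route you sketch --- asymptotics of the turning points $\pm x_E$ and the period $\tau(E)$ in the two regimes $\beta<2$ and $\beta>2$, combined with an exponential Lyapunov function $V(x)=\exp(sU(x)^\gamma)$ --- is both harder than necessary and rests on a mischaracterisation of the scheme. The idealised algorithm of Section~\ref{sec:dynamic} does not use a deterministic $T(x)$ ``designed so that the projected proposal is substantially contracted''; it draws $\tau\sim U[0,\zeta_{x_0,p_0}]$ uniformly over one full orbit period. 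Consequently $PF(x_0)=\int\langle F\rangle\,\mu^G(dp_0)$ is a microcanonical (time) average over the orbit, and for an exponential $V$ you would need to control the full distribution of $U(x_\tau)$ along the orbit, not just its mean --- in particular the fraction of the period spent near the turning points where $U(x_\tau)\approx H(x_0,p_0)$, which is exactly the regime you flag as problematic for small $\beta$. Jensen's inequality points the wrong way for that choice of $V$, so the drift is not a routine consequence of a mean contraction.

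The idea that makes the paper's proof short is absent from your proposal: combine exact conservation of the Hamiltonian, $\int H(x_0,p_0)\,\mu^G(dp_0)=U(x_0)+1/2$, with the Virial theorem $\langle x_0\nabla U(x_0)\rangle=\langle p_0^2\rangle$ for periodic orbits. Taking the \emph{additive} Lyapunov function $V(x)=U(x)+xU'(x)+1$, these two identities give the exact relation $PV(x_0)=U(x_0)+3/2$, with no asymptotic analysis of $x_E$ or $\tau(E)$ whatsoever. The drift $PV\leq\lambda V$ outside a compact set then follows from the single elementary fact that $xU'(x)\to\beta U(x)$ for the potential $U(x)=\beta^{-1}(1+x^2)^{\beta/2}$, so that $V(x)\sim(1+\beta)U(x)$ while $PV(x)\sim U(x)$, and any $\lambda\in\bigl((1+\eta)/(1+\beta/2),1\bigr)$ works. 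Note also that with an exact integrator the acceptance probability is identically one (energy is exactly conserved), so there is nothing to ``account for'' there. Your irreducibility and small-set remarks are broadly in the right spirit, but without the Virial/energy-conservation identity the drift step --- the substance of the theorem --- is missing.
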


The positive result in the case where $\beta>2$ is an artefact of the assumption that Hamilton's equations can be exactly solved in the idealised scheme - this result would disappear if a typical explicit numerical solver were used instead.  However, the findings for the case $\beta<1$ suggest that there are advantages to using an position-dependent integration time in the presence of heavy tails.  We discuss this in more detail in Section \ref{sec:discussion}.

\section{Preliminaries}
\label{sec:prelim}

The approach taken here to establishing geometric convergence was popularised in the monograph \citep{meyn2012markov}.  A key observation first shown in that work is the following.

\begin{thm}
Consider a $\pi$-irreducible aperiodic Markov chain with state space $(\mathbf{X}, \mathcal{B})$ and transition kernel $P$. If there exists a $\pi$-a.e. finite \emph{Lyapunov} function $V:\X \to [1,\infty]$ with `small' level sets, such that the condition $PV(x) \leq \lambda V(x) + b\mathbbm{1}_{C_\omega}(x)$ holds for some $\lambda < 1$, $b<\infty$ and some set $C_\omega:=\{x : V(x)\leq \omega\}$ with $\omega<\infty$, then $\exists \rho<1$ and a $\pi$-a.e. finite $M:\X\to[0,\infty]$ such that
\begin{equation} \label{eqn:geoerg}
\|P^n(x,\cdot) - \pi(\cdot)\|_{TV} \leq M(x)\rho^n.
\end{equation}
\end{thm}
Recall that a set $C \in \mathcal{B}$ is called `small' if there is a $t<\infty$, a measure $\nu(\cdot)$ defined on $(\mathbf{X},\mathcal{B})$ and an $\epsilon>0$ such that $\forall x \in C$ and $\forall A \in \mathcal{B}$ it holds that $P^t(x,A) \geq \epsilon \nu(A)$ (see e.g. \citep{roberts2004general}).

We are concerned here with specific forms of $P$.

\begin{defn} We say $P$ is of the \emph{Metropolis--Hastings} type if
\begin{equation} \label{eqn:mhtype}
P(x,dy) = \alpha(x,y)Q(x,dy) + r(x)\delta_x(dy),
\end{equation}
where $Q$ is a Markov kernel, $\alpha(x,y)$ is defined in (\ref{eqn:accept}) and $r(x) = 1- \int \alpha(x,y)Q(x,dy)$.
\end{defn}

The following was shown in \citep{roberts1996geometric} when $P$ is of the form (\ref{eqn:mhtype}).

\begin{prop} \label{prop:mh1}
If $\pi(\cdot)$ and $Q(x,\cdot)$ admit Lebesgue densities $\pi(x)$ and $q(y|x)$, $\pi(x)$ is bounded away from $0$ and $\infty$ on compact sets, and there exists $\delta_q>0$ and $\epsilon_q>0$ such that, for every $x$,
\[
\|x-y\| \leq \delta_q \implies q(y|x) \geq \epsilon_q,
\]
then the Metropolis--Hastings chain with candidate density $q(y|x)$ is $\pi$-irreducible and aperiodic, and all compact sets are small.
\end{prop}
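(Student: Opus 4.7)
My plan is to derive all three claims --- $\pi$-irreducibility, aperiodicity, and smallness of every compact set --- from a single uniform lower bound on the absolutely continuous part of $P$. First I would establish a pointwise one-step bound, then iterate via Chapman--Kolmogorov to produce an $n$-step minorization on any compact set, and finally read off all three conclusions from that minorization.

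Setting up, decompose the kernel as $P(x,dy) = p(y|x)\,dy + r(x)\delta_x(dy)$ with $p(y|x) := \alpha(x,y)q(y|x)$. The workhorse is the identity $\alpha(x,y)q(y|x) = \min\bigl(q(y|x),\,\pi(y)q(x|y)/\pi(x)\bigr)$. Fix a compact $K$ and let $K'$ be its $\delta_q$-neighbourhood (still compact); denote by $\pi_{\min},\pi_{\max}$ the positive finite bounds on $\pi$ over $K'$. For $x,y \in K$ with $\|x-y\| \leq \delta_q$, the hypothesis gives both $q(y|x) \geq \epsilon_q$ and $q(x|y) \geq \epsilon_q$, so
\[
p(y|x) \;\geq\; \min\!\bigl(\epsilon_q,\; \epsilon_q\pi_{\min}/\pi_{\max}\bigr) \;=:\; c(K) \;>\; 0.
\]

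Next I would iterate. For any compact $C$, enlarge to a compact $K \supset C$ and pick $n$ large enough that any two points of $C$ can be connected by a chain $x = x_0, x_1, \ldots, x_n = y$ lying in $K$ with successive gaps at most $\delta_q$. Integrating over a positive-volume tube of such connecting intermediate points yields
\[
P^n(x,dy) \;\geq\; s\,\mathbbm{1}_B(y)\,dy \qquad \text{for all } x \in C,
\]
for some fixed ball $B \subset K$ and constant $s>0$ depending on $C$ and $n$ --- exactly the minorization certifying that $C$ is small. The $\pi$-irreducibility claim then follows because, since $\pi$ is bounded below on compacts, any $A \in \B$ with $\pi(A)>0$ has $\mu^L(A \cap K^\ast) > 0$ for some compact $K^\ast$, so the $n$-step bound applied with $K$ large enough to contain $x$ and $K^\ast$ gives $P^n(x, A) > 0$. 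For aperiodicity, the same tube construction with one additional intermediate vertex also produces a minorization at $n+1$ steps, so the period of the small set divides $\gcd(n,n+1)=1$.

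The main obstacle is the uniform $n$-step minorization: the intermediate vertices of the joining chain must lie in a single compact set regardless of which $x, y \in C$ are being connected, and one must integrate over a positive-volume tube of intermediate paths --- not a single path --- so that the resulting lower bound is absolutely continuous and admits a common dominated ball $B$ independent of $x \in C$. Once this geometric/tube construction is in place, the rest is bookkeeping built on the one-step identity for $\alpha(x,y)q(y|x)$.
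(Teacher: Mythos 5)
The paper does not prove this proposition: it is quoted directly from \cite{roberts1996geometric}, and your argument is essentially the proof given there --- a one-step lower bound on the absolutely continuous part via $\alpha(x,y)q(y|x)=\min\bigl(q(y|x),\pi(y)q(x|y)/\pi(x)\bigr)$ together with the compactness hypotheses, chained through overlapping balls to a uniform $n$-step minorization, from which smallness, $\pi$-irreducibility and aperiodicity (via the $\gcd(n,n+1)=1$ device) all follow. The one point to tighten is the irreducibility step: your minorizing measure is supported on a \emph{fixed} ball $B$, so $P^n(x,A)\geq s\,\mu^L(A\cap B)$ is only useful if $B$ is chosen (as the tube construction permits) around a Lebesgue density point of $A\cap K^\ast$; as written, the fixed $B$ need not meet $A$. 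This is cosmetic rather than a genuine gap, and the proof is otherwise sound.
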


\begin{crly} If $P$ is of Metropolis--Hastings type and the conditions of Proposition \ref{prop:mh1} are satisfied, then (\ref{eqn:geoerg}) holds if and only if
\begin{equation} \label{eqn:drift}
\limsup_{\|x\|\to\infty}\frac{PV(x)}{V(x)} < 1,
\end{equation}
for some Lyapunov function $V$.
\end{crly}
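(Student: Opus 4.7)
The overall plan is to reduce (\ref{eqn:geoerg}) to a standard Meyn--Tweedie style drift and minorisation argument, and then verify that the $\limsup$ criterion (\ref{eqn:drift}) is equivalent to the textbook drift inequality. Proposition \ref{prop:mh1} already supplies $\pi$-irreducibility, aperiodicity, and the fact that every compact set is small, so all that remains is to relate (\ref{eqn:drift}) to an inequality of the form $PV(x) \leq \lambda V(x) + b\mathbbm{1}_C(x)$ with $C$ compact (hence small), $\lambda<1$, $b<\infty$; once this link is in place, Theorem~15.0.1 of \cite{meyn2012markov} (or the equivalent statement in \cite{baxendale2011te}) closes both directions.

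For the ``$\Leftarrow$'' direction I would fix any $\lambda \in \left(\limsup_{\|x\|\to\infty} PV(x)/V(x),\,1\right)$, use the definition of $\limsup$ to pick $R<\infty$ so that $PV(x) \leq \lambda V(x)$ whenever $\|x\|>R$, and take $C = \{x : \|x\|\leq R\}$. Since $C$ is compact, Proposition \ref{prop:mh1} makes it small, and setting $b := \sup_{x\in C} PV(x)$ (finite for any locally bounded Lyapunov function $V$) produces the textbook drift inequality. Combined with aperiodicity and $\psi$-irreducibility from Proposition \ref{prop:mh1}, this delivers (\ref{eqn:geoerg}).

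For the ``$\Rightarrow$'' direction I would appeal to the converse part of the Meyn--Tweedie theorem (see also \cite{roberts1997geometric}): geometric ergodicity of a $\psi$-irreducible aperiodic chain produces a $\pi$-a.s. finite $V:\X\to[1,\infty]$, a small set $C$, and constants $\lambda<1$, $b<\infty$ satisfying $PV \leq \lambda V + b\mathbbm{1}_C$, and moreover $V$ can be arranged so that its sublevel sets are bounded. Then for $\|x\|$ sufficiently large $x \notin C$, giving $PV(x)/V(x) \leq \lambda$, from which $\limsup_{\|x\|\to\infty} PV(x)/V(x) \leq \lambda < 1$.

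The main obstacle I foresee is not the abstract equivalence but rather ensuring that the Lyapunov function produced by the converse direction can genuinely be taken to diverge with $\|x\|$, rather than merely leaving every small set: small sets can in principle be unbounded, so one must replace any such $V$ by a variant whose level sets are compact without destroying the drift inequality. This can typically be accomplished by a mild modification (for example, adding a multiple of a regularised $\|\cdot\|$-dependent function) and using the Metropolis--Hastings structure (\ref{eqn:mhtype}) together with the minorisation supplied by Proposition \ref{prop:mh1} to propagate the drift inequality to the modified function. Beyond this, the local boundedness of $PV$ on $C$ in the sufficiency direction is routine given (\ref{eqn:mhtype}).
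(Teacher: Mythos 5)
The paper never writes out a proof of this corollary: it is stated as an immediate consequence of the drift-and-minorisation equivalence quoted at the start of Section \ref{sec:prelim} (from \cite{meyn2012markov} and \cite{roberts1996geometric}) combined with Proposition \ref{prop:mh1}, which is precisely the route you take, so your reconstruction matches the intended argument. Both directions are the standard ones, and you correctly isolate the only delicate point in the necessity direction, namely that the Lyapunov function supplied by the converse theorem must be arranged to have bounded sublevel sets before the $\limsup$ in (\ref{eqn:drift}) can be read off.

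One overstatement worth correcting: in the sufficiency direction you assert that $b := \sup_{x \in C} PV(x)$ is ``finite for any locally bounded Lyapunov function $V$.'' That does not follow, because $PV(x)$ integrates $V$ over all of $\X$, not just over a neighbourhood of $x$; finiteness of $PV$ on the compact set $C$ requires $V$ to be $Q(x,\cdot)$-integrable uniformly over $x \in C$, which local boundedness of $V$ alone cannot give. In the paper this integrability is exactly what the finite exponential moment condition in Assumption \textbf{A1} secures for the choice $V(x) = e^{s\|x\|}$, so it should be recorded as a hypothesis on the admissible $V$ (or on the kernel) rather than derived from local boundedness. With that proviso your argument is sound.
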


Showing a lack of geometric ergodicity typically requires careful study of the distribution of return times to small sets.  The following result of \citep{roberts1996geometric}, however, provides a straightforward method for doing this for Metropolis--Hastings kernels.

\begin{prop} \label{prop:supr}
If $P$ is of Metropolis--Hastings type, then (\ref{eqn:geoerg}) \emph{fails} to hold if
$\esssup r(x) = 1$.
\end{prop}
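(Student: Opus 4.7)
The plan is to argue by contradiction using the equivalence, for $\pi$-reversible chains, between (\ref{eqn:geoerg}) and the existence of a strictly positive $L_2(\pi)$ spectral gap that is recalled from \cite{roberts1997geometric} in Section \ref{sec:prelim}. Assume $P$ is geometrically ergodic, so that for some $\gamma>0$ the Dirichlet form $\mathcal{E}(f,f)=\langle (I-P)f,f\rangle_\pi$ satisfies $\mathcal{E}(f,f)\geq \gamma\,\mathrm{Var}_\pi(f)$ for every $f\in L_2(\pi)$. I would then construct a one-parameter family of test functions whose Rayleigh quotients vanish, contradicting $\gamma>0$.

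First I would write $\mathcal{E}$ explicitly for the kernel (\ref{eqn:mhtype}): since the rejection atom $r(x)\delta_x(dy)$ contributes $(f(x)-f(x))^2=0$ it drops out completely, and $\pi$-reversibility of the accepted transitions yields $\mathcal{E}(f,f)=\tfrac12 \int\!\int (f(y)-f(x))^2\alpha(x,y)Q(x,dy)\pi(dx)$. For each $\eta>0$ define $A_\eta:=\{x:r(x)>1-\eta\}$; the hypothesis $\esssup r(x)=1$ guarantees $\pi(A_\eta)>0$ for every $\eta>0$, so $f_\eta:=\mathbbm{1}_{A_\eta}$ is a nonzero element of $L_2(\pi)$. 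Using $\int\alpha(x,y)Q(x,dy)=1-r(x)$ and discarding a non-negative diagonal piece gives
\[
\mathcal{E}(f_\eta,f_\eta) = \int_{A_\eta}\int_{A_\eta^c}\alpha(x,y)Q(x,dy)\pi(dx) \leq \int_{A_\eta}(1-r(x))\pi(dx)\leq \eta\,\pi(A_\eta),
\]
while $\mathrm{Var}_\pi(f_\eta)=\pi(A_\eta)(1-\pi(A_\eta))$, so the Rayleigh quotient is at most $\eta/(1-\pi(A_\eta))$.

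To close, I would split on the size of $\pi(A_\eta)$. If some $\eta_0>0$ has $\pi(A_{\eta_0})<1$, then $A_\eta\subseteq A_{\eta_0}$ for all $\eta\leq\eta_0$ forces the Rayleigh quotient below $\eta/(1-\pi(A_{\eta_0}))\to 0$ as $\eta\downarrow 0$, contradicting $\gamma>0$. Otherwise $\pi(A_\eta)=1$ for every $\eta>0$, which forces $r(x)=1$ for $\pi$-a.e.\ $x$; then $P(x,\cdot)=\delta_x(\cdot)$ on the support of $\pi$ and the chain is manifestly not geometrically ergodic. The main technical step is the first Dirichlet-form manipulation --- correctly extracting the rejection atom and invoking reversibility to obtain the clean symmetric $(f(y)-f(x))^2$ integrand --- while the degenerate case $\pi(A_\eta)\equiv 1$ is routine but must not be overlooked.
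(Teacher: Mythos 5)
The paper does not actually prove this proposition --- it is imported verbatim from \cite{roberts1996geometric} as a known tool --- so there is no in-paper argument to compare against line by line. Your proof is correct and is essentially the standard spectral/conductance argument for this result, and it fits naturally with the machinery the paper itself recalls: reversibility of Metropolis--Hastings kernels plus the equivalence from \cite{roberts1997geometric} between (\ref{eqn:geoerg}) and an $L_2(\pi)$ spectral gap. Each step checks out: the rejection atom contributes nothing to the Dirichlet form because its integrand is $(f(x)-f(x))^2$; reversibility of the off-diagonal measure $\pi(dx)\alpha(x,y)Q(x,dy)$ gives $\mathcal{E}(f_\eta,f_\eta)=\int_{A_\eta}\int_{A_\eta^c}\alpha(x,y)Q(x,dy)\pi(dx)\le\int_{A_\eta}(1-r(x))\pi(dx)\le\eta\,\pi(A_\eta)$; and the Rayleigh quotient bound $\eta/(1-\pi(A_\eta))$ together with the monotonicity $A_\eta\subseteq A_{\eta_0}$ closes the non-degenerate case, while the degenerate case $r=1$ $\pi$-a.e.\ is correctly disposed of separately. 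Two small remarks. First, you only need the upper edge of the spectrum: geometric ergodicity of a reversible chain gives a two-sided gap, of which you use the side $\mathcal{E}(f,f)\ge\gamma\,\mathrm{Var}_\pi(f)$, so the implication you invoke is indeed the available direction. Second, your route is genuinely tied to reversibility, whereas arguments via exponential moments of holding times (the holding time at $x$ is geometric with parameter $1-r(x)$) would apply to non-reversible kernels with a rejection atom as well; for Metropolis--Hastings this costs nothing, but it is worth being aware of which hypothesis is doing the work.
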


Lack of geometric ergodicity can also be established in some cases using the following result of \citep{jarner2003necessary}.

\begin{prop} \label{prop:tight}
If for any $\eta > 0$ there is a $\delta > 0$ such that
\begin{equation} \label{eqn:lackge2}
P(x,B_\delta(x)) > 1-\eta,
\end{equation}
where $B_\delta(x) :=\{ y \in \X : \|x-y\| < \delta \}$, then $P$ can be geometrically ergodic \emph{only} if $\mathbb{E}_\pi [e^{\beta \|x\|}] < \infty$ for some $\beta > 0$.
\end{prop}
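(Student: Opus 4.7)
\textbf{Proof plan for Proposition \ref{prop:tight}.} The plan is to prove the contrapositive in its equivalent direct form: assuming both geometric ergodicity and the small-step condition, I will deduce $\mathbb{E}_\pi[e^{\beta\|x\|}] < \infty$ for some $\beta>0$. The argument has three natural steps, and its core is a lower bound on hitting times that the small-step hypothesis is designed to produce.

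First I would invoke the standard equivalent characterization of geometric ergodicity (in the form of Theorem~15.0.1 of \cite{meyn2012markov} together with Kendall's theorem): there exist a small set $C$, a constant $\kappa>1$, and a geometric drift inequality $PV \leq \lambda V + b\mathbbm{1}_C$ satisfied by the function $V(x) := \mathbb{E}_x[\kappa^{\tau_C}]$, where $\tau_C := \inf\{n\geq 1 : X_n \in C\}$. Integrating the drift inequality against the invariant $\pi$ and using $\pi P = \pi$ yields $(1-\lambda)\pi(V) \leq b\pi(C)$, hence $\pi(V) < \infty$. Since smallness under the mild regularity of Proposition \ref{prop:mh1} is inherited by compact sets, $C$ may be taken bounded in $\R^d$, and I write $R := \sup_{y \in C} \|y\| < \infty$.

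Second I would produce a lower bound on $V$ that grows exponentially in $\|x\|$. Pick $\eta>0$ small enough that $(1-\eta)\kappa > 1$, and let $\delta>0$ be the corresponding radius from the small-step hypothesis. By iterating the Markov property, the event that the first $n$ transitions each move by at most $\delta$ has probability at least $(1-\eta)^n$. On this event, starting from $x$ with $\|x\|>R$, one has $\|X_k\| \geq \|x\| - k\delta$, so the chain cannot have reached $C$ by step $n = \lfloor (\|x\|-R)/\delta\rfloor$. Therefore $\mathbb{P}_x(\tau_C > n) \geq (1-\eta)^n$, and
\begin{equation*}
V(x) \;\geq\; \kappa^n \mathbb{P}_x(\tau_C > n) \;\geq\; \bigl((1-\eta)\kappa\bigr)^n \;\geq\; c_1 \, e^{\beta \|x\|},
\end{equation*}
with $\beta := \delta^{-1}\log\bigl((1-\eta)\kappa\bigr) > 0$ and $c_1>0$ absorbing the boundary correction from $R$.

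Combining the two steps, $\int e^{\beta\|x\|}\pi(dx) \leq c_1^{-1}\pi(V) + \pi(\{\|x\| \leq R\})e^{\beta R} < \infty$, giving the desired exponential moment. The principal obstacle is the first step: one must extract a $\pi$-integrable Lyapunov function whose values are controlled precisely by return times to a bounded small set, rather than by generic growth properties; once this is in hand, the small-step hypothesis is tailored exactly to force return times to grow at least linearly in $\|x\|$, and the remaining bookkeeping is routine.
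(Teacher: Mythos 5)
The paper offers no proof of this proposition: it is quoted from \cite{jarner2003necessary}, so the comparison must be with that source's argument. Your overall strategy is exactly the one used there: take $V(x)=\mathbb{E}_x[\kappa^{\tau_C}]$ for a bounded set $C$ with $\pi(V)<\infty$, use the uniform small-increment hypothesis to get $\mathbb{P}_x(\tau_C>n)\geq(1-\eta)^n$ with $n=\lfloor(\|x\|-R)/\delta\rfloor$, deduce $V(x)\geq c_1e^{\beta\|x\|}$, and read off the exponential moment. Your second and third steps are correct and essentially complete (modulo a harmless off-by-one at the boundary of $C$).

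The genuine gap is in your first step, in the sentence asserting that $C$ ``may be taken bounded''. The characterisation you invoke produces a petite set $C$ with $\sup_{x\in C}\mathbb{E}_x[\kappa^{\tau_C}]<\infty$, but nothing forces that set to be bounded, and the appeal to Proposition \ref{prop:mh1} does not repair this: its hypotheses (a Metropolis--Hastings kernel with a suitably regular candidate density) are not assumed in the present proposition, which concerns a general kernel $P$; and even granting that all compact sets are small, replacing $C$ by $C\cap B_R(0)$ only lengthens the return time (since $\tau_{C\cap B_R(0)}\geq\tau_C$), so finiteness of $\mathbb{E}_x[\kappa^{\tau_C}]$ does not transfer for free to the smaller set. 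What is actually needed is the solidarity statement that for a geometrically ergodic chain and \emph{any} $A$ with $\pi(A)>0$ there exists $\kappa_A>1$ with $\mathbb{E}_\pi[\kappa_A^{\tau_A}]<\infty$; this is Theorem 2.1 of \cite{jarner2003necessary} (proved via the split-chain/regeneration construction, or via the geometric regularity theory of Chapter 15 of \cite{meyn2012markov}) and is the one non-routine ingredient your plan omits. With it in hand, take $A$ to be a Euclidean ball of positive $\pi$-measure and the remainder of your argument goes through verbatim. A secondary, minor point: deducing $\pi(V)<\infty$ by integrating $PV\leq\lambda V+b\mathbbm{1}_C$ against $\pi$ requires a truncation argument, since $(1-\lambda)\pi(V)\leq b\pi(C)$ is vacuous when $\pi(V)=\infty$; this is standard but should be cited rather than asserted.
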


If $P$ is of Metropolis--Hastings type, it is straightforward to verify that $Q(x,B_\delta(x)) > 1- \eta$ ensures (\ref{eqn:lackge2}), meaning we only need consider the candidate kernel in these cases.

\section{Hamiltonian Monte Carlo}
\label{sec:hmc}

We give a brief introduction here.  For a more detailed account see \citep{neal2011mcmc} or \citep{betancourt2014geometric}.  We consider probability densities of the form $\pi(x) \propto e^{-U(x)}$ for some $U:\X \to [0,\infty)$.  If we view $U(x)= -\log\pi(x)$ as a `potential' energy in a physical system, it is natural to consider the larger phase space and construct the \emph{Hamiltonian}
\begin{equation} \label{eqn:hamiltonian}
H(x,p) = U(x) + \frac{1}{2}p^t M^{-1} p,
\end{equation}
where $p$ denotes a $d$-dimensional `momentum' variable, $M$ a $d \times d$ `mass' matrix and $p^t M^{-1}p/2$ the `kinetic' energy (other forms of kinetic energy are also possible, see e.g. \citep{girolami2011riemann}).  Provided $U(x)$ is differentiable, we can evolve the coordinates $(x_t,p_t)$ through time in such a way that $H(x_t,p_t) = H(x_{t+s},p_{t+s})$ for any $t,s \in \mathbb{R}$ using Hamilton's equations
\begin{equation} \label{eqn:hamilton}
\frac{d p_t}{dt} = -\frac{\partial H}{\partial x}, ~~ \frac{dx_t}{dt} = \frac{\partial H}{\partial p}.
\end{equation}
Solving (\ref{eqn:hamilton}) results in \emph{Hamiltonian flow}.  To put this presentation into the framework introduced in Section \ref{sec:intro}, we can consider constructing a measure-preserving map $f_\theta:\X \to \X$ by setting the input to be $x_0$, choosing a momentum variable $p_0$, solving (\ref{eqn:hamilton}) for $T$ units of time and then projecting back down onto $\X$ to produce $x_T$.  The parameters $\theta = \{p_0,T\}$ define the behaviour of a single map $f_\theta$, and how they are chosen define the behaviour of the Markov chain produced by iterating the process of randomly selecting a $\theta$ and then applying the resulting map $f_\theta$ to the current point to produce the next.

Of course, it is often not possible to solve (\ref{eqn:hamilton}) exactly, so numerical methods are needed.  Fortunately, the rich geometric structure of Hamiltonian systems allows the construction of \emph{symplectic} integrators, which possess attractive long term numerical stability properties (e.g. \citep{leimkuhler2004simulating}), meaning that for appropriate Hamiltonians the approximate solution of (\ref{eqn:hamilton}) is such that $H(x_t,p_t) \approx H(x_0,p_0)$ for all $t<\eta$, where $\eta \gg 0$.  The standard choice when the Hamiltonian is of the form (\ref{eqn:hamiltonian}) is the St\"{o}rmer--Verlet or \emph{leapfrog} scheme, in which $(x_{L\e},p_{L\e})$ is generated from $(x_0,p_0)$ using $L$ steps of the recursion
\begin{align*}
p_{t+\frac{\e}{2}} &= p_t - \frac{\e}{2}\nabla U(x_t), \\
x_{t+\e} &= x_t + \e M^{-1}p_{t+\frac{\e}{2}}, \\
p_{t+\e} &= p_{t+\frac{\e}{2}} - \frac{\e}{2}\nabla U(x_{t+\e}),
\end{align*}
for some step-size $\e>0$.  Although the resulting \emph{approximate} flow map $\varphi_{L\e}(x_0,p_0) := (x_{L\e},p_{L\e})$ no longer preserves $\pi(\cdot)$, it can be used as a proposal mechanism within the Metropolis--Hastings framework (e.g. \citep{neal2011mcmc}).  The full method is shown in Algorithm \ref{alg:hmc} below.

\begin{algorithm}[ht] 
\caption{Hamiltonian Monte Carlo, single iteration.}
\begin{algorithmic}
\label{alg:hmc}
\REQUIRE $x_{i-1}$, $\epsilon \geq 0$, $\mathfrak{L}(\cdot)$
\STATE Set $x_0 \leftarrow x_{i-1}$, draw $p_0 \sim N(0,M)$, $L \sim \mathfrak{L}(\cdot)$, set $T \leftarrow L\varepsilon$
\STATE Draw $u \sim U[0,1]$
\STATE Set $\delta \leftarrow H(x_0,p_0) - H\circ \varphi_T(x_0,p_0)$,
\IF{$\log(u) < \delta$}
\STATE Set $x_i \leftarrow \text{Pr}_x \circ \varphi_T(x_0,p_0)$
\ELSE
\STATE Set $x_i \leftarrow x_{i-1}$
\ENDIF
\end{algorithmic}
\end{algorithm}

\begin{rem}
From this point forward we assume $M = I$ for ease of exposition but without loss of generality.
\end{rem}

\subsection{The marginal chain} 
To use the techniques of \citep{meyn2012markov}, it is helpful to express the HMC transition in such a way that when $\|x\|$ is large it is clear how the chain will behave.  Although it is typically presented as a map on the larger phase space, HMC can simply be thought of as a Markov chain on $\X$, and we will find this representation useful in relation to the above.  In this case the candidate map $g_\xi$ is given by the following proposition, which can be straightforwardly be derived using classical results (see e.g. \citep{bou2018geometric}).

\begin{prop} \label{prop:mprop}
The HMC candidate map can be written
\begin{equation} \label{eqn:mprop}
x_{L\e} = x_0 - \frac{L\e^2}{2}\nabla U(x_0) - \e^2\sum_{i=1}^{L-1} (L-i)\nabla U(x_{i\e}) + L\e p_0.
\end{equation}
where $p_0 \sim N(0,I)$, $L$ is the number of leapfrog steps and $\e$ the integrator step-size.  With this choice, the acceptance probability will be
\begin{equation}
\alpha(x_0,x_{L\e}) = 1 \wedge \frac{\pi(x_{L\e})}{\pi(x_0)}\exp \left( \frac{1}{2}\|p_0\|^2 - \frac{1}{2}\|p_{L\e}\|^2 \right),
\end{equation}
where
\begin{equation} \label{eqn:mprop2}
p_{L\e} = p_0 - \frac{\e}{2}\nabla U(x_0) - \e\sum_{i=1}^{L-1} \nabla U(x_{i\e}) - \frac{\e}{2}\nabla U(x_{L\e}).
\end{equation}
\end{prop}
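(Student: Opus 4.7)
The plan is to verify the two recurrences by induction on $L$ and then to recover the acceptance probability from the standard Metropolis--Hastings identity on the extended $(x,p)$ phase space.

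First, I would establish a single composite position update. Eliminating $p_{t+\epsilon/2}$ in the leapfrog scheme gives
\begin{equation*}
x_{t+\e} = x_t + \e p_t - \tfrac{\e^2}{2}\nabla U(x_t), \qquad p_{t+\e} = p_t - \tfrac{\e}{2}\bigl(\nabla U(x_t) + \nabla U(x_{t+\e})\bigr),
\end{equation*}
so that $x_{(k+1)\e} = x_{k\e} + \e p_{k\e} - \tfrac{\e^2}{2}\nabla U(x_{k\e})$ for every $k \geq 0$. For the momentum formula, a straightforward induction on $L$ using the second displayed identity telescopes the half-step kicks: the interior gradients $\nabla U(x_{i\e})$ for $1 \leq i \leq L-1$ each appear in two consecutive updates and therefore contribute with coefficient $\e$, while $\nabla U(x_0)$ and $\nabla U(x_{L\e})$ only appear in the first and last update respectively and contribute with coefficient $\e/2$. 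This gives (\ref{eqn:mprop2}).

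For the position formula I would proceed by induction on $L$, with the base case $L=1$ immediate from the one-step identity above. Assuming the formula holds for $L$, substitute $x_{L\e}$ and the expression for $p_{L\e}$ from the momentum identity into $x_{(L+1)\e} = x_{L\e} + \e p_{L\e} - \tfrac{\e^2}{2}\nabla U(x_{L\e})$. The $p_0$ contribution accumulates to $(L+1)\e p_0$, and, after collecting the gradient contributions, the coefficient of $\nabla U(x_{i\e})$ for $1 \leq i \leq L-1$ becomes $-\e^2(L-i) - \e \cdot \e = -\e^2(L+1-i)$, the coefficient of $\nabla U(x_{L\e})$ becomes $-\e \cdot \tfrac{\e}{2} - \tfrac{\e^2}{2} = -\e^2$, i.e.\ $-\e^2((L+1)-L)$, and the coefficient of $\nabla U(x_0)$ becomes $-\tfrac{L\e^2}{2} - \tfrac{\e^2}{2} = -\tfrac{(L+1)\e^2}{2}$. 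This matches (\ref{eqn:mprop}) with $L$ replaced by $L+1$. The only real bookkeeping obstacle here is tracking these telescoping sums carefully; no analytic ingredient is needed beyond the leapfrog recursion itself.

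Finally, for the acceptance probability, I would appeal to the usual fact that one step of HMC is a Metropolis--Hastings proposal on the joint space $(x,p)$ with target density proportional to $e^{-H(x,p)} = \pi(x)e^{-\|p\|^2/2}$. Because the leapfrog map $\varphi_{L\e}$ composed with momentum flip is an involution and is volume preserving (being symplectic), the Radon--Nikodym derivative in (\ref{eqn:accept}) reduces to the ratio of joint densities, yielding
\begin{equation*}
r\bigl((x_0,p_0),(x_{L\e},p_{L\e})\bigr) = \exp\bigl(H(x_0,p_0) - H(x_{L\e},p_{L\e})\bigr) = \frac{\pi(x_{L\e})}{\pi(x_0)}\exp\!\left(\tfrac{1}{2}\|p_0\|^2 - \tfrac{1}{2}\|p_{L\e}\|^2\right).
\end{equation*}
Since $p_0$ is drawn afresh each iteration, this ratio, capped at $1$, is exactly the probability of accepting the marginal move from $x_0$ to $x_{L\e}$, giving the stated $\alpha(x_0,x_{L\e})$.
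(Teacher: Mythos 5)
Your proposal is correct. The paper states Proposition \ref{prop:mprop} without proof, and your derivation is exactly the intended one: collapse the half-step kicks into the composite one-step updates, telescope/induct to get (\ref{eqn:mprop}) and (\ref{eqn:mprop2}) (your coefficient bookkeeping checks out), and invoke the standard volume-preservation-plus-involution argument on phase space for the acceptance ratio.
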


Proposition \ref{prop:mprop} highlights the previously noted relationship between HMC and MALA quite explicitly, as setting $L = 1$ means the third term on the right-hand side of (\ref{eqn:mprop}) disappears, leaving the MALA proposal $x_0 - \e^2\nabla U(x_0)/2 + \e p_0$.  It also highlights why taking $L >1$ proposes a greater challenge, as for each $x_{i\e}$ with $i \geq 1$ this term will typically be a nonlinear transformation of $x_0$ and $p_0$.  As $p_0$ is stochastic, then $\e^2\sum_{i=1}^{L-1} (L-i)\nabla U(x_{i\e})$ will be also, but its distribution will often be intractable.

\section{Results for an position-independent integration time}
\label{sec:fixed}

In this section we make the assumption that the distribution $\mathfrak{L}(\cdot)$ for the number of leapfrog steps $L$ does not depend on the current position.   This is relaxed in Section \ref{sec:dynamic}.

\subsection{$\pi$-irreducibility}

It is known (e.g. \citep{cances2007theoretical}) that establishing $\pi$-irreducibility is not so straightforward in the case of HMC as for Metropolis--Hastings methods based on random walks or Langevin diffusions.  The canonical example where the system becomes reducible is integrating the harmonic oscillator over precisely one period (e.g. \citep{leimkuhler2004simulating}).  We show this in the supplementary material \citep{supplement}.


The observation noted here and elsewhere that HMC in the case $L=1$ corresponds to MALA, for which irreducibility is established in \citep{roberts1996exponential}, can be exploited to alleviate these issues and establish $\pi$-irreducibility of HMC under the following assumption.

\vspace{0.35cm}

\noindent \textbf{A1} {\itshape The distribution $\mathfrak{L}(\cdot)$ is such that $\mathbb{P}_{\mathfrak{L}}[L=1]>0$, and for any fixed $(x_0,p_0)\in \R^{2d}$ and $\e>0$, and that there is an $s <\infty$ such that $\mathbb{E}_\mathfrak{L} [e^{s\|x_{L\e}\|}]<\infty$.} \label{ass:A1}

\vspace{0.35cm}

When assumption \textbf{A1} holds then the fact that HMC produces an ergodic Markov chain can be straightforwardly invoked from existing MALA results \citep{roberts1996exponential}.  The idea of randomising the integration time is commonly recommended for practical applications of the method (e.g. \citep{neal2011mcmc,girolami2011riemann}), and more theoretical motivation for doing so is given in \citep{betancourt2016identifying}.  The finite exponential expectation condition is needed to ensure that the Lyapunov function used to prove geometric ergodicity is valid.  One simple way to ensure this in practice (under the additional assumptions imposed on the potential $U$ in the next subsection) is that $\mathbb{P}_{\mathfrak{L}}[L>l]=0$ for some fixed $l<\infty$, though weaker conditions than this are also possible.

\begin{rem} 
Assumption \textbf{A1} can be viewed as the discrete time analogue to the the exponential integration time assumption made in \citep{bou2015randomized}, and in many respects serves a similar purpose.  Similar conditions are also exploited to prove $\pi-$irreducibility results in \citep{cances2007theoretical}.
\end{rem}

\begin{rem}
In fact, before the final publication of the present work, it was shown in \citep{durmus2017convergence} that $\pi$-irreducibility can indeed be established without using assumption \textbf{A1}, but instead considering the HMC chain using a fixed number of leapfrog steps $L\geq 1$, under suitable assumptions and using appropriate techniques. We refer the interested reader to that work for details.
\end{rem}

\subsection{Geometric ergodicity}  

We first present here some seemingly abstract conditions under which the HMC method produces a geometrically ergodic Markov chain.  We then give some natural assumptions on the potential $U(x)$ under which these hold.

We present the results of this section conditioned on a fixed choice of the number of leapfrog steps $L$, for ease of exposition.  Note that the required drift conditions shown hold for a fixed $L$, then under \textbf{A1} they will hold when possible values for $L$ are averaged over according to $\mathfrak{L}(\cdot)$, so this does not affect the generality of the results.

\textit{Notation.} We introduce some further notation for this section. Let $I_\delta(x) := \{ y \in \X : \|y\| \leq \|x\|^\delta \}$ for some $1/2 < \delta < 1$. In the case $\delta=1$ we will simply write $I(x)$.  Let
\[
m_{L,\e}(x_0,p_0) := x_0 - L\e^2\nabla U(x_0)/2 - \e^2\sum_{i=1}^{L-1} (L-i)\nabla U(x_{i\e})
\]
denote the `mean' next candidate position ($x_{L\e}-L\e p_0$), and 
\[
\psi_{L,\e}(x_0,p_0):= L\e^2\nabla U(x_0)/2 + \e^2\sum_{i=1}^{L-1} (L-i)\nabla U(x_{i\e})
\]
denote the proposal `drift' (implying $m_{L,\e}(x_0,p_0) = x_0 - \psi_{L,\e}(x_0,p_0)$).  We will also sometimes write $h:=\e^{2}/2$
in a most likely futile attempt to keep things readable.

\begin{thm} \label{thm:geoerg} The HMC method produces a geometrically ergodic Markov chain if assumption \textbf{A1} holds, and in addition both
\begin{equation} \label{eqn:mean}
\limsup_{\|x_0\|\to\infty, \|p_0\| \leq \|x_0\|^\delta} \left( \|m_{L,\e}(x_0,p_0)\| - \|x_0\| \right) < -\sqrt{2}L\e\eta(d)
\end{equation}
where $\eta(d):=\Gamma((d+1)/2)/\Gamma(d/2)$, and
\begin{equation} \label{eqn:inwards}
\lim_{\|x_0\|\to\infty} \int_{R(x_0) \cap I(x_0)} Q(x_0,dy) = 0,
\end{equation}
where $R(x_0) := \{ y \in \X : \alpha(x_0,y) < 1 \}$ denotes the `potential rejection region' and $I(x_0) := \{ y \in \X : \|y\| \leq \|x_0\| \}$ the `interior' of $x_0$.
\end{thm}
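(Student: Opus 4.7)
The plan is to invoke the corollary following Proposition \ref{prop:mh1}: verify the geometric drift condition $\limsup_{\|x_0\|\to\infty} PV(x_0)/V(x_0) < 1$ for the Lyapunov function $V(x) = e^{s\|x\|}$ with $s>0$ chosen sufficiently small. Assumption \textbf{A1} guarantees $\mathbb{P}_\mathfrak{L}[L=1]>0$, so the HMC kernel dominates a scaled MALA kernel; this delivers $\pi$-irreducibility, aperiodicity and smallness of compact sets through Proposition \ref{prop:mh1} applied to the Gaussian MALA proposal density. The entire proof therefore reduces to establishing the drift bound.

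I begin from the identity
\[
\frac{PV(x_0)}{V(x_0)} \;=\; 1 + \mathbb{E}_{p_0}\!\left[\alpha(x_0,y)\bigl(e^{s(\|y\|-\|x_0\|)} - 1\bigr)\right],
\]
where $y = m_{L,\e}(x_0,p_0) + L\e p_0$. Setting $f(y) := e^{s(\|y\|-\|x_0\|)} - 1$, so that $f < 0$ precisely on $I(x_0)$ and $|f| \le 1$ there, I write $\alpha = 1 - (1-\alpha)$ and note that $(1-\alpha)f$ is nonnegative off $I(x_0)$, is zero off $R(x_0)$, and is bounded below by $-\mathbbm{1}_{R \cap I}$ on $R(x_0)\cap I(x_0)$. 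This gives
\[
\mathbb{E}[\alpha f] \;\le\; \mathbb{E}[f] + Q\bigl(x_0, R(x_0) \cap I(x_0)\bigr),
\]
and the second term vanishes as $\|x_0\|\to\infty$ by condition (\ref{eqn:inwards}). It therefore suffices to show that $\limsup_{\|x_0\|\to\infty} \mathbb{E}[f] < 0$.

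For this I split the momentum space into $G = \{\|p_0\| \le \|x_0\|^\delta\}$ and its complement. On $G$, condition (\ref{eqn:mean}) supplies a constant $c > 0$ such that $\|m_{L,\e}(x_0,p_0)\| \le \|x_0\| - c$ for all sufficiently large $\|x_0\|$. A first-order expansion of the Euclidean norm about $m_{L,\e}$ gives
\[
\|y\| - \|x_0\| \;\le\; -c \,+\, L\e\, \hat m^{\top} p_0 \,+\, O\!\bigl(\|x_0\|^{2\delta-1}\bigr), \qquad \hat m := m_{L,\e}/\|m_{L,\e}\|,
\]
and a Gaussian moment computation bounds $\mathbb{E}\!\bigl[e^{s(\|y\|-\|x_0\|)}\mathbbm{1}_G\bigr]$ by $e^{-sc + s^2(L\e)^2/2 + o(1)}$, which is strictly less than $1$ for $s$ small enough. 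On $G^c$, Gaussian concentration gives $\mathbb{P}(G^c) \le e^{-\|x_0\|^{2\delta}/2}$, and combined with the exponential integrability of $\|y\|$ coming from the Gaussian structure of $p_0$ together with Assumption \textbf{A1}, this contribution is negligible as $\|x_0\|\to\infty$.

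The main obstacle is the first-order norm expansion on $G$: one must argue that the direction $\hat m$, which itself depends on $p_0$ through the leapfrog trajectory, behaves closely enough like a unit vector independent of $p_0$ for $\hat m^{\top} p_0$ to be dominated in distribution by a standard normal, and that the second-order remainder $O(\|x_0\|^{2\delta - 1})$ does not spoil the bound. Once this is handled, the two hypotheses play clearly separated roles — (\ref{eqn:mean}) provides the $-c$ drift toward the origin, while (\ref{eqn:inwards}) prevents the Metropolis correction from undoing this drift by rejecting useful inward proposals — and combining them gives $\limsup PV(x_0)/V(x_0) \le 1 - \eta$ for some $\eta > 0$, completing the proof via the corollary.
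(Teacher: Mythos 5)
Your overall architecture is the same as the paper's: the Lyapunov function $V(x)=e^{s\|x\|}$, the reduction of $PV/V$ to $\int e^{s(\|y\|-\|x_0\|)}Q(x_0,dy) + Q\bigl(x_0,R(x_0)\cap I(x_0)\bigr)$ (which you rederive cleanly rather than cite from Roberts and Tweedie), disposal of the second term via (\ref{eqn:inwards}), and the split of momentum space at $\|p_0\|=\|x_0\|^\delta$. The one step where you genuinely deviate --- replacing the paper's triangle-inequality bound $\|x_{L\e}\|\le\|m_{L,\e}(x_0,p_0)\|+L\e\|p_0\|$ by a first-order expansion of the norm around $m_{L,\e}$ --- is also the step you leave open, and as stated it does not go through. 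First, the second-order remainder is $L^2\e^2\|p_0\|^2/(2\|m_{L,\e}\|)$, which on $G$ is only $O(\|x_0\|^{2\delta-1})$; since your treatment of $G^c$ forces $\delta>1/2$, this bound \emph{diverges} as $\|x_0\|\to\infty$, so the uniform estimate you invoke fails (one would have to keep $\|p_0\|^2$ inside the Gaussian expectation and compute the quadratic-exponential moment instead). Second, and more seriously, $\hat m$ depends on $p_0$ through the intermediate leapfrog points, and under the bare hypotheses of the theorem --- A1, (\ref{eqn:mean}) and (\ref{eqn:inwards}), with no growth condition on $\nabla U$ --- you cannot rule out $\hat m^\top p_0$ tracking $\|p_0\|$ itself, whose moment generating function is $1+sL\e\sqrt{2/\pi}+O(s^2)$; that first-order term in $s$ destroys the claimed $e^{-sc+O(s^2)}$ bound unless $c$ exceeds a threshold depending on $L\e$. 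Controlling $\|\hat m-\hat x_0\|$ requires trajectory estimates of the type in Lemmas \ref{lem:U_lem_1} and \ref{lem:orders_L}, i.e.\ essentially assumption \textbf{A2}, which is not available inside this theorem.

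Your treatment of $G^c$ is also too thin: the exponential integrability in \textbf{A1} is an expectation over $\mathfrak{L}$ at \emph{fixed} $(x_0,p_0)$ and says nothing about how $\|x_{L\e}\|$ grows jointly in $\|x_0\|$ and $\|p_0\|$. The paper's proof needs the explicit bound $\|x_{L\e}\|\in O(\max(\|x_0\|,\|p_0\|))$ in order to dominate the integrand on $I_\delta(x_0)^c$ by $e^{-\|p_0\|}$ and conclude that the tail contribution is $O(e^{-\|x_0\|^\delta})$; some such quantitative growth control must be supplied. In short, the skeleton is right and matches the paper, but the two analytic estimates at the crux --- the good-set moment bound and the bad-set domination --- are each left as acknowledged or unacknowledged gaps, and the good-set linearization as written would fail for the range of $\delta$ your own argument requires.
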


\begin{proof}
Take $V(x) = e^{s\|x\|}$ for some $s > 0$ and write $A(x_0) := R(x_0)^c$. Then we can write
\begin{align*}
\frac{PV(x_0)}{V(x_0)} 
&= 
\int_{A(x_0)} e^{s(\|y\| - \|x_0\|)}Q(x_0,dy) + \int_{R(x_0)} e^{s(\|y\| - \|x_0\|)}\alpha(x_0,y)Q(x_0,dy) \\
&\qquad + \int_{R(x_0)} (1-\alpha(x_0,y))Q(x_0,dy) 
\\
&= 
\int_{\mathbb{R}^d} e^{s(\|y\| - \|x_0\|)}Q(x_0,dy) + \int_{R(x_0)} \left(1 - e^{s(\|y\| - \|x_0\|)} \right) (1-\alpha(x_0,y)) Q(x_0,dy) 
\\
&\leq
\int e^{s(\|y\| - \|x_0\|)}Q(x_0,dy) + \int_{R(x_0) \cap I(x_0)} Q(x_0,dy).
\end{align*}
The last integral asymptotes to zero as $\|x_0\| \to \infty$ by (\ref{eqn:inwards}).  Writing $x_{L\e}(p_0)$ to indicate that $x_{L\e}$ depends on $p_0$, the first integral can be written
\begin{equation} \label{eqn:brokenupintegral}
\int_{I_\delta(x_0)} e^{s(\|x_{L\e}(p_0)\| - \|x_0\|)} \mu^G(dp_0) + \int_{I_\delta(x_0)^c} e^{s(\|x_{L\e}(p_0)\| - \|x_0\|)} \mu^G(dp_0).
\end{equation}
Noting that $\|x_{L\e}(p_0)\| \leq \|m_{L,\e}(x_0,p_0)\| + L\e\|p_0\|$ for large enough $\|x_0\|$ and using (\ref{eqn:mean}) above then setting $\xi(x_0):=\sup_{\|p_0\|\leq \|x_0\|^\delta}(\|m_{L,\e}(x_0,p_0)\|-\|x_0\|)$ we can write
\[
\int_{I_\delta(x_0)} e^{s(\|x_{L\e}\| - \|x_0\|)} \mu^G(dp_0) \leq e^{s\xi(x_0)} \int_{I_\delta(x_0)} e^{sL\e\|p_0\|}\mu^G(dp_0).
\]
The last integral can be bounded above by the moment generating function of a Chi-distributed random variable with $d$ degrees of freedom, and so equals $e^{\log(1+s\sqrt{2}L\e \eta(d) +o(s))} \leq e^{s\sqrt{2}L\e \eta(d) +o(s)}$.  Therefore by (\ref{eqn:mean}) the integral asymptotes to a quantity which is strictly less than one if $s>0$ is chosen to be suitably small.

It remains to show that the right-hand integral in (\ref{eqn:brokenupintegral}) becomes negligibly small as $\|x_0\| \to \infty$.  It follows from (\ref{eqn:mprop}) and (A4.4.1) that $\|x_{L\e}\| \in O(\max(\|x_0\|,\|p_0\|))$.  This means that for some constants $C \in \R$ and for $p_0 \in I_\delta(x_0)^c$ and $\|x_0\|$ large enough we can write
\begin{align*}
\exp\left( s\|x_{L\e}\| - s\|x_0\| - \frac{1}{2}\|p_0\|^2  \right) &\leq \exp\left( c\max(\|x_0\|,\|p_0\|) - \frac{1}{2}\|p_0\|^2 \right), \\
&= \exp\left( \|p_0\|\left( C\frac{\max(\|x_0\|,\|p_0\|)}{\|p_0\|} - \|p_0\|\right) \right), \\
&\leq \exp\left( \|p_0\|\left( C\|x_0\|^{1 - \delta} - \|p_0\|\right) \right)
\end{align*}
Provided $\delta > 1/2$, then for $\|x_0\|$ large enough $C\|x_0\|^{1-\delta} - \|x_0\|^\delta < -1$, meaning
\[
\exp\left( \|p_0\|\left( C\|x_0\|^{1 - \delta} - \|p_0\|\right) \right) \leq \exp\left(-\|p_0\|\right),
\]
meaning
\[
\int_{I_\delta(x_0)^c} e^{s(\|x_{L\e}\| - \|x_0\|)}\mu^G(dp_0) \leq \int_{I_\delta(x_0)^c} e^{-\|p_0\|} \mu^G(dp_0) \leq 2e^{-\|x_0\|^\delta},
\]
which becomes negligibly small as $\|x_0\| \to \infty$, as required.
\end{proof}

Theorem \ref{thm:geoerg} is a generalisation of Theorem 4.1 in \citep{roberts1996exponential} to the HMC case.  The nontriviality involved in this extension is accounting for the randomness induced into $m_{L,\e}(x_0,p_0)$ from $p_0$.

\subsubsection{Requirements for (\ref{eqn:mean}) to be satisfied.}

In the case $L=1$ (\ref{eqn:mean}) corresponds to
\begin{equation} \label{eqn:mean1step}
\|x_{0}-h\nabla U(x_{0})\|-\|x_{0}\|< -\sqrt{2}\e\eta(d)
\end{equation}
whenever $\|x_{0}\|>M$ for some $M<\infty$.  The statements in this section give three simple conditions which establish this are \emph{also} sufficient to establish (\ref{eqn:mean}) when $L\geq2$. The main result is stated below.
The crucial consequence of this is that controlling the behaviour
of `global move' updates produced by HMC when $L>1$ can be done through
only `local' knowledge, meaning analytic information at the current
point $x_{0}$.
\begin{thm}
For any $L\geq1$ (\ref{eqn:mean}) holds if the following conditions are met
\vspace{0.3cm}

(SC1.1) $\lim_{\|x_{0}\|\to\infty}\|\nabla U(x_{0})\|=\infty$
\vspace{0.3cm}

(SC1.2) $\liminf_{\|x_{0}\|\to\infty}\frac{\langle\nabla U(x_{0}),x_{0}\rangle}{\|\nabla U(x_0)\|\|x_{0}\|} >0$
\vspace{0.3cm}

(SC1.3) $\lim_{\|x_{0}\|\to\infty}\frac{\|\nabla U(x_{0})\|}{\|x_{0}\|}=0$.
\vspace{0.3cm}

\noindent In addition, if (SC1.3) is replaced by
\vspace{0.3cm}

(SC1.3b) $\limsup_{\|x_{0}\|\to\infty}\frac{\|\nabla U(x_{0})\|}{\|x_{0}\|}=S_{l},$
\vspace{0.3cm}

\noindent for some $S_{l}<\infty$, then there is an $\e_{0}\in(0,\infty)$
such that the same result holds provided $\e\in(0,\e_{0})$.
\end{thm}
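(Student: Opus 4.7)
The approach is to control $\|m_{L,\e}(x_0,p_0)\|^2 = \|x_0 - \psi_{L,\e}(x_0,p_0)\|^2$ by the expansion
\[
\|x_0 - \psi_{L,\e}(x_0,p_0)\|^2 = \|x_0\|^2 - 2\langle x_0, \psi_{L,\e}(x_0,p_0)\rangle + \|\psi_{L,\e}(x_0,p_0)\|^2,
\]
and to show that asymptotically the cross term dominates the squared-norm term. Specifically, I will aim to establish (a) $\|\psi_{L,\e}\| = o(\|x_0\|)$ under (SC1.3), and (b) $\langle x_0,\psi_{L,\e}\rangle \geq c\,\|x_0\|\,\|\psi_{L,\e}\|$ for some $c>0$ and all sufficiently large $\|x_0\|$ (uniformly in $\|p_0\|\leq\|x_0\|^\delta$). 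Combined with the lower bound $\|\psi_{L,\e}\| \geq hL\,\|\nabla U(x_0)\|/2 \cdot (1+o(1))$ from (SC1.1), taking a square root yields $\|m_{L,\e}\|-\|x_0\| \leq -c'\|\nabla U(x_0)\|(1+o(1)) < -c''<0$, which is exactly (\ref{eqn:mean}).

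The crucial intermediate step is to control the trajectory $\{x_{i\e}\}_{i=1}^{L-1}$. First I would show inductively (using (\ref{eqn:mprop}) applied with $i$ leapfrog steps in place of $L$) that under (SC1.3), for each fixed $i \in \{1,\ldots,L-1\}$,
\[
x_{i\e} = x_0 + o(\|x_0\|) \quad \text{as } \|x_0\|\to\infty,
\]
uniformly for $\|p_0\|\leq \|x_0\|^\delta$ with $\delta<1$. Indeed the $p_0$ contribution is $O(\|x_0\|^\delta)$ and each gradient contribution is $o(\|x_{j\e}\|)$ by (SC1.3); an induction on $i$ keeps $\|x_{i\e}\|/\|x_0\|\to 1$ and in fact $x_{i\e}/\|x_0\| \to x_0/\|x_0\|$ in direction. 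Once this is established, continuity of $\nabla U$ combined with (SC1.2) gives
\[
\liminf_{\|x_0\|\to\infty}\frac{\langle x_0,\nabla U(x_{i\e})\rangle}{\|x_0\|\,\|\nabla U(x_{i\e})\|}>0
\]
for every $i$, since the angle between $x_0$ and $x_{i\e}$ vanishes asymptotically while the angle between $\nabla U(x_{i\e})$ and $x_{i\e}$ remains bounded away from $\pi/2$ by (SC1.2). Summing over $i$ with positive coefficients $(L-i)$ preserves this inward alignment, giving (b).

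For the alternative condition (SC1.3b), the same scheme is followed, but now each gradient contribution to $x_{i\e}$ is only $O(\e^2 \|x_0\|)$ rather than $o(\|x_0\|)$. I would show inductively that there exist constants $A_i(\e)$, polynomial in $\e$ with $A_i(\e)\to 0$ as $\e\to 0$, such that $\|x_{i\e}-x_0\|\leq A_i(\e)\|x_0\| + O(\|x_0\|^\delta)$. Then for $\e<\e_0$ small enough, $x_{i\e}/\|x_0\|$ still lies in an arbitrarily small neighbourhood of $x_0/\|x_0\|$, and the alignment argument above goes through. The quadratic term $\|\psi_{L,\e}\|^2$ is now genuinely $O(\e^4 S_l^2 L^2 \|x_0\|^2)$, but since the linear term $\langle x_0,\psi_{L,\e}\rangle$ is of order $\e^2 L \|x_0\|\,\|\nabla U(x_0)\|$, choosing $\e_0$ so that $\e^2 L S_l^2$ is small compared to the lower bound from (SC1.2) ensures the cross term still dominates.

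The main obstacle, as expected, is the propagation of direction control along the leapfrog trajectory: we need $\nabla U$ evaluated at the intermediate $x_{i\e}$ (which depends stochastically on $p_0$) to remain reliably inward-pointing relative to $x_0$. This is where the assumption $\|p_0\|\leq\|x_0\|^\delta$ with $\delta<1$ is essential, because it guarantees the momentum-driven displacement is asymptotically negligible against $\|x_0\|$. Everything else is a careful bookkeeping of scales; the inductive identification of the leading-order behaviour of $x_{i\e}$, and the verification that replacing (SC1.3) by (SC1.3b) costs us only the freedom in $\e$, are the technically fiddly but conceptually routine pieces.
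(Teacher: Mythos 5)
Your proposal is correct and follows essentially the same route as the paper: an inductive control of the leapfrog iterates showing $\|x_{i\e}-x_0\|=o(\|x_0\|)$ uniformly over $\|p_0\|\leq\|x_0\|^\delta$, transfer of the inward-pointing condition (SC1.2) from $x_{i\e}$ to $x_0$ via the decomposition $\langle\nabla U(x_{i\e}),x_0\rangle=\langle\nabla U(x_{i\e}),x_{i\e}\rangle+\langle\nabla U(x_{i\e}),x_0-x_{i\e}\rangle$, and the $O(\e^2)$ cross term versus $O(\e^4)$ quadratic term comparison for the (SC1.3b) case. The only slight imprecision is that your lower bound $\|\psi_{L,\e}\|\gtrsim hL\|\nabla U(x_0)\|/2$ does not follow from the triangle inequality alone (the summands could cancel) but should instead be deduced from the alignment step via $\|\psi_{L,\e}\|\geq\langle x_0,\psi_{L,\e}\rangle/\|x_0\|$; this reordering is cosmetic.
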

\begin{proof}
This is re-stated as Proposition \ref{prop:SC1.3} and Proposition
\ref{prop:SC1.3b} below, which follow from the preceding Lemmas.
\end{proof}
The conditions of the result are intuitive. Condition (SC1.2) ensures
that the gradient asymptotically `points inwards', while (SC1.1) and (SC1.3) ensure that  $\|\nabla U(x_{0})\|$ grows but at an asymptotically sublinear rate. We begin with a straightforward observation.

\begin{prop}
Sufficient conditions such that
\[
\limsup_{\|x_{0}\|\to\infty}\left(\|x_{0}-h\nabla U(x_{0})\|-\|x_{0}\|\right)<0
\]
are:


(SC1.2a) $\limsup_{\|x_{0}\|\to\infty}\left(\frac{h^{2}\|\nabla U(x_{0})\|^{2}-2h\langle\nabla U(x_{0}),x_{0}\rangle}{2\|x_{0}\|}\right)< -\sqrt{2} \e \eta(d)$
\vspace{0.3cm}

(SC1.3) $\lim_{\|x_{0}\|\to\infty}\frac{\|\nabla U(x_{0})\|}{\|x_{0}\|}=0$.
\end{prop}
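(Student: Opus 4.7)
The plan is to exploit the standard conjugate-expansion identity
\[
\|x_0 - h\nabla U(x_0)\| - \|x_0\| = \frac{\|x_0 - h\nabla U(x_0)\|^2 - \|x_0\|^2}{\|x_0 - h\nabla U(x_0)\| + \|x_0\|},
\]
since expanding the squared norm in the numerator produces exactly $h^2\|\nabla U(x_0)\|^2 - 2h\langle\nabla U(x_0),x_0\rangle$, which, once divided by $2\|x_0\|$, is precisely the quantity whose limsup appears in (SC1.2a). The task then reduces to showing that the denominator behaves like $2\|x_0\|$ to leading order, so that the desired strictly negative limsup is inherited directly from the hypothesis (SC1.2a).

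For the denominator I would apply the two-sided triangle inequality
\[
\|x_0\| - h\|\nabla U(x_0)\| \;\leq\; \|x_0 - h\nabla U(x_0)\| \;\leq\; \|x_0\| + h\|\nabla U(x_0)\|,
\]
divide through by $\|x_0\|$, and invoke (SC1.3) to conclude that $\|x_0 - h\nabla U(x_0)\|/\|x_0\| \to 1$. Consequently the denominator equals $2\|x_0\|(1 + o(1))$ as $\|x_0\|\to\infty$.

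Combining these two displays yields
\[
\|x_0 - h\nabla U(x_0)\| - \|x_0\| = \frac{h^2\|\nabla U(x_0)\|^2 - 2h\langle\nabla U(x_0),x_0\rangle}{2\|x_0\|}\cdot(1 + o(1))^{-1},
\]
and the first factor has strictly negative limsup by (SC1.2a) while the second factor tends to one, so the product still has strictly negative limsup, which is the conclusion. The role of (SC1.1) is essentially one of nondegeneracy: note that (SC1.2a) alone already forces $\langle\nabla U(x_0),x_0\rangle/\|x_0\|$ to exceed a fixed positive constant for all large $\|x_0\|$, which by Cauchy--Schwarz implies $\|\nabla U(x_0)\|$ is bounded away from zero; so (SC1.1) is really included for readability and to match the three-part structure used for the multi-step generalisations later in this section.

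Anticipated obstacles are minimal --- the entire argument is an algebraic identity followed by two limit computations. The only point requiring care is confirming that the $o(1)$ correction in the denominator cannot destroy strict negativity of the limsup, which is automatic here because (SC1.2a) supplies a limsup bounded strictly away from zero rather than merely a non-positive one.
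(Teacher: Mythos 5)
Your proof is correct and follows essentially the same route as the paper: both arguments reduce $\|x_0-h\nabla U(x_0)\|-\|x_0\|$ to the quantity in (SC1.2a) by expanding the squared norm — you via the exact conjugate identity with the denominator controlled by (SC1.3), the paper via the one-sided Bernoulli bound $(1+y)^{1/2}\leq 1+y/2$ with the validity condition $y>-1$ checked by (SC1.3) — and then conclude directly from (SC1.2a). Your side remark that (SC1.2a) already implies (SC1.1) via Cauchy--Schwarz is also correct and consistent with the fact that the paper's proof never explicitly invokes (SC1.1) either.
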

\begin{proof}
First note that $\|x_{0}-h\nabla U(x_{0})\|=\sqrt{\|x_{0}\|^{2}+h^{2}\|\nabla U(x_{0})\|^{2}-2h\langle\nabla U(x_{0}),x_{0}\rangle}$.
Recall the generalised Bernoulli inequality: if $y>-1$ and $r\in[0,1]$
then $\left(1+y\right)^{r}\leq1+ry$. Setting $r:=1/2$, $a(x_{0}):=\|x_{0}\|^{2}$
and $b(x_{0}):=h^{2}\|\nabla U(x_{0})\|^{2}-2h\langle\nabla U(x_{0}),x_{0}\rangle$
then we have
\begin{align*}
a(x_{0})^{r}\left(1+\frac{b(x_{0})}{a(x_{0})}\right)^{r} &\leq a(x_{0})^{r}\left(1+\frac{b(x_{0})}{2a(x_{0})}\right) \\ 
&=\|x_{0}\|+\frac{h^{2}\|\nabla U(x_{0})\|^{2}-2h\langle\nabla U(x_{0}),x_{0}\rangle}{2\|x_{0}\|}.
\end{align*}
This will be strictly less than $\|x_{0}\|$ in the limit as $\|x_{0}\|\to\infty$
provided $b(x_{0})/a(x_{0})>-1$. Noting that
\[
\frac{b(x_{0})}{a(x_{0})}=\frac{h^{2}\|\nabla U(x_{0})\|^{2}-2h\langle\nabla U(x_{0}),x_{0}\rangle}{\|x_{0}\|^{2}}>\frac{h^{2}\|\nabla U(x_{0})\|^{2}-2h\|\nabla U(x_{0})\|\|x_{0}\|}{\|x_{0}\|^{2}},
\]
then it suffices to see that under (SC1.3) the right-hand side can
be made arbitrarily close to zero by taking $\|x_{0}\|$ large enough.
\end{proof}
We can also recover some more intuitive sufficient conditions.
\begin{crly}
\label{crly:intuitive}A more intuitive condition which implies (SC1.2a)
conditional on (SC1.3) and (SC1.1) is
\vspace{0.3cm}

(SC1.2) $\lim_{\|x_{0}\|\to\infty}\frac{\langle\nabla U(x_{0}),x_{0}\rangle}{\|\nabla U(x)\|\|x_{0}\|} >0$
\end{crly}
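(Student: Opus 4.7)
The goal is to show that (SC1.1), (SC1.2) and (SC1.3) together imply the technical condition (SC1.2a). The plan is a single algebraic rearrangement of the expression in (SC1.2a) that isolates one factor controlled by (SC1.1) and a second factor whose sign is controlled by (SC1.2) and (SC1.3).

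First I would pull $h\|\nabla U(x_0)\|/2$ out of the numerator appearing inside the $\limsup$ in (SC1.2a), writing
\[
\frac{h^{2}\|\nabla U(x_0)\|^{2}-2h\langle\nabla U(x_0),x_0\rangle}{2\|x_0\|}
= \frac{h\|\nabla U(x_0)\|}{2}\left(\frac{h\|\nabla U(x_0)\|}{\|x_0\|}-\frac{2\langle\nabla U(x_0),x_0\rangle}{\|\nabla U(x_0)\|\,\|x_0\|}\right).
\]
This identity is the heart of the argument: the bracket is precisely a difference between a quantity that vanishes under (SC1.3) and twice the cosine-like quantity that is bounded away from zero under (SC1.2).

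Next I would examine the bracket asymptotically. By (SC1.3) the first summand inside the brackets tends to $0$ as $\|x_0\|\to\infty$, while by (SC1.2) the second summand has $\limsup \leq -2c$, where $c := \liminf_{\|x_0\|\to\infty}\langle\nabla U(x_0),x_0\rangle/(\|\nabla U(x_0)\|\|x_0\|) > 0$. Hence the bracket has $\limsup$ at most $-2c$, so it is bounded above by $-c$ for all $\|x_0\|$ large enough. Meanwhile, by (SC1.1) the prefactor $h\|\nabla U(x_0)\|/2$ is bounded below by some constant $\eta > 0$ once $\|x_0\|$ is large enough.

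Finally I would combine these two bounds. The elementary observation that, for $A \geq \eta > 0$ and $B \leq -c < 0$, one has $AB \leq \eta B \leq -\eta c$ gives
\[
\limsup_{\|x_0\|\to\infty}\frac{h^{2}\|\nabla U(x_0)\|^{2}-2h\langle\nabla U(x_0),x_0\rangle}{2\|x_0\|}\leq -\eta c <0,
\]
which is exactly (SC1.2a). There is no real obstacle here; the only point requiring some care is that (SC1.1) supplies only a liminf bound on $\|\nabla U(x_0)\|$, and (SC1.2) only a liminf bound on the cosine factor, so one cannot simply take limits of products. The elementary one-sided product inequality above is the clean way to handle that subtlety without needing convergence of either factor.
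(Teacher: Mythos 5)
Your proof is correct and follows essentially the same route as the paper: factor out $h\|\nabla U(x_0)\|/2$, use (SC1.3) to kill the quadratic term and (SC1.2) to make the remaining bracket eventually negative, then use (SC1.1) to keep the prefactor bounded away from zero so the product has strictly negative $\limsup$. You are in fact slightly more careful than the paper's two-line version (which contains an apparent sign typo in its first implication and leaves the role of (SC1.1) implicit), but the underlying argument is the same.
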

\begin{proof}
Using (SC1.1) and (SC1.2) gives
\[
\lim_{\|x_{0}\|\to\infty}\frac{\langle\nabla U(x_{0}),x_{0}\rangle}{\|x_{0}\|} = \infty
\]
For large enough $\|x_0\|$ this implies
\begin{align*}
\left(\frac{h^{2}\|\nabla U(x_{0})\|}{2\|x_{0}\|}-\frac{h\langle\nabla U(x_{0}),x_{0}\rangle}{2\|\nabla U(x_{0})\|\|x_{0}\|}\right) &< -\frac{\sqrt{2}\e\eta(d)}{\|\nabla U(x_0) \|} \\
\implies \limsup_{\|x_{0}\|\to\infty}\left(\frac{h^{2}\|\nabla U(x_{0})\|^{2}-2h\langle\nabla U(x_{0}),x_{0}\rangle}{2\|x_{0}\|}\right) &< -\sqrt{2}\e \eta(d).
\end{align*}
\end{proof}
From now on we refer to (SC1.1), (SC1.2) and (SC1.3) combined as (SC1.1)-(SC1.3).
Next we show that these same conditions are sufficient
for (\ref{eqn:mean}) to hold.
\begin{lem}
\label{lem:sufficient_L>1}Under the following conditions (\ref{eqn:mean}) holds:
\vspace{0.3cm}

(i) $\liminf_{\|x_{0}\|\to\infty,\|p_{0}\|\leq\|x_{0}\|^{\delta}}\left(\frac{\|\psi_{L,\e}\|^{2}-2\langle\psi_{L,\e},x_{0}\rangle}{\|x_{0}\|^{2}}\right)>-1$
\vspace{0.3cm}

(ii) $\limsup_{\|x_{0}\|\to\infty,\|p_{0}\|\leq\|x_{0}\|^{\delta}}\left(\frac{\|\psi_{L,\e}\|^{2}-2\langle\psi_{L,\e},x_{0}\rangle}{2\|x_{0}\|}\right)<0$.
\end{lem}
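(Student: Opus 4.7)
The plan is to mimic the proof of the earlier proposition (the one introducing (SC1.2a)), but applied to the full proposal drift $\psi_{L,\e}$ rather than just the one-step drift $h\nabla U(x_0)$. The key identity is $m_{L,\e}(x_0,p_0) = x_0 - \psi_{L,\e}(x_0,p_0)$, so expanding the Euclidean norm gives
\begin{equation*}
\|m_{L,\e}(x_0,p_0)\| = \sqrt{\|x_0\|^2 + \|\psi_{L,\e}\|^2 - 2\langle \psi_{L,\e}, x_0\rangle}.
\end{equation*}
I would write this as $\sqrt{a(x_0)\bigl(1 + b(x_0,p_0)/a(x_0)\bigr)}$ with $a(x_0):=\|x_0\|^2$ and $b(x_0,p_0):=\|\psi_{L,\e}\|^2 - 2\langle \psi_{L,\e},x_0\rangle$.

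Next, I would apply the generalised Bernoulli inequality with $r=1/2$. The inequality requires $b/a > -1$, and this is exactly what condition (i) guarantees uniformly over $\|p_0\| \leq \|x_0\|^\delta$ once $\|x_0\|$ is large enough. Applying Bernoulli then yields
\begin{equation*}
\|m_{L,\e}(x_0,p_0)\| \leq \|x_0\| + \frac{\|\psi_{L,\e}\|^2 - 2\langle \psi_{L,\e},x_0\rangle}{2\|x_0\|},
\end{equation*}
so that
\begin{equation*}
\|m_{L,\e}(x_0,p_0)\| - \|x_0\| \leq \frac{\|\psi_{L,\e}\|^2 - 2\langle \psi_{L,\e},x_0\rangle}{2\|x_0\|}.
\end{equation*}
Taking $\limsup$ as $\|x_0\|\to\infty$ with $\|p_0\|\leq \|x_0\|^\delta$ and invoking condition (ii) gives the required strict negativity, which is precisely (\ref{eqn:mean}).

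There is essentially no obstacle here: the argument is a direct transcription of the one-step proof, with the quantity $h\nabla U(x_0)$ replaced throughout by $\psi_{L,\e}(x_0,p_0)$. The only thing to be careful about is the uniformity over $p_0$ in the region $\|p_0\|\leq \|x_0\|^\delta$; this is already hypothesised in conditions (i) and (ii), so nothing extra is needed. The real work, of course, is deferred to the subsequent lemmas which must show that conditions (SC1.1)--(SC1.3) (or their variants with (SC1.3b)) actually imply (i) and (ii), and that is where one must control the nonlinear dependence of $\psi_{L,\e}$ on $p_0$ through the intermediate leapfrog iterates $x_{i\e}$.
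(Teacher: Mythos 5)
Your proof is correct and is exactly the paper's argument: the paper's own proof of this lemma is the single line ``Using the generalised Bernoulli inequality as above gives the result,'' and your write-up simply spells out that application (condition (i) licensing the inequality, condition (ii) giving the strict negativity). Nothing is missing and nothing differs in substance.
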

\begin{proof}
Using the generalised Bernoulli inequality as above gives the result.
\end{proof}
Next we relate the conditions of Lemma \ref{lem:sufficient_L>1} to
criteria that only depend on the current
point $x_{0}$. The following lemmas give a starting point.
\begin{lem}
\label{lem:U_lem_1}Provided $\|p_{0}\|\leq\|x_{0}\|^{\delta}$ and
(SC1.3) holds then we have the following
\vspace{0.2cm}

(i) For any $\eta>0$ there is an $M_{\eta}<\infty$ such that whenever
$\|x_{0}\|>M_{\eta}$ it holds that $(1-\eta)\|x_{0}\|\leq\|x_{\e}\|\leq(1+\eta)\|x_{0}\|$
\vspace{0.2cm}

(ii) $\|\nabla U(x_{\e})\|=o(\|x_{0}\|)$
\vspace{0.2cm}

(iii) $\|p_{\e}\|\in o(\|x_{0}\|)$.
\end{lem}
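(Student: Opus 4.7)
The plan is to use the explicit leapfrog formulas
\begin{equation*}
x_\e = x_0 + \e p_0 - \tfrac{\e^2}{2}\nabla U(x_0), \qquad p_\e = p_0 - \tfrac{\e}{2}\bigl(\nabla U(x_0) + \nabla U(x_\e)\bigr),
\end{equation*}
and push (SC1.3) together with the hypothesis $\|p_0\|\le\|x_0\|^\delta$ with $\delta<1$ through the obvious triangle-inequality estimates. Each part essentially reduces to observing that all correction terms are $o(\|x_0\|)$.

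For (i), I would start from $\|x_\e - x_0\| \le \e\|p_0\| + \tfrac{\e^2}{2}\|\nabla U(x_0)\|$. The first term is at most $\e\|x_0\|^\delta$, which is $o(\|x_0\|)$ since $\delta<1$, and the second term is $o(\|x_0\|)$ directly by (SC1.3). Hence $\|x_\e - x_0\|/\|x_0\| \to 0$, and the reverse-triangle inequality gives $(1-\eta)\|x_0\| \le \|x_\e\| \le (1+\eta)\|x_0\|$ for $\|x_0\|$ sufficiently large (depending on $\eta$).

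For (ii), the key observation is that part (i) forces $\|x_\e\|\to\infty$ as $\|x_0\|\to\infty$, so (SC1.3) can be applied at the point $x_\e$: one obtains $\|\nabla U(x_\e)\|/\|x_\e\| \to 0$. Combining this with the upper bound $\|x_\e\|\le (1+\eta)\|x_0\|$ from (i) yields $\|\nabla U(x_\e)\| = o(\|x_0\|)$, which is exactly the claim.

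For (iii), the estimate $\|p_\e\| \le \|p_0\| + \tfrac{\e}{2}(\|\nabla U(x_0)\| + \|\nabla U(x_\e)\|)$ finishes things immediately: the first term is $O(\|x_0\|^\delta) = o(\|x_0\|)$, the second term is $o(\|x_0\|)$ by (SC1.3), and the third is $o(\|x_0\|)$ by (ii). The only delicate point, and the reason the lemma's hypotheses are framed the way they are, is ensuring $\|x_\e\|$ grows to infinity so that (SC1.3) can be invoked at $x_\e$ in step (ii); this is precisely what (i) supplies, and it is the only step in which the restriction $\delta<1$ (rather than any tail condition on $p_0$) is used in an essential way.
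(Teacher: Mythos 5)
Your proof is correct and follows essentially the same route as the paper's: triangle-inequality bounds on the leapfrog update using $\|p_0\|\leq\|x_0\|^{\delta}$ and (SC1.3) for (i), then applying (SC1.3) at $x_{\e}$ (justified by the two-sided comparison of $\|x_{\e}\|$ with $\|x_0\|$ from (i)) for (ii), and a final triangle inequality for (iii). Your explicit remark that (i) is what licenses invoking (SC1.3) at the shifted point is exactly the implicit mechanism in the paper's argument.
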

\begin{proof}
(i) Noting that $\|x_{\e}\|=\|x_{0}-h\nabla U(x_{0})+\e p_{0}\|$
gives 
\begin{align*}
(1-\delta)\|x_{0}\| &\leq\|x_{0}\|-h\|\nabla U(x_{0})\|-\e\|x_{0}\|^{\delta} \\
&\leq\|x_{0}-h\nabla U(x_{0})+\e p_{0}\| \\
&\leq\|x_{0}\|+h\|\nabla U(x_{0})\|+\e\|x_{0}\|^{\delta} \\
&\leq(1+\delta)\|x_{0}\|.
\end{align*}

(ii) We have from (i) and (SC1.3) that for any $\gamma>0$ there is
an $M_{\gamma}<\infty$ such that whenever $\|x_{0}\|>M_{\gamma}/(1-\delta$)
then $\|\nabla U(x_{\e})\|/\|x_{\e}\|<\gamma$. This implies
using (i) that $\|\nabla U(x_{\e})\|/\|x_{0}\|<\gamma(1-\delta)$,
and since $\gamma(1-\delta)$ can be made arbitrarily small then the
result follows.

(iii) $\|p_{\e}\|=\|p_{0}-\e\nabla U(x_{0})/2-\e\nabla U(x_{\e})/2\|\leq\|p_{0}\|+\e\|\nabla U(x_{0})\|/2+\e\|\nabla U(x_{\e})\|/2$,
which is $\in o(\|x_{0}\|)$ using (i) and (ii) and the fact that
$\|p_{0}\|\leq\|x_{0}\|^{\delta}$
\end{proof}
\begin{lem}
\label{lem:orders_L}Provided $\|p_{0}\|\leq\|x_{0}\|^{\delta}$ and
(SC1.3) holds then for any $L<\infty$ and each $i\in\{0,...,L-1\}$
the following hold
\vspace{0.2cm}

(i) For any $\eta>0$ there is an $M_{\eta}<\infty$ such that whenever
$\|x_{0}\|>M_{\eta}$ it holds that $(1-\eta)\|x_{0}\|\leq\|x_{i\e}\|\leq(1+\eta)\|x_{0}\|$
\vspace{0.2cm}

(ii) $\|\nabla U(x_{i\e})\|\in o(\|x_{0}\|)$
\vspace{0.2cm}

(iii) $\|p_{i\e}\|\in o(\|x_{0}\|)$
\vspace{0.1cm}

(iv) $\|\psi_{L,\e}\|\in o(\|x_{0}\|)$.
\end{lem}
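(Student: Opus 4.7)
The plan is to prove parts (i)--(iii) simultaneously by induction on $i$, with the base case $i=1$ already handled by Lemma \ref{lem:U_lem_1} (and $i=0$ being trivial under the standing hypothesis $\|p_0\|\leq\|x_0\|^\delta$). Part (iv) will then follow as an immediate corollary of (ii) applied for $i=0,\dots,L-1$.

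For the inductive step, I would assume that (i)--(iii) hold for some $i\in\{1,\dots,L-2\}$ and derive them for $i+1$ by re-running the argument of Lemma \ref{lem:U_lem_1}, now started from the pair $(x_{i\e},p_{i\e})$ rather than from $(x_0,p_0)$. The key observation is that a single leapfrog step gives
\[
x_{(i+1)\e} = x_{i\e} - h\nabla U(x_{i\e}) + \e p_{i\e},\qquad p_{(i+1)\e} = p_{i\e} - \tfrac{\e}{2}\nabla U(x_{i\e}) - \tfrac{\e}{2}\nabla U(x_{(i+1)\e}).
\]
By the inductive hypotheses (ii) and (iii), both $h\|\nabla U(x_{i\e})\|$ and $\e\|p_{i\e}\|$ are $o(\|x_0\|)$, while by (i), $\|x_{i\e}\|$ lies within an arbitrarily small multiplicative window of $\|x_0\|$ for $\|x_0\|$ large. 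A reverse triangle inequality then delivers (i) at level $i+1$, tightening the allowed $\eta$ at the cost of a larger threshold $M_\eta$. For (ii) at level $i+1$, I would combine the newly established (i) with (SC1.3): since $\|x_{(i+1)\e}\|\asymp\|x_0\|$ and $\|\nabla U(x_{(i+1)\e})\|/\|x_{(i+1)\e}\|\to 0$, the ratio $\|\nabla U(x_{(i+1)\e})\|/\|x_0\|$ tends to zero. For (iii) at level $i+1$, I would apply the triangle inequality to the expression for $p_{(i+1)\e}$ and plug in the inductive bound on $\|p_{i\e}\|$ together with the bounds on $\|\nabla U(x_{i\e})\|$ and $\|\nabla U(x_{(i+1)\e})\|$ just obtained, all of which are $o(\|x_0\|)$.

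Since $L$ is fixed and finite, this induction terminates after $L-1$ steps with uniform $o(\|x_0\|)$ control on each $\|\nabla U(x_{i\e})\|$ and $\|p_{i\e}\|$. For (iv), recall that
\[
\psi_{L,\e}(x_0,p_0) = Lh\nabla U(x_0) + \e^2\sum_{i=1}^{L-1}(L-i)\nabla U(x_{i\e}).
\]
Each of the $L$ summands is $o(\|x_0\|)$ by (ii) (the first term directly from (SC1.3), the remainder from the inductive conclusion), and the sum has a finite number of terms with constant coefficients depending only on $L$ and $\e$, so $\|\psi_{L,\e}\|\in o(\|x_0\|)$.

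The main obstacle I anticipate is purely bookkeeping: the $\eta$ in (i) must be chosen small enough at each stage of the induction so that the accumulated multiplicative distortion after $L-1$ leapfrog steps still leaves $\|x_{i\e}\|$ comparable to $\|x_0\|$; this is handled by choosing an initial $\eta_0$ such that $(1+\eta_0)^{L}\leq 1+\eta$, after which (SC1.3) can be applied uniformly along the trajectory. No delicate cancellation is needed because all error terms are genuinely sublinear in $\|x_0\|$ and $L$ is finite and independent of $x_0$.
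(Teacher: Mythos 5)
Your proposal is correct and follows essentially the same route as the paper: the paper's proof also iterates the argument of Lemma \ref{lem:U_lem_1} step by step along the leapfrog trajectory, establishing (i)--(iii) at each level before passing to the next, and deduces (iv) as an immediate consequence of (ii) since $\psi_{L,\e}$ is a finite sum of gradient terms. Your explicit handling of the accumulated multiplicative distortion via $(1+\eta_0)^L \leq 1+\eta$ is a detail the paper leaves implicit, but it is the same argument.
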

\begin{proof}
The results follow iteratively for each $i$ using the same approach
as in the previous Lemma. For the case $i=2$ then noting that $\|x_{2\e}\|=\|x_{\e}-h\nabla U(x_{\e})+\e p_{\e}\|$,
then (i) in this case follows from Lemma \ref{lem:U_lem_1}. It follows
that $\|\nabla U(x_{2\e})\|\in o(\|x_{0}\|)$ and $\|p_{2\e}\|\in o(\|x_{0}\|)$
by an analogous argument to this Lemma. Given this then it can be
shown that (i) holds for $i=3$, and then (ii) and (iii) by the same
logic, and the argument can be iterated as many times as is needed.
The last claim follows trivially from the second.
\end{proof}
\begin{prop}
\label{prop:SC1.3}Under (SC1.1)-(SC1.3) then for any $L<\infty$
the conditions of Lemma \ref{lem:sufficient_L>1} are satisfied.
\end{prop}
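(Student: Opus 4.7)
The plan is to verify the two sufficient conditions in Lemma~\ref{lem:sufficient_L>1}, with all limits below taken as $\|x_0\|\to\infty$ subject to $\|p_0\|\leq\|x_0\|^{\delta}$. Throughout I would rely on Lemma~\ref{lem:orders_L}, which under (SC1.3) already gives $\|x_{i\e}\|=(1+o(1))\|x_0\|$, $\|x_{i\e}-x_0\|\in o(\|x_0\|)$, $\|\nabla U(x_{i\e})\|\in o(\|x_0\|)$, and $\|\psi_{L,\e}\|\in o(\|x_0\|)$ for each $0\leq i\leq L-1$.

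Condition (i) is essentially free: Cauchy--Schwarz gives
\begin{equation*}
\left|\frac{\|\psi_{L,\e}\|^{2}-2\langle\psi_{L,\e},x_{0}\rangle}{\|x_{0}\|^{2}}\right|\leq\left(\frac{\|\psi_{L,\e}\|}{\|x_{0}\|}\right)^{2}+2\frac{\|\psi_{L,\e}\|}{\|x_{0}\|}\longrightarrow 0,
\end{equation*}
which is comfortably strictly greater than $-1$; only (SC1.3) is used for this part.

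Condition (ii) is the substantive step. The idea is to lower-bound $\langle\psi_{L,\e},x_{0}\rangle/\|x_{0}\|$ by a quantity that (a) dominates $\|\psi_{L,\e}\|^{2}/(2\|x_{0}\|)$ in the limit and (b) stays bounded away from zero. The key decomposition is
\begin{equation*}
\langle\nabla U(x_{i\e}),x_{0}\rangle=\langle\nabla U(x_{i\e}),x_{i\e}\rangle+\langle\nabla U(x_{i\e}),x_{0}-x_{i\e}\rangle.
\end{equation*}
Applying (SC1.2) \emph{at the point} $x_{i\e}$ (valid because $\|x_{i\e}\|\to\infty$) bounds the first term below by $c\|\nabla U(x_{i\e})\|\|x_{i\e}\|$ for some $c>0$ eventually, while Cauchy--Schwarz and $\|x_{0}-x_{i\e}\|\in o(\|x_{i\e}\|)$ absorb the second into a negligible correction. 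Setting $A:=(L/2)\|\nabla U(x_{0})\|+\sum_{i=1}^{L-1}(L-i)\|\nabla U(x_{i\e})\|$, this produces $\langle\psi_{L,\e},x_{0}\rangle\geq(c'\e^{2}/2)\,A\,\|x_{0}\|$ for $\|x_0\|$ large, while the triangle inequality gives $\|\psi_{L,\e}\|\leq\e^{2}A$. Since $A\leq(L^{2}/2)\max_{i}\|\nabla U(x_{i\e})\|\in o(\|x_{0}\|)$ by Lemma~\ref{lem:orders_L}(ii), the ratio of $\|\psi_{L,\e}\|^{2}/(2\|x_{0}\|)$ to $\langle\psi_{L,\e},x_{0}\rangle/\|x_{0}\|$ is $O(A/\|x_{0}\|)\to 0$. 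Combining with (SC1.1), which gives $\|\nabla U(x_{0})\|\geq\delta_{0}/2$ eventually, and the trivial lower bound $A\geq(L/2)\|\nabla U(x_{0})\|$, yields a strictly negative limsup for $(\|\psi_{L,\e}\|^{2}-2\langle\psi_{L,\e},x_{0}\rangle)/(2\|x_{0}\|)$, as required.

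The main obstacle is transferring the angle condition (SC1.2), which is stated only at the base point, to the intermediate gradients $\nabla U(x_{i\e})$; for $L=1$ this step is invisible, but for $L>1$ one must show the angle property survives along the numerical trajectory. Here (SC1.3) does double duty: it keeps each $x_{i\e}$ within a $(1+o(1))$-factor of $x_{0}$ in norm (so that the angle at $x_{i\e}$ is effectively the angle at $x_{0}$ up to negligible error), and it ensures $A/\|x_{0}\|\to 0$ so that the quadratic term $\|\psi_{L,\e}\|^{2}/\|x_{0}\|$ cannot swamp the linear-in-$\|\nabla U(x_{0})\|$ gain from the inner product.
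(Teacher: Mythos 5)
Your proof is correct and follows essentially the same route as the paper: condition (i) via $\|\psi_{L,\e}\|\in o(\|x_0\|)$ and Cauchy--Schwarz, and condition (ii) via the decomposition $\langle\nabla U(x_{i\e}),x_0\rangle=\langle\nabla U(x_{i\e}),x_{i\e}\rangle+\langle\nabla U(x_{i\e}),x_0-x_{i\e}\rangle$, applying (SC1.2) at the intermediate points (legitimate since $\|x_{i\e}\|\to\infty$ by Lemma~\ref{lem:orders_L}) and using (SC1.1) to keep the inner-product term bounded away from zero while the quadratic term is $o$ of it. The only cosmetic difference is that you track the weighted sum $A$ of gradient norms where the paper normalises by the maximising point $x^*$; the substance is identical.
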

\begin{proof}
First we show (i). Writing $x^{*}:=\arg\max_{i\in{0,...,L-1}}\left\{ \|\nabla U(x_{i\e})\|\right\} $,
then we have $\|\psi_{L,\e}\|\leq L\e^{2}\sum_{i=0}^{L-1}\|\nabla U(x_{i\e})\|\leq L^{2}\e^{2}\|\nabla U(x^{*})\|$,
which implies
\[
\frac{\|\psi_{L,\e}\|^{2}}{\|x_{0}\|^{2}}\leq\frac{L^{4}\e^{4}\|\nabla U(x^{*})\|^{2}}{(1-\eta)^{2}\|x^{*}\|^{2}}
\]
which can be made arbitrarily small by taking $\|x_{0}\|$ large enough
using (SC1.3). Noting that $\|\psi_{L,\e}\|/\|x_{0}\|\geq\langle\psi_{L,\e},x_{0}\rangle/\|x_{0}\|^{2}\geq-\|\psi_{L}\|/\|x_{0}\|$,
then an analogous argument can be used to show that $-2\langle\psi_{L},x_{0}\rangle/\|x_{0}\|^{2}$
will also tend to zero as $\|x_{0}\|\to\infty$. 

(ii) First note from above that $\lim_{\|x_{0}\|\to\infty}\|\psi_{L,\e}\|^{2}/\left(L^{2}\e^{2}\|\nabla U(x^{*})\|\|x_{0}\|\right)=0$.
By an analogous argument to that used in the proof of Corollary \ref{crly:intuitive},
it is clear therefore that (ii) holds if
\[
\limsup_{\|x_{0}\|\to\infty}\left(\frac{\|\psi_{L,\e}\|^{2}}{L^{2}\e^{2}\|\nabla U(x^{*})\|\|x_{0}\|}-\frac{\langle\psi_{L,\e},x_{0}\rangle}{L^{2}\e^{2}\|\nabla U(x^{*})\|\|x_{0}\|}\right)< -\sqrt{2}L\e \eta(d)
\]
which in turn holds if the statement
\[
\lim_{\|x_{0}\|\to\infty}\frac{\langle\psi_{L,\e},x_{0}\rangle}{\|x_{0}\|} = \infty
\]
does. The numerator can be decomposed as
\begin{align*}
\langle\psi_{L,\e},x_{0}\rangle &\geq \sum_{i=0}^{L-1}c_{i}\langle\nabla U(x_{i\e}),x_{0}\rangle \\
&=\sum_{i=0}^{L-1}c_{i}\langle\nabla U(x_{i\e}),x_{i\e}\rangle+\sum_{i=0}^{L-1}c_{i}\langle\nabla U(x_{i\e}),x_{0}-x_{i\e}\rangle,
\end{align*}
where each $c_{i}=(L-i)\e^{2}$ for $i\geq1$ and $c_{0}=L\e^{2}/2$
. The second of these terms is $o(\|\nabla U(x^*)\|\|x_{0}\|)$ using the Cauchy--Schwartz
inequality and Lemma \ref{lem:orders_L} (which shows that $\|x_{0}-x_{i\e}\|\in o(\|x_{0}\|)$), and so this term
vanishes if divided by $\|\nabla U(x^{*})\|\|x_{0}\|$. The first term divided by the same quantity will be strictly positive
as each term in the sum is $\geq0$ using (SC1.3) and Lemma \ref{lem:orders_L},
and at least one of them is $>0$ since it will correspond to $x^{*}$. Using (SC1.1) establishes that $\langle \psi_{L\e},x_0\rangle / \|x_0\| \to \infty$ as $\|x_0\|\to\infty$, proving the result.
\end{proof}
The condition (SC1.3) allows clarity in the proofs, but precludes
the natural boundary case of distributions with Gaussian tails. The following
proposition addresses this.
\begin{prop}
\label{prop:SC1.3b} If  (SC1.1)-(SC1.2) hold and in addition
\vspace{0.3cm}

(SC1.3b) $\limsup_{\|x_{0}\|\to\infty}\frac{\|\nabla U(x_{0})\|}{\|x_{0}\|}=S_{l}$
\vspace{0.3cm}

\noindent for some constant $S_{l}<\infty$, then there is an $\e_{0}\in(0,\infty)$
such that for any choice of $\e\in(0,\e_{0})$ the conditions
of Lemma \ref{lem:sufficient_L>1} are satisfied.
\end{prop}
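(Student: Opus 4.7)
The plan is to mirror the proof of Proposition \ref{prop:SC1.3}, replacing the vanishing behaviour provided by (SC1.3) with $\e$-controlled smallness. Under (SC1.3b) one only has $\|\nabla U(x_0)\| \leq (S_l + o(1))\|x_0\|$ asymptotically, so every $o(\|x_0\|)$ bound in the earlier proof becomes an $O(\e^2 \|x_0\|)$ bound with $\e$-dependent constants, and condition (ii) of Lemma \ref{lem:sufficient_L>1} will force a strict upper bound on $\e$.

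The first step is to rework Lemmas \ref{lem:U_lem_1} and \ref{lem:orders_L} in this regime. By induction on $i \leq L$, one shows that for any $\eta > 0$ there exist $\e_* > 0$ and a constant $C_L$ (depending on $L, S_l, \eta$) such that for all $\e \in (0, \e_*)$, $\|x_0\|$ large, and $\|p_0\| \leq \|x_0\|^\delta$, one has $\bigl| \|x_{i\e}\| - \|x_0\| \bigr| \leq C_L \e^2 \|x_0\|$, $\|\nabla U(x_{i\e})\| \leq (S_l + \eta)(1 + C_L \e^2)\|x_0\|$, and $\|p_{i\e}\| = O(\e \|x_0\|)$. The inductive step uses the leapfrog recursion directly, with (SC1.3b) replacing the sublinear decay assumed in the earlier lemmas.

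For condition (i) of Lemma \ref{lem:sufficient_L>1}, this gives $\|\psi_{L,\e}\|/\|x_0\| = O(\e^2)$, whence $\|\psi_{L,\e}\|^2/\|x_0\|^2 = O(\e^4)$ and $|\langle \psi_{L,\e}, x_0 \rangle|/\|x_0\|^2 = O(\e^2)$ by Cauchy--Schwartz, so the required lower bound $> -1$ is trivial for $\e$ small. The content is condition (ii). Setting $S := \sum_{i=0}^{L-1} c_i \|\nabla U(x_{i\e})\|$ with $c_0 = L\e^2/2$ and $c_i = (L-i)\e^2$ for $i \geq 1$, the triangle inequality gives $\|\psi_{L,\e}\|^2 \leq S^2$. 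Decomposing
\begin{equation*}
\langle \psi_{L,\e}, x_0 \rangle = \sum_i c_i \langle \nabla U(x_{i\e}), x_{i\e} \rangle + \sum_i c_i \langle \nabla U(x_{i\e}), x_0 - x_{i\e} \rangle,
\end{equation*}
the main term is lower bounded via (SC1.2) and the first step by $\kappa_2(1 - C_L \e^2)\|x_0\|\,S$, where $\kappa_2 > 0$ is the liminf in (SC1.2), while the correction is bounded in absolute value by $\max_i \|x_0 - x_{i\e}\| \cdot S = O(\e^2 \|x_0\|)\,S$. Hence
\begin{equation*}
\frac{\|\psi_{L,\e}\|^2 - 2\langle \psi_{L,\e}, x_0 \rangle}{2\|x_0\|} \leq S \left( \frac{S}{2\|x_0\|} - \kappa_2 + O(\e^2) \right).
\end{equation*}
From the first step, $S/(2\|x_0\|) \leq L^2 \e^2 (S_l + \eta)/4 + o(1)$, so choosing $\e_0 \leq \e_*$ small enough that $L^2 \e_0^2 S_l/4 < \kappa_2$ makes the bracket asymptotically negative and bounded away from $0$. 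Since $S \geq (L^2 \e^2/2)\,\kappa_1 > 0$ asymptotically by (SC1.1) (where $\kappa_1 = \liminf \|\nabla U\|$), the whole expression limsups to a strictly negative value, proving (ii).

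The main obstacle is the uniformity in the first step: the inductive bounds must hold uniformly over $\|p_0\| \leq \|x_0\|^\delta$, and the constant $C_L$ must remain finite as the induction is iterated through $L$ steps. One has to cleanly separate the contribution of $\|p_0\|$ (which is $o(\|x_0\|)$) from the gradient-driven $O(\e \|x_0\|)$ growth of $\|p_{i\e}\|$, so that the threshold $\e_0$ ends up strictly positive. Once the first step is in hand the remainder is an algebraic adaptation of Proposition \ref{prop:SC1.3}, the single substantive novelty being the explicit constraint $L^2 \e_0^2 S_l < 4 \kappa_2$.
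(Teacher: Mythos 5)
Your proof is correct and takes the same route as the paper, whose own proof is only a two-line remark that $\langle\psi_{L,\e},x_0\rangle\in O(\e^2)$ while $\|\psi_{L,\e}\|^2\in O(\e^4)$, so that a small enough $\e$ lets the inner product dominate, with the modification of the preceding lemmas left to the reader. You have carried out exactly that modification in detail, and the explicit threshold $L^2\e_0^2 S_l<4\kappa_2$ is a useful addition the paper does not state.
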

\begin{proof}
We simply note that the term $\langle\psi_{L,\e},x_{0}\rangle\in O(\e^{2})$,
while $\|\psi_{L,\e}\|^{2}\in O(\e^{4})$, so that the
proofs of the preceding Lemmas can be straightforwardly modified when
(SC1.3) is replaced by (SC1.3b) by choosing a small enough value of
$\e$ that the inner product dominates the square norm. We omit
the details of this.
\end{proof}

The sensitivity to the choice of $\e$ in this case is well known in this scenario as a potential source of numerical instabilities, and choosing $\e < 1/S_l$ is recommended to alleviate such issues (e.g. \citep{leimkuhler2004simulating}).  We conclude this subsection with the following assumption that we require for a geometrically ergodic Markov chain produced by the HMC method, which is a natural conclusion of the preceding results.

\vspace{0.35cm}

\noindent \textbf{A2} {\itshape The potential $U(x)$ satisfies either (SC1.1)-(SC1.3), or it satisfies (SC1.1)-(SC1.3b) and $\e$ is chosen to be suitably small that the conditions of Lemma \ref{lem:sufficient_L>1} are satisfied.} \label{ass:A2}


\begin{rem}
Condition (SC1.1) precludes densities for which $\|\nabla U(x)\| \to c$ for some $0<c<\infty$.  Often geometric ergodicity will still hold in this case, as we demonstrate in Corollary \ref{crly:expfam}, however a different argument is required to that presented above.
\end{rem}

It remains to consider (\ref{eqn:inwards}), which reflects the role of the acceptance rate in the HMC method.  We turn to this next.

\subsubsection{Discussion of (\ref{eqn:inwards}).}

In \citep{roberts1996exponential} the authors note that (\ref{eqn:inwards}) applied to the MALA transition
$x_{\e}=x-\e^{2}\nabla U(x)/2+\e p_{0}$ can be
viewed as the restriction that for $x_{\e}\in I(x_{0})$
\[
U(x_0)-U(x_\e)-\hat{U}_{1}\geq\frac{\e^2}{8}\left(\|\nabla U(x_{\e})\|^{2}-\|\nabla U(x_{0})\|^{2}\right),
\]
where $\hat{U}_{1}:=\left\langle x_0 - x_\e,\nabla U(x_\e)+\nabla U(x_0)\right\rangle /2$
denotes the `trapezium' estimate for the line integral $U(x_0)-U(x_\e)=\int_{x_\e}^{x_0}\nabla U(z)dz$.
We can extend this intuition to HMC and arrive at the following natural
generalisation of the same condition.
\begin{prop}
The acceptance rate for HMC will satisfy the `inwards acceptance'
property (\ref{eqn:inwards}) if whenever $x_{L\e}\in I(x_{0})$ then in the limit as $\|x_0\|\to\infty$ it holds that
\begin{equation} \label{eqn:ic2}
U(x_0)-U(x_{L\e})-\hat{U}_{L}\geq\frac{1}{2L^{2}\e^{2}}\left(\|\psi_{L,\e}^{R}\|^{2}-\|\psi_{L,\e}\|^{2}\right),
\end{equation}
where $\hat{U}_L:=\langle x_{0}-x_{L\e},\nabla U(x_{0})+\nabla U(x_{L\e})+2\sum_{i=1}^{L-1}\nabla U(x_{i\e})\rangle/(2L)$
denotes the quadrature rule estimate for the line integral $\int_{x_0}^{x_{L\e}}\nabla U(z)dz$
based on $L$ trapezia and the forward and reverse drift components
are given by
\[
\psi_{L,\e}:=\frac{L\e^{2}}{2}\nabla U(x_{0})+\e^{2}\sum_{i=1}^{L-1}(L-i)\nabla U(x_{i\e}), ~~~~ 
\psi_{L,\e}^{R}:=\frac{L\e^{2}}{2}\nabla U(x_{L\e})+\e^{2}\sum_{i=1}^{L-1}i\nabla U(x_{i\e}).
\]
\end{prop}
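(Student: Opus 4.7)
The plan is to establish an algebraic identity that recasts the HMC Metropolis--Hastings acceptance test exactly as the inequality (\ref{eqn:ic2}). Once that identification is in place, the hypothesised asymptotic validity of (\ref{eqn:ic2}) on the event $\{x_{L\e}\in I(x_0)\}$ becomes the statement that, for $\|x_0\|$ large, such proposals are accepted with probability tending to one, so the set $\{p_0 : x_{L\e}(p_0)\in R(x_0)\cap I(x_0)\}$ carries vanishing Gaussian mass and (\ref{eqn:inwards}) follows. Since $U=-\log\pi$ is a gradient field, $U(x_{L\e})-U(x_0) = \int_{x_0}^{x_{L\e}}\nabla U(z)\,dz$, so using (\ref{eqn:accept}) the event $\{\alpha(x_0,x_{L\e})=1\}$ is equivalent to
\[
\int_{x_0}^{x_{L\e}}\nabla U(z)\,dz \;\leq\; \tfrac{1}{2}\big(\|p_0\|^2-\|p_{L\e}\|^2\big).
\]

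Next I would express both momenta in terms of the displacement $x_{L\e}-x_0$. Proposition \ref{prop:mprop} immediately gives $L\e\, p_0 = (x_{L\e}-x_0) + \psi_{L,\e}$. The leapfrog scheme is an involution after momentum flip, so running it from $(x_{L\e},-p_{L\e})$ for $L$ steps recovers $(x_0,-p_0)$, and the intermediate positions of this reversed trajectory coincide with the forward ones in reverse order. Relabelling then converts each coefficient $(L-i)$ in $\psi_{L,\e}$ into $i$, yielding the companion identity $L\e\, p_{L\e} = (x_{L\e}-x_0) - \psi_{L,\e}^R$. Squaring, subtracting and cancelling the common $\|x_{L\e}-x_0\|^2$ term gives
\[
(L\e)^2\big(\|p_0\|^2-\|p_{L\e}\|^2\big) = 2\big\langle x_{L\e}-x_0,\,\psi_{L,\e}+\psi_{L,\e}^R\big\rangle + \|\psi_{L,\e}\|^2-\|\psi_{L,\e}^R\|^2.
\]
A short computation shows $\psi_{L,\e}+\psi_{L,\e}^R = \tfrac{L\e^2}{2}\big(\nabla U(x_0)+\nabla U(x_{L\e})+2\sum_{i=1}^{L-1}\nabla U(x_{i\e})\big)$, so that the inner product above equals $L^2\e^2\hat{U}_L$. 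Dividing by $2L^2\e^2$ and substituting into the acceptance criterion recovers (\ref{eqn:ic2}) exactly; that is, (\ref{eqn:ic2}) holds if and only if $\alpha(x_0,x_{L\e})=1$.

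The hypothesised asymptotic validity of (\ref{eqn:ic2}) for $x_{L\e}\in I(x_0)$ therefore implies $\mu^G\big(\{p_0 : x_{L\e}(p_0)\in R(x_0)\cap I(x_0)\}\big)\to 0$ as $\|x_0\|\to\infty$, which is precisely (\ref{eqn:inwards}). The main obstacle is the reverse-drift identity for $\psi_{L,\e}^R$: obtaining the correct coefficient pattern requires careful bookkeeping of the leapfrog symmetry (the terminal half-kicks on the momentum and its involutive property under momentum flip), together with the identification of the intermediate positions of the reversed trajectory with those of the forward one in reverse order. Once that structural identity is available, the rest is routine algebra and a standard translation of almost-sure acceptance into vanishing proposal measure.
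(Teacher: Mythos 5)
Your argument is correct and follows essentially the same route as the paper's own proof: express $L\e\,p_0 = (x_{L\e}-x_0)+\psi_{L,\e}$ and, via reversibility of the leapfrog map, $L\e\,p_{L\e} = (x_{L\e}-x_0)-\psi_{L,\e}^{R}$, expand the difference of squared norms to isolate $2\langle x_{L\e}-x_0,\psi_{L,\e}+\psi_{L,\e}^{R}\rangle$, and identify that inner product with $2L^{2}\e^{2}\hat{U}_{L}$. The only difference is that you spell out the final translation of asymptotic acceptance into vanishing Gaussian mass of $R(x_0)\cap I(x_0)$, which the paper leaves implicit.
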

\begin{proof}
We first note that we can write $p_{0}=\frac{1}{L\e}(x_{L\e}-x_{0}+\psi_{L,\e})$,
and that using reversibility of the leapfrog integrator, we can also
write $p_{L\e}=\frac{1}{L\e}(x_{L\e}-x_{0}-\psi_{L,\e}^{R})$.
The log acceptance ratio can therefore be written
\[
U(x_0)-U(x_{L\e})+\frac{1}{2L^{2}\e^{2}}\left(\|x_{L\e}-x_{0}+\psi_{L,\e}\|^{2}-\|x_{L\e}-x_{0}-\psi_{L,\e}^{R}\|^{2}\right).
\]
We require this quantity to be $\geq0$. This is equivalent to the
requirement
\[
U(x_{L\e})-U(x_0) \leq\frac{1}{2L^{2}\e^{2}}\left(\|x_{L\e}-x_{0}+\psi_{L,\e}\|^{2}-\|x_{L\e}-x_{0}-\psi_{L,\e}^{R}\|^{2}\right).
\]
We can re-write the right-hand side of the above expression as
\[
\frac{1}{2L^{2}\e^{2}}\left(2\langle x_{L\e}-x_{0},\psi_{L,\e}+\psi_{L,\e}^{R}\rangle+\|\psi_{L,\e}\|^{2}-\|\psi_{L,\e}^{R}\|^{2}\right),
\]
and then note that
\[
\psi_{L,\e}+\psi_{L,\e}^{R}=\frac{L\e^{2}}{2}\left(\nabla U(x_{0})+\nabla U(x_{L\e})+2\sum_{i=1}^{L-1}\nabla U(x_{i\e})\right).
\]
Substituting this into the inequality and simplifying gives the result. 
\end{proof}
The requirement (\ref{eqn:ic2}) can sometimes be established using convexity arguments. In the exponential family class of Corollary \ref{crly:expfam}, for example, setting $x^i:= x_{L\e} + i(x_0-x_{L\e})/L$, when $1\leq \beta<4/3$ one can show as $|x_0|\to\infty$ that $\hat{U}_L \xrightarrow{a.s.} (x_0-x_{L\e})(\nabla U(x_0) + \nabla U(x_{L\e}) + 2 \sum_{i=1}^{L-1} \nabla U(x^i))/(2L)$, the regular trapezium rule estimate for $\int_{x_{L\e}}^{x_0} \nabla U(z)dz = U(x_0) - U(x_{L\e})$. Since $\nabla U(x)$ is concave/convex for $x$ positive/negative then the trapezium rule gives an underestimate for the integral as $|x_0|\to\infty$, and hence the left-hand side of \eqref{eqn:ic2} will be positive, while it is also possible to show that the right hand side is negative in this case (using arguments given in the proof of Corollary \ref{crly:expfam}).  We omit the details of this.

There is some discussion in \citep{roberts1996exponential} of relaxations of (\ref{eqn:inwards}) to the requirement that $\alpha(x_{0},x_{\e})\geq \delta$ for some $\delta >0$ if $\|x_{\e}\|\leq\|x_{0}\|$,
which are also applicable to the HMC case and would relax the inequality
(\ref{eqn:ic2}) to some degree. In essence, the key role of the `inwards acceptance' property (\ref{eqn:inwards}) (among the
class of potentials which satisfy \textbf{A2}) is to limit the degree of
oscillation in the tails of the density $e^{-U(x)}$, which can potentially
mean that too many proposals $x_{L\e}$ for which the chosen
Lyapunov function $V(x_{L\e})/V(x_{0})<1$ are rejected to establish
a geometric bound of the form (\ref{eqn:geoerg}). Similar requirements to (\ref{eqn:inwards})
are needed for many Markov chain Monte Carlo methods which rely
on the Metropolis--Hastings construction (e.g. \citep{jarner2000geometric,roberts1996geometric,roberts1996exponential}).
The issues are discussed in some detail in the case of the Random
Walk Metropolis in \citep{jarner2000geometric}. It is possible that choosing the more natural
(but less pliable) Lyapunov function $V(x)=e^{sU(x)}$ for some $s>0$
would remove the need for (\ref{eqn:inwards}) here, owing to the ergodic nature
of the proposal kernel. We leave such explorations for future work.

The preceding discussion leads to the following assumption that we require for geometric ergodicity here.

\vspace{0.35cm}

\noindent \textbf{A3} {\itshape The chain satisfies the `inwards acceptance' property (\ref{eqn:inwards}) which can equivalently be formulated as (\ref{eqn:ic2}).} \label{ass:A3}

\vspace{0.35cm}

\noindent Assumptions \textbf{A1-A3} together are sufficient to establish a geometric bound.

\begin{proof}[Proof of Corollary \ref{crly:expfam}] \label{proof:expfam}

Part (ii) is a direct consequence of Theorem \ref{thm:negative} For part (i), we
consider three cases separately. 

First consider $\beta\in(1,2)$, meaning $1>\beta-1>0$. Since $\nabla U(x)=\alpha\beta\text{sgn}(x)|x|^{\beta-1}$
then \textbf{A2} holds. It remains to establish \textbf{A3}. We let
$x_{0}\to\infty$ but an analogous argument holds as $x_{0}\to-\infty$
by symmetry. Note that $x_{\varepsilon}=x_{0}-\varepsilon^{2}\alpha\beta\text{sgn}(x_{0})|x_{0}|^{\beta-1}/2+\varepsilon p_{0}$
will clearly satisfy $(1-\delta)x_{0}<x_{\varepsilon}<x_{0}$ with
probability reaching one in the limit for any $\delta>0$. Similarly
$(1-\delta)x_{0}<x_{2\varepsilon}=x_{\varepsilon}-\varepsilon^{2}\alpha\beta\text{sgn}(x_{0})|x_{0}|^{\beta-1}-\varepsilon^{2}\alpha\beta\text{sgn}(x_{\varepsilon})|x_{\varepsilon}|^{\beta-1}/2+\varepsilon p_{0}<x_{\varepsilon}$
in the same asymptotic regime. Iterating the argument reveals that
$(1-\delta)x_{0}<x_{L\varepsilon}<...<x_{\varepsilon}<x_{0}$ with
probability tending to one as $x_{0}\to\infty$. Hence a.s. the proposal
will be `inwards', as will each intermediate point in the trajectory.
To establish geometric convergence we must show that these inwards
proposals are accepted with probability tending to one as $x_{0}\to\infty$.
A Taylor series expansion of the difference in Hamiltonians for large enough $x_{0}$
gives
\[
H(x_{0},p_{0})-H(x_{L\varepsilon},p_{L\varepsilon})=\frac{L^{2}\varepsilon^{4}}{8}(\alpha\beta)^{3}(\beta-1)x_{0}^{3\beta-4}+o(x^{3\beta-4}).
\]

Since the leading order term is strictly positive then the result
is proved. A detailed derivation is provided in the supplementary
material \citep{supplement}.

In the case $\beta=1$ then as $x_{0}\to\infty$ the proposal in fact
a.s. becomes $x_{L\varepsilon}=x_{0}-L\varepsilon^{2}/2+L\varepsilon p_{0}$,
which resembles that of a random walk with inwards drift. Here the
acceptance rate a.s. becomes one as the leapfrog integrator becomes
exact provided the zero boundary is not crossed, and hence the scheme
is geometrically ergodic following Theorem 16.0.1 and the argument
of Section 16.1.3 in Chapter 16 of \cite{meyn2012markov}. Again
a similar argument holds as $x_{0}\to-\infty$.

In the case $\beta=2$ following Example 3.5 in \cite{beskos2013optimal},
setting $\theta:=\arccos(1-\alpha\varepsilon^{2})$ the proposal becomes
$x_{L\varepsilon}=\cos(\theta L)x_{0}+\sin(\theta L)p_{0}/\sqrt{2\alpha(1-\alpha\varepsilon^{2}/2)}$,which
will be inwards provided $|\cos(\theta L)|<1$, which will be true
for suitably small $\varepsilon$. Similarly provided $p_{0}=o(x_{0})$
then the difference in Hamiltonian values will be
\[
H(x_{0},p_{0})-H(x_{L\varepsilon},p_{L\varepsilon})=\left(1-\cos^{2}(\theta L)-2\alpha\left(1-\alpha\frac{\varepsilon^{2}}{2}\right)\sin^{2}(\theta L)\right)x_{0}^{2}+o(x_{0}^{2}).
\]
The $x_{0}^{2}$ coefficient will be positive provided $(1+2\alpha-\alpha^{2}\varepsilon^{2})\sin^{2}(\theta L)>0$
which will also be true for small enough $\varepsilon$, hence as
$x_{0}\to\pm\infty$ \textbf{A3} holds and since \textbf{A2} does
also then the result is proven.
\end{proof}

\subsection{Necessary conditions for geometric ergodicity}

Next we highlight the importance of the growth assumptions we have made on the potential, by showing two general scenarios in which HMC will \emph{not} produce geometrically ergodic Markov chains.

\subsubsection{Light tails}
We begin with the case where the gradient term may grow at a faster than linear rate, meaning that the resulting system of equations (\ref{eqn:hamilton}) is `stiff', in the sense that the derivatives can change very rapidly over small time scales, which can pose a challenge to explicit numerical integrators.  We show in Theorem \ref{thm:light} that in this scenario a Markov chain produced by the HMC method can exhibit undesirable behaviour.

\begin{thm} \label{thm:light}
If it holds that
\begin{equation} \label{eqn:explode}
\lim_{\|x\| \to \infty}\frac{\|\nabla U(x)\|}{\|x\|} = \infty,
\end{equation}
and that there is a fixed $C <\infty$ such that whenever $\|y\|\geq 2\|x\|\geq C$ then
\begin{equation} \label{eqn:oscillations}
\|\nabla U(y)\| \geq 3\|\nabla U(x)\|,
\end{equation} 
and it also holds that
\begin{equation} \label{eqn:oscillation2}
\lim_{\|x\|\to\infty, \|y\| \geq 2^L\|x\|} \left( U(x) - U(y) - \frac{1}{2}\|x\|^2 \right) = -\infty,
\end{equation}
then the Hamiltonian Monte Carlo method with fixed integration time $T = L\e$ does not produce a geometrically ergodic Markov chain for any choice $T>0$.
\end{thm}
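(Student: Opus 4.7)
The plan is to invoke Proposition \ref{prop:supr}: it suffices to exhibit a sequence $\|x_0\| \to \infty$ along which the HMC rejection probability $r(x_0) \to 1$, for then $\esssup_x r(x) = 1$ and geometric ergodicity fails. The mechanism behind this is that super-linear growth of $\|\nabla U\|$ destabilises the leapfrog integrator: for a typical momentum draw the trajectory escapes to infinity geometrically in the number of steps, and the resulting energy imbalance is so large that the Metropolis correction rejects almost surely.

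First I would fix $L$ and $\e$, and work on the ``typical momentum'' set $B(x_0):=\{p_0:\|p_0\|\leq \|x_0\|\}$, whose standard Gaussian mass $\mu^G(B(x_0))\to 1$ as $\|x_0\|\to\infty$. On this set I would establish by induction on $i=0,1,\dots,L$ the escape bound $\|x_{i\e}\|\geq 2^i\|x_0\|$, valid once $\|x_0\|$ is large enough that (\ref{eqn:explode}) forces $(\e^2/2)\|\nabla U(x_0)\|$ to dominate $(1+\e)\|x_0\|$ with arbitrary margin. The base case uses $x_\e = x_0 + \e p_0 - (\e^2/2)\nabla U(x_0)$ together with (\ref{eqn:explode}). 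For the inductive step I would use the St\"ormer--Verlet second-order recurrence $x_{(i+1)\e} = 2x_{i\e} - x_{(i-1)\e} - \e^2 \nabla U(x_{i\e})$: once $\|x_{i\e}\|\geq 2\|x_{(i-1)\e}\|\geq C$, hypothesis (\ref{eqn:oscillations}) gives $\|\nabla U(x_{i\e})\|\geq 3\|\nabla U(x_{(i-1)\e})\|$, so the gradient term $\e^2\|\nabla U(x_{i\e})\|$ becomes the dominant scale on the right-hand side and forces $\|x_{(i+1)\e}\|\geq 2\|x_{i\e}\|$.

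Next I would translate this escape into a vanishing acceptance probability. By Proposition \ref{prop:mprop}, and discarding the non-positive term $-\tfrac{1}{2}\|p_{L\e}\|^2$, we have on $B(x_0)$
\[
\log \alpha(x_0,x_{L\e}) \;\leq\; U(x_0) - U(x_{L\e}) + \tfrac{1}{2}\|p_0\|^2 \;\leq\; U(x_0) - U(x_{L\e}) + \tfrac{1}{2}\|x_0\|^2.
\]
By Stage~1, $\|x_{L\e}\|\geq 2^L\|x_0\|$, so hypothesis (\ref{eqn:oscillation2}) drives the right-hand side to $-\infty$ uniformly over $p_0\in B(x_0)$. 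Hence $\alpha(x_0,x_{L\e})\to 0$ uniformly on $B(x_0)$, and
\[
r(x_0) \;\geq\; \int_{B(x_0)}\bigl(1-\alpha(x_0,x_{L\e})\bigr)\,\mu^G(dp_0) \;\longrightarrow\; 1,
\]
since $\mu^G(B(x_0))\to 1$. Proposition \ref{prop:supr} then completes the proof.

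The main obstacle will be the induction in Stage~1: one must ensure that the momentum increments, which accumulate the same gradients and can in principle point back toward the origin, do not cancel the geometric escape of the positions. The super-linear bound (\ref{eqn:explode}) combined with the tripling property (\ref{eqn:oscillations}) provides generous slack, since at each step the new gradient term exceeds all previously accumulated position and momentum contributions by a super-exponential margin; tracking $\e^2\|\nabla U(x_{i\e})\|$ as the dominant scale in the $i$-th recurrence suffices. A minor technical point is that the induction must be started at some $\|x_0\|$ large enough to satisfy all finitely many threshold conditions simultaneously, which is harmless because $L$ is fixed.
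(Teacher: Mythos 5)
Your overall strategy --- establish a geometric escape bound $\|x_{i\e}\|\geq 2^i\|x_0\|$ on a high-probability momentum set, deduce that the acceptance probability vanishes, and invoke Proposition \ref{prop:supr} --- is exactly the paper's route, and your Stage~1 induction (via the second-order St\"ormer--Verlet recurrence and the tripling property (\ref{eqn:oscillations})) is a sound variant of the paper's Lemmas \ref{lem:xelower} and \ref{lem:xlower}. However, there is a genuine gap in Stage~2, precisely at the words ``discarding the non-positive term $-\tfrac{1}{2}\|p_{L\e}\|^2$''. After discarding it you are left with the upper bound $U(x_0)-U(x_{L\e})+\tfrac12\|x_0\|^2$, whereas hypothesis (\ref{eqn:oscillation2}) only controls $U(x_0)-U(x_{L\e})-\tfrac12\|x_0\|^2$; the two differ by $\|x_0\|^2\to\infty$, so the hypothesis does not drive your bound to $-\infty$. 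Indeed (\ref{eqn:oscillation2}) is deliberately weak: it permits, say, $U(y)=U(x)-\tfrac14\|x\|^2$ for $\|y\|\geq 2^L\|x\|$, in which case your right-hand side equals $\tfrac34\|x_0\|^2\to+\infty$ and gives no information about $\alpha$.

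The term you discarded is in fact the engine of the argument. In the paper's Lemma \ref{lem:arate_0} one uses (\ref{eqn:mprop2}) together with (\ref{eqn:oscillations}) to show that the final momentum satisfies $\|p_{L\e}\|\geq \tfrac{\e\gamma_L}{2}\|\nabla U(x_{L\e})\|-\|p_0\|$ for a fixed $\gamma_L>0$ (the accumulated intermediate gradients form a geometrically decaying series dominated by the last one), and then the super-linear growth (\ref{eqn:explode}) applied at $x_{L\e}$ with $\|x_{L\e}\|\geq 2^L\|x_0\|$ yields $\tfrac12\|p_0\|^2-\tfrac12\|p_{L\e}\|^2\leq -\tfrac12\|x_0\|^2$ for $\|x_0\|$ large. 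It is exactly this large negative kinetic contribution that converts the log-acceptance ratio into $U(x_0)-U(x_{L\e})-\tfrac12\|x_0\|^2$ plus lower-order terms, which is the quantity hypothesis (\ref{eqn:oscillation2}) is calibrated to send to $-\infty$. Your proof needs this additional lemma; with it, the rest of your argument (including the choice $\|p_0\|\leq\|x_0\|$ for the typical set, which is harmless here since the gradient at $x_{L\e}$ dominates any polynomial in $\|x_0\|$) goes through.
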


\begin{proof} Lemmas \ref{lem:xelower} and \ref{lem:xlower} below establish that in this case $\|x_{L\e}\|\geq 2^L\|x_0\|$, and Lemma \ref{lem:arate_0} shows that this will result in $\alpha(x_0,x_{L\e})$ tending to zero as $\|x_0\|\to\infty$ provided $\|p_0\|\leq \|x_0\|^\delta$ for some $\delta<1$, allowing Proposition \ref{prop:supr} to be envoked.  To conclude we simply note that $\mathbb{P}(\|p_0\|\leq \|x_0\|^\delta) \to 1$ as $\|x_0\| \to \infty$, establishing the result.
\end{proof}

The conditions (\ref{eqn:oscillations}) and (\ref{eqn:oscillation2}) limit the amount that the potential can oscillate as it approaches $\infty$, and are introduced to prevent tail oscillations in gradient from making the behaviour of the method too unpredictable to analyse sensibly.  They are very lenient and should be satisfied for the vast majority of statistical models of interest for which (\ref{eqn:explode}) holds.  Below we establish several intermediate results, the first two of which relate to the values of $\|x_{L\e}\|$ when $\|x_0\|$ is large in this scenario.

\begin{lem} \label{lem:xelower}
If (\ref{eqn:explode}) holds then there exists an $\eta<\infty$ such that for all $\|x_0\|>\eta$ and any $\|p_0\|\leq \|x_0\|^\delta$ for some $\delta<1$, it holds that $\|x_\e\|>2\|x_0\|$.
\end{lem}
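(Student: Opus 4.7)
The plan is to apply the reverse triangle inequality to the explicit one-step leapfrog update and exploit the superlinear growth of $\|\nabla U\|$ from (\ref{eqn:explode}).

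Recall that a single leapfrog step gives
\[
x_{\e} = x_0 - \frac{\e^2}{2}\nabla U(x_0) + \e p_0.
\]
Writing this as $x_\e = -\tfrac{\e^2}{2}\nabla U(x_0) + (x_0 + \e p_0)$ and applying the reverse triangle inequality, I would bound
\[
\|x_\e\| \geq \frac{\e^2}{2}\|\nabla U(x_0)\| - \|x_0\| - \e\|p_0\|.
\]
Using the hypothesis $\|p_0\| \leq \|x_0\|^\delta$ with $\delta<1$ gives
\[
\|x_\e\| \geq \frac{\e^2}{2}\|\nabla U(x_0)\| - \|x_0\| - \e\|x_0\|^\delta.
\]

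The main observation is then that by (\ref{eqn:explode}), for every constant $K<\infty$ there exists $\eta_K<\infty$ such that $\|\nabla U(x_0)\| \geq K\|x_0\|$ whenever $\|x_0\|>\eta_K$. Choosing $K = 8/\e^2$ (any constant strictly larger than $6/\e^2$ suffices) yields
\[
\|x_\e\| \geq 4\|x_0\| - \|x_0\| - \e\|x_0\|^\delta = 3\|x_0\| - \e\|x_0\|^\delta.
\]
Since $\delta<1$, the term $\e\|x_0\|^\delta$ is $o(\|x_0\|)$, so for all $\|x_0\|$ larger than some threshold $\eta\geq \eta_K$ we have $\e\|x_0\|^\delta < \|x_0\|$, giving $\|x_\e\| > 2\|x_0\|$ as required.

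There is no real obstacle here; the only subtlety is that the bound on $\|p_0\|$ is only polynomial-of-lower-order in $\|x_0\|$, but that is precisely why the $\e p_0$ contribution is dominated by the gradient term in the limit. I would not need the oscillation conditions (\ref{eqn:oscillations}) or (\ref{eqn:oscillation2}) for this lemma—those are used later in Lemma \ref{lem:xlower} to iterate from $\|x_\e\|>2\|x_0\|$ to $\|x_{L\e}\|>2^L\|x_0\|$ across further leapfrog steps.
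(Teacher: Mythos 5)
Your proposal is correct and follows essentially the same route as the paper: the reverse triangle inequality applied to the single-step leapfrog update, followed by using (\ref{eqn:explode}) to make the gradient term dominate both $\|x_0\|$ and the $\e\|x_0\|^\delta$ momentum contribution. Your closing remark is also accurate — the oscillation conditions are indeed only needed for the iteration in Lemma \ref{lem:xlower} and the acceptance-rate bound, not here.
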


\begin{proof} Taking norms after a single leapfrog step gives
\begin{align*}
\|x_\e\| = \left\| x_0 - \frac{\e^2}{2}\nabla U(x_0) + \e p_0 \right\| &\geq \frac{\e^2}{2}\|\nabla U(x_0)\| - \|x_0\| - \e\|p_0\|.
\end{align*}
Dividing by $\|x_0\|$ gives
\begin{equation}
\frac{\|x_\e\|}{\|x_0\|} \geq \frac{\e^2}{2}\frac{\|\nabla U(x_0)\|}{\|x_0\|} - 1 - \e\frac{\|p_0\|}{\|x_0\|}.
\end{equation}
Using (\ref{eqn:explode}), we can choose an $x_0$ such that the first term on the right-hand side is larger than $6/\e^2$, and the last term can be made negligibly small as $\|p_0\| \leq \|x_0\|^\delta$ for some $\delta < 1$, which establishes the result.
\end{proof}

\begin{lem} \label{lem:xlower}
If (\ref{eqn:explode}) holds then there exists an $\eta<\infty$ such that for all $\|x_0\|>\eta$ and any $\|p_0\| \leq \|x_0\|^\delta$ for some $\delta < 1$, it holds that  $\|x_{L\e}\| \geq 2^L\|x_0\|$.
\end{lem}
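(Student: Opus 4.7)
The plan is to pass from the leapfrog formulation to the equivalent position-only \emph{Verlet recursion}
\[
x_{(i+1)\e} = 2 x_{i\e} - x_{(i-1)\e} - \e^2 \nabla U(x_{i\e}), \qquad i \geq 1,
\]
which is obtained by eliminating $p$ from the three sub-steps of one leapfrog update (using $p_{i\e} = p_{(i-1)\e} - (\e/2)(\nabla U(x_{(i-1)\e}) + \nabla U(x_{i\e}))$ in $x_{(i+1)\e} - x_{i\e} = \e p_{i\e} - (\e^2/2)\nabla U(x_{i\e})$), together with the already-known base relation $x_\e = x_0 - (\e^2/2)\nabla U(x_0) + \e p_0$. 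The key virtue of this reformulation is that the momentum — and hence the dependence on $p_0$ — is absorbed entirely into the initial pair $(x_0, x_\e)$; all subsequent dynamics are deterministic functions of those two vectors.

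Given this, I would induct on $i$ to prove the stronger statement $\|x_{(i+1)\e}\| \geq 2\|x_{i\e}\|$ for each $0 \leq i \leq L-1$, which chains to $\|x_{L\e}\| \geq 2^L\|x_0\|$. The base case $\|x_\e\| \geq 2\|x_0\|$ is precisely Lemma \ref{lem:xelower}, which is where the hypothesis $\|p_0\| \leq \|x_0\|^\delta$ is consumed. For the inductive step, assume $\|x_{(i-1)\e}\| \leq \|x_{i\e}\|/2$; the triangle inequality applied to the Verlet recursion then gives
\[
\|x_{(i+1)\e}\| \geq \e^2 \|\nabla U(x_{i\e})\| - 2\|x_{i\e}\| - \|x_{(i-1)\e}\| \geq \e^2 \|\nabla U(x_{i\e})\| - \tfrac{5}{2}\|x_{i\e}\|.
\]
To close the induction it suffices that $\e^2 \|\nabla U(x_{i\e})\| \geq \tfrac{9}{2}\|x_{i\e}\|$, i.e.\ $\|\nabla U(x_{i\e})\|/\|x_{i\e}\| \geq 9/(2\e^2)$. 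Since $\|x_{i\e}\| \geq 2^i\|x_0\| \geq \|x_0\|$ by the inductive hypothesis, this follows from (\ref{eqn:explode}) by choosing the threshold $\eta$ large enough that $\|\nabla U(x)\|/\|x\| > 9/(2\e^2)$ for all $\|x\| \geq \eta$ (enlarging $\eta$ beyond the value supplied by Lemma \ref{lem:xelower} if necessary).

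The main structural point is that after a single leapfrog step the stochastic contribution of $p_0$ has already been bounded by Lemma \ref{lem:xelower}, so the remaining $L-1$ iterations can be handled entirely in Verlet form where the gradient drift $\e^2\nabla U(x_{i\e})$ grossly dominates both $\|x_{i\e}\|$ and $\|x_{(i-1)\e}\|$ — no appeal to (\ref{eqn:oscillations}) or (\ref{eqn:oscillation2}) is required for this lemma, only (\ref{eqn:explode}). The one delicate point worth care is the derivation of the Verlet recursion (which must be checked to start cleanly at $i = 1$, using that $x_\e - x_0 = \e p_0 - (\e^2/2)\nabla U(x_0)$); everything after that is routine.
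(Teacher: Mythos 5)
Your proof is correct, and it takes a genuinely different route from the paper's. The paper iterates directly on the momentum-accumulated representation, writing $x_{(i+1)\e} = x_{i\e} + \e p_0 - \frac{\e^2}{2}\nabla U(x_0) - \e^2\sum_{j=1}^{i}\nabla U(x_{j\e})$ and bounding the ratio $\|x_{(i+1)\e}\|/\|x_{i\e}\|$ term by term; this leaves lagged gradient terms of the form $\|\nabla U(x_{j\e})\|/\|x_{i\e}\|$ for $j<i$, which the paper controls by invoking the oscillation bound (\ref{eqn:oscillations}) (via $\|\nabla U(x_0)\|/(2\|\nabla U(x_\e)\|)\leq 1/6$ and its analogues) in addition to (\ref{eqn:explode}). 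Your Verlet reformulation $x_{(i+1)\e} = 2x_{i\e} - x_{(i-1)\e} - \e^2\nabla U(x_{i\e})$ eliminates those lagged gradients entirely: only the current gradient appears, with a coefficient that dominates $2\|x_{i\e}\| + \|x_{(i-1)\e}\| \leq \frac{5}{2}\|x_{i\e}\|$ as soon as $\|\nabla U(x_{i\e})\|/\|x_{i\e}\| \geq 9/(2\e^2)$, which (\ref{eqn:explode}) supplies uniformly above a single threshold since $\|x_{i\e}\|\geq\|x_0\|>\eta$ along the induction. The payoff is twofold: your constants are explicit rather than asymptotic, and your argument establishes the lemma under exactly its stated hypothesis (\ref{eqn:explode}) alone, whereas the paper's proof of the same lemma quietly consumes (\ref{eqn:oscillations}) as well (harmless for Theorem \ref{thm:light}, which assumes it anyway, but your version matches the lemma statement more faithfully). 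The one point to keep explicit in a final write-up is the verification of the recursion's starting data at $i=1$, i.e.\ that $x_\e - x_0 = \e p_0 - \frac{\e^2}{2}\nabla U(x_0)$, which you have correctly flagged.
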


\begin{proof}
We proceed iteratively.  First note that
\[
x_{2\e} = x_\e - \e^2\nabla U(x_\e) - \frac{\e^2}{2}\nabla U(x_0) + \e p_0.
\]
Using this, we have
\begin{align*}
\frac{\|x_{2\e}\|}{\|x_\e\|} \geq \frac{\e^2}{2}\frac{\|\nabla U(x_\e)\|}{\|x_\e\|} - \frac{\e^2}{2}\frac{\|\nabla U(x_0)\|}{\|x_\e \|} - 1 - \e\frac{\|p_0\|}{\|x_\e\|}.
\end{align*}
Showing the right-hand side is $\geq 2$ amounts to upper bounding the middle term, or equivalently lower bounding its reciprocal.  We have
\begin{equation} \label{eqn:inter21}
\frac{2 \|x_\e\|}{\e^2 \|\nabla U(x_0)\|} \geq \frac{ \e^2\|\nabla U(x_0)\|-2\|x_0\| - 2\e\|p_0\|}{\e^2\|\nabla U(x_0)\|} \geq 1 - \delta
\end{equation}
for some $\delta>0$ which can be made arbitrarily small by choosing $\|x_0\|$ large enough.

Next we have
\[
\frac{\|x_{3\e}\|}{\|x_{2\e}\|} \geq \frac{\e^2}{2}\frac{\|\nabla U(x_{2\e})\|}{\|x_{2\e}\|} - \frac{\e^2}{2}\frac{\|\nabla U(x_0)\|}{\|x_{2\e} \|} - \e^2\frac{\|\nabla U(x_\e)\|}{\|x_{2\e}\|} - 1 - \e\frac{\|p_0\|}{\|x_{2\e}\|}.
\]
Here the right-hand side will be $\geq 2$ provided the middle two terms can be bounded above. For the first we lower bound the reciprocal, using (\ref{eqn:inter21}) gives
\[
\frac{2}{\e^2}\frac{\|x_{2\e} \|}{\|\nabla U(x_0)\|} \geq \frac{2}{\e^2} \frac{ \|x_\e\|}{\|\nabla U(x_0)\|} \geq 1-\delta.
\]
For the second we have
\[
\frac{\|x_{2\e}\|}{\e^2\|\nabla U(x_\e)\|} \geq 1 - \frac{\|x_\e\|}{\e^2 \|\nabla U(x_\e)\|} - \frac{\|\nabla U(x_0)\|}{2\|\nabla U(x_\e)\|} - \frac{\|p_0\| }{\e \|\nabla U(x_\e)\|}.
\]
The second and last terms on the right hand side can be made arbitrarily small by choosing $\|x_0\|$ large enough.  Envoking (\ref{eqn:oscillations}) gives
\[
\frac{\|\nabla U(x_0)\|}{2\|\nabla U(x_\e)\|} \leq \frac{1}{6},
\]
which therefore shows that $\|x_{3\e}\|\geq 2\|x_{2\e}\|$.  An entirely analogous argument can be used to show that $\|x_{i\e}\|\geq 2\|x_{(i-1)\e}\|$ for any fixed $i$, establishing the result.
\end{proof}

The next result shows that as a result of the fact that $\|x_{L\e}\|\geq 2^L\|x_0\|$ when $\|x_0\|$ is large enough, then the acceptance rate will approach $0$ in the limit as $\|x_0\| \to \infty$.

\begin{lem} \label{lem:arate_0}
If (\ref{eqn:explode}), (\ref{eqn:oscillations}) and (\ref{eqn:oscillation2}) hold then for any $\delta<1$ it holds that
\[
\lim_{\|x_0\|\to\infty,\|p_0\|\leq\|x_0\|^\delta} \alpha(x_0,x_{L\e}) = 0.
\]
\end{lem}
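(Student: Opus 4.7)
The plan is to show the log acceptance ratio
\[
\log r(x_0,x_{L\e}) = U(x_0) - U(x_{L\e}) + \tfrac{1}{2}\|p_0\|^2 - \tfrac{1}{2}\|p_{L\e}\|^2
\]
tends to $-\infty$ as $\|x_0\|\to\infty$ (uniformly over $\|p_0\|\leq\|x_0\|^\delta$), which immediately gives $\alpha(x_0,x_{L\e})\to 0$. The three terms $U(x_0)-U(x_{L\e})$, $\tfrac12\|p_0\|^2$, $-\tfrac12\|p_{L\e}\|^2$ are controlled by (\ref{eqn:oscillation2}), the constraint on $p_0$, and (\ref{eqn:explode})+(\ref{eqn:oscillations}) respectively.

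First, since $\|x_{L\e}\|\geq 2^L\|x_0\|$ by Lemma \ref{lem:xlower}, assumption (\ref{eqn:oscillation2}) gives $U(x_0)-U(x_{L\e}) = \tfrac12\|x_0\|^2 - D(x_0)$, with $D(x_0)\to\infty$. The $\|p_0\|$ restriction implies $\tfrac12\|p_0\|^2\leq \tfrac12\|x_0\|^{2\delta}=o(\|x_0\|^2)$ since $\delta<1$. Combining,
\[
\log r(x_0,x_{L\e}) \leq \tfrac{1}{2}\|x_0\|^2 + o(\|x_0\|^2) - D(x_0) - \tfrac{1}{2}\|p_{L\e}\|^2,
\]
so it suffices to prove $\tfrac12\|p_{L\e}\|^2$ eventually dominates $\tfrac12\|x_0\|^2$.

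The main task is thus a lower bound on $\|p_{L\e}\|$. Using (\ref{eqn:mprop2}) and the reverse triangle inequality,
\[
\|p_{L\e}\| \geq \tfrac{\e}{2}\|\nabla U(x_{L\e})\| - \|p_0\| - \tfrac{\e}{2}\|\nabla U(x_0)\| - \e\sum_{i=1}^{L-1}\|\nabla U(x_{i\e})\|.
\]
The key observation is that Lemma \ref{lem:xlower} yields $\|x_{(i+1)\e}\|\geq 2\|x_{i\e}\|$ for all $i$, so assumption (\ref{eqn:oscillations}) applies iteratively to give $\|\nabla U(x_{i\e})\| \leq 3^{-(L-i)}\|\nabla U(x_{L\e})\|$. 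Summing the resulting geometric series shows
\[
\tfrac{\e}{2}\|\nabla U(x_0)\| + \e\sum_{i=1}^{L-1}\|\nabla U(x_{i\e})\| \leq \tfrac{\e}{2}\|\nabla U(x_{L\e})\|(1-3^{-L}),
\]
leaving $\|p_{L\e}\|\geq \tfrac{\e}{2\cdot 3^{L}}\|\nabla U(x_{L\e})\| - \|p_0\|$. Applying (\ref{eqn:explode}) to $x_{L\e}$ (whose norm blows up with $\|x_0\|$) then makes the gradient term dominate $\|p_0\|\leq\|x_0\|^\delta$, giving $\|p_{L\e}\|\geq c\|\nabla U(x_{L\e})\|$ for some $c>0$ once $\|x_0\|$ is large.

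Finally, invoking (\ref{eqn:explode}) once more gives $\|\nabla U(x_{L\e})\|^2/\|x_{L\e}\|^2\to\infty$, and since $\|x_{L\e}\|^2\geq 4^L\|x_0\|^2$, we conclude $\tfrac12\|p_{L\e}\|^2$ grows strictly faster than $\|x_0\|^2$. Substituting back, the $-\tfrac12\|p_{L\e}\|^2$ contribution swallows the $\tfrac12\|x_0\|^2$ term and all lower-order pieces, so $\log r(x_0,x_{L\e})\to -\infty$ uniformly over $\|p_0\|\leq\|x_0\|^\delta$, finishing the proof. The main obstacle is the telescoped use of (\ref{eqn:oscillations}) to argue that the gradient at the terminal point $x_{L\e}$ dwarfs the sum of gradients at the intermediate leapfrog points, which is what turns the a priori ambiguous magnitude of $\|p_{L\e}\|$ into a reliable lower bound.
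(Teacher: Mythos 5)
Your proof is correct and follows essentially the same route as the paper: the same decomposition of the log-acceptance ratio, the same reverse-triangle-inequality lower bound on $\|p_{L\e}\|$ with the telescoped use of (\ref{eqn:oscillations}) to reduce the intermediate gradients to a geometric series in $\|\nabla U(x_{L\e})\|$ (yielding the same constant $3^{-L}$), and the same final appeals to (\ref{eqn:explode}), Lemma \ref{lem:xlower} and (\ref{eqn:oscillation2}). The only cosmetic difference is that the paper squares the bound on $\|p_{L\e}\|$ and manipulates $\|p_0\|^2-\|p_{L\e}\|^2$ directly to get $\leq -\|x_0\|^2$, whereas you bound the two momentum terms separately; the substance is identical.
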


\begin{proof}
Recall that
\begin{align*}
\alpha(x_0,x_{L\e}) = 1 \wedge \exp\left( U(x_0) - U(x_{L\e}) + \frac{1}{2}\|p_0\|^2 - \frac{1}{2}\|p_{L\e}\|^2 \right).
\end{align*}
Note that
\begin{align*}
\|p_{L\e}\| &\geq \frac{\e}{2}\|\nabla U(x_{L\e})\| - \e\sum_{i=1}^{L-1} \|\nabla U(x_{i\e})\| - \frac{\e}{2}\|\nabla U(x_0)\| - \|p_0\|, \\
&\geq \frac{\e}{2}\|\nabla U(x_{L\e})\| - \e\sum_{i=1}^{L} \left( \frac{1}{3} \right)^i \|\nabla U(x_{L\e})\| - \|p_0\|, \\
&= \frac{\e}{2}\left( 1- 2\sum_{i=1}^L \left(\frac{1}{3}\right)^i \right) \|\nabla U(x_{L\e})\| - \|p_0\|,
\end{align*}
where (\ref{eqn:oscillations}) is used for the second line.  The term inside the bracket can be bounded below by some fixed constant $\gamma_L >0$, for any fixed $L<\infty$.  Squaring the result gives
\begin{align*}
\|p_{L\e}\|^2 &\geq \left( \frac{\e\gamma_L}{2}\|\nabla U(x_{L\e})\| - \|p_0\|\right)^2 \\ 
&= \frac{\e^2\gamma_L^2}{4}\|\nabla U(x_{L\e})\|^2 + \|p_0\|^2 - \e\delta_L\|\nabla U(x_{L\e})\|\|p_0\|,
\end{align*}
which implies that
\[
\|p_0\|^2 - \|p_{L\e}\|^2 \leq \e\gamma_L\|\nabla U(x_{L\e})\| \left( \|p_0\| - \frac{\e\gamma_L}{4}\|\nabla U(x_{L\e})\| \right).
\]
Noting that $\|p_0\|\leq \|x_0\|^\delta$ and that for any $M<\infty$ we can choose an $\|x_0\|$ large enough that
\[
\frac{\e\gamma_L}{4}\|\nabla U(x_{L\e})\| \geq M\|x_{L\e}\|\geq 2^L\|x_0\|,
\]
then it follows that $\|p_0\|^2 - \|p_{L\e}\|^2 \leq -\|x_0\|^2$.  Using this, then simply envoking (\ref{eqn:oscillation2}) gives the result.
\end{proof}

\subsubsection{Heavy tails}

In the case where $\pi(x)$ has `heavier than exponential' tails in some direction the HMC method can also exhibit slow convergence, as $\liminf_{\|x\| \to \infty}\|\nabla U(x)\| = 0$.  Intuitively the problem here is that when $\|x\|$ is large then the gradient provides insufficient drift back into the `centre' of the space, meaning the chain can exhibit random walk behaviour and hence convergence can be very slow.  Theorem \ref{thm:heavy} makes this intuition rigorous.

\begin{thm} \label{thm:heavy}
If $\|\nabla U(x)\| <M$ for all $x \in \X$, then a \emph{necessary} condition for the Hamiltonian Monte Carlo method to produce a geometrically ergodic Markov chain is
\[
\int e^{s\|x\|}\pi(dx) < \infty
\]
for some $s>0$.
\end{thm}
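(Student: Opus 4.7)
The plan is to apply Proposition \ref{prop:tight} (and the remark immediately following it, which reduces the problem to a bound on the candidate kernel $Q$). By that proposition, it suffices to show that under the hypothesis $\|\nabla U(x)\| \leq M$ for all $x$, for every $\eta > 0$ there exists a $\delta > 0$ (independent of the state $x_0$) with
\[
Q(x_0, B_\delta(x_0)) > 1 - \eta.
\]
Once this is established, the proposition immediately yields that geometric ergodicity forces $\mathbb{E}_\pi[e^{s\|x\|}] < \infty$ for some $s > 0$, which is the contrapositive of the desired statement.

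To produce such a uniform $\delta$, I start from the explicit proposal representation (\ref{eqn:mprop}):
\[
x_{L\e} - x_0 = -\frac{L\e^2}{2}\nabla U(x_0) - \e^2 \sum_{i=1}^{L-1}(L-i)\nabla U(x_{i\e}) + L\e p_0.
\]
Taking norms and using $\|\nabla U(\cdot)\| \leq M$ gives the state-independent bound
\[
\|x_{L\e} - x_0\| \leq \frac{L\e^2 M}{2} + \e^2 M \cdot \frac{L(L-1)}{2} + L\e\|p_0\|.
\]
The right-hand side depends only on $L$, $\e$, $M$ and $\|p_0\|$, and crucially \emph{not} on the current position $x_0$.

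Next I control the two sources of randomness, $L$ and $p_0$. Given $\eta > 0$, choose $R < \infty$ such that $\mu^G(\{p : \|p\| > R\}) < \eta/2$ (possible since $p_0$ is standard Gaussian), and choose $L^* < \infty$ such that $\mathbb{P}_\mathfrak{L}(L > L^*) < \eta/2$. The existence of such an $L^*$ follows from Assumption \textbf{A1}, which imposes a finite exponential moment on $L$ (it can easily be checked that this forces $\mathbb{P}_\mathfrak{L}(L > L^*) \to 0$ as $L^* \to \infty$). On the intersection event $\{L \leq L^*\} \cap \{\|p_0\| \leq R\}$, which has probability at least $1 - \eta$ under the joint law of $(L, p_0)$, the displayed bound reads
\[
\|x_{L\e} - x_0\| \leq \frac{L^* \e^2 M}{2} + \e^2 M \cdot \frac{L^*(L^* - 1)}{2} + L^* \e R =: \delta(\eta),
\]
which is a finite constant depending on $\eta$, $M$, $\e$ and the law $\mathfrak{L}$, but not on $x_0$.

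Hence $Q(x_0, B_{\delta(\eta)}(x_0)) \geq 1 - \eta$ uniformly in $x_0$, so condition (\ref{eqn:lackge2}) holds. Proposition \ref{prop:tight} then yields that geometric ergodicity is impossible unless $\mathbb{E}_\pi[e^{s\|x\|}] < \infty$ for some $s > 0$. I do not anticipate any serious technical obstacle here; the only point that requires a little care is verifying that Assumption \textbf{A1} (which is phrased in terms of an exponential moment of $\|x_{L\e}\|$ for fixed inputs) indeed implies the tail control on $L$ that I need. This is immediate from the fact that with $p_0$ and $x_0$ fixed the gradient contribution is bounded by $M$ and so $\|x_{L\e}\|$ grows at most polynomially in $L$, forcing $\mathbb{P}_\mathfrak{L}(L > L^*) \to 0$.
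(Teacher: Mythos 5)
Your proof is correct and follows essentially the same route as the paper: apply Proposition \ref{prop:tight} via the candidate kernel, use the explicit proposal representation (\ref{eqn:mprop}) with the uniform gradient bound to get a state-independent bound on $\|x_{L\e}-x_0\|$, and control the Gaussian momentum tail. Your extra care over the randomness of $L$ is fine but not needed in the form you state it --- $\mathbb{P}_{\mathfrak{L}}(L>L^*)\to 0$ as $L^*\to\infty$ holds for any probability distribution on $\mathbb{Z}_+$, without appeal to Assumption \textbf{A1}.
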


\begin{proof}
From Proposition \ref{prop:tight}, it is sufficient to show that for any $\e>0$ there is a $\delta >0$ such that $Q(x,B_\delta(x)) > 1-\e$ for all $x \in \X$.  Using equation (\ref{eqn:mprop}) if $x_0$ is the current point in the chain then
\[
\|x_{L\e} - x_0\| = \left\| L\e p_0 - \frac{L\e^2}{2}\nabla U(x_0) - \e^2\sum_{i=1}^{L-1} (L-i)\nabla U(x_{i\e}) \right\|.
\]
Applying the triangle inequality and then the global bound on $\|\nabla U(x)\|$ gives
\[
\|x_{L\e} - x_0\| \leq \frac{L\e^2}{2}M + \frac{M\e^2 L(L-1)}{2} + L\e \|p_0\|.
\]
As $p_0$ follows a centred Gaussian distribution with fixed covariance then Chebyshev's inequality gives the result.
\end{proof}

In fact, in this case the lack of geometric ergodicity is a property of the flow itself, rather than being a consequence of numerical instabilities as in Theorem \ref{thm:light}, as shown by the following result.

\begin{prop} \label{prop:heavyexact}
Theorem \ref{thm:heavy} still holds even if an exact integrator is available for Hamilton's equations.
\end{prop}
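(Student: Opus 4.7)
The plan is to follow essentially the same structure as the proof of Theorem \ref{thm:heavy}, but replace the leapfrog-based explicit expression for $x_{L\e} - x_0$ with a bound derived from Hamilton's equations themselves. The conclusion to aim for is again that for any $\eta>0$ there exists $\delta>0$, independent of $x_0$, such that $Q(x_0,B_\delta(x_0))>1-\eta$, after which Proposition \ref{prop:tight} can be invoked verbatim.

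The key step is to turn the global bound $\|\nabla U(x)\|\leq M$ into a uniform (in $x_0$) bound on the displacement $\|x_T-x_0\|$ under the exact flow $\varphi_T$. Since $\dot{x}_t = p_t$ and $\dot{p}_t = -\nabla U(x_t)$, we have
\[
p_t = p_0 - \int_0^t \nabla U(x_s)\,ds,\qquad x_T - x_0 = \int_0^T p_t\,dt,
\]
and hence the triangle inequality together with $\|\nabla U\|\leq M$ yields $\|p_t\|\leq \|p_0\| + Mt$ and therefore
\[
\|x_T - x_0\|\leq T\|p_0\| + \frac{MT^2}{2}.
\]
Crucially, the right-hand side does not depend on $x_0$, which is the exact analogue of what the triangle inequality gave us in the leapfrog case.

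From here the argument is routine. Given $\eta>0$, by Assumption \textbf{A1} the integration time has exponential moments (so in particular is tight), and $p_0\sim N(0,I)$ has Gaussian tails, so we may pick $T_0<\infty$ and $R<\infty$ such that $\mathbb{P}(T>T_0)<\eta/2$ and $\mathbb{P}(\|p_0\|>R)<\eta/2$; setting $\delta := T_0 R + MT_0^2/2$ gives $\mathbb{P}(\|x_T-x_0\|>\delta)<\eta$ uniformly in $x_0$, which is what we need.

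The main (and only real) obstacle is the appearance of the random integration time $T$ drawn from $\mathfrak{L}(\cdot)$; if $T$ is fixed the argument collapses to a single application of Chebyshev's inequality to $\|p_0\|$, but otherwise one must combine tail bounds on $T$ with tail bounds on $\|p_0\|$ through a joint union-bound argument as above. Nothing here uses specifics of the flow beyond $\|\nabla U\|\leq M$, so, unlike Theorem \ref{thm:light}, numerical integration plays no role and the obstruction to geometric ergodicity is genuinely dynamical rather than an artifact of discretisation.
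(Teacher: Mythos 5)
Your proof is correct and follows essentially the same route as the paper: integrate Hamilton's equations, use the global bound $\|\nabla U\|\leq M$ to obtain $\|x_T-x_0\|\leq T\|p_0\|+MT^2/2$ uniformly in $x_0$, and then invoke Proposition \ref{prop:tight} via a tail bound on $\|p_0\|$. The only addition is your explicit union bound handling a random integration time, which the paper glosses over but which does not change the argument.
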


\begin{proof}
Using Hamilton's equations, we have
\begin{equation}
x_T - x_0 = \int_0^T p_s ds = \int_0^T \left[ p_0 - \int_0^s \nabla U(x_u) du \right] ds.
\end{equation}
Taking the norm and using the upper bound gives
\begin{align*}
\|x_T - x_0\| \leq T\|p_0\| - \int_0^T \int_0^s \|\nabla U(x_u)\| duds \leq T\|p_0\| + CT^2/2,
\end{align*}
and again Chebyshev's inequality gives the result.
\end{proof}

The class of models for which $\|\nabla U(x)\|$ is bounded and $\int e^s\|x\|\pi(dx)<\infty$ for some $s>0$ is comparatively narrow, essentially comprising $U(x) = C\|x\| + b(x)$, where $C<\infty$ and $b:\X\to\R$ is some appropriately regular function which is bounded both above and below.

\section{Results for an position-dependent integration time}
\label{sec:dynamic}

An important free parameter in HMC is the integration time $T$, which we have previously assumed to be independent of the current position.  The representation (\ref{eqn:mprop}) does however suggest that allowing this to change can have some benefits.  If the candidate map is viewed as
\[
x_{L\e} = x_0 + \text{DRIFT}(x_0,p_0,T) + L\e p_0,
\]
then if the `DRIFT' function becomes negligible for large $\|x_0\|$ and fixed $T$, then it can be increased in magnitude by making $T$ larger.  We make this simple intuition rigorous for an idealised algorithm on the particular one-dimensional Exponential Family class of models with densities of the form
\begin{equation} \label{eqn:efsmooth}
\pi(x) \propto \exp \left( - \beta^{-1} (1+x^2) ^{\beta / 2} \right),
\end{equation}
for some fixed $\beta>0$.  Here any contour $C_{x_0,p_0} := \{ (x,p) : H(x,p) = H(x_0,p_0)\}$ consists of a single closed path, and the flow is periodic from any fixed starting point.  We additionally assume that the period length $\zeta_{x_0,p_0} >0$ is known, and that we have an exact integrator for Hamilton's equations.  This means that we need not concern ourselves with the acceptance probability (we discuss this issue in Section \ref{sec:discussion}).

At iteration $i$ (with $x_0 = x_{i-1}$), the \emph{dynamic} HMC implementation we consider consists of re-sampling $p_0 \sim N(0,1)$, and then setting $x_i = \text{Pr}_x \circ \varphi_{\tau}(x_0,p_0)$, where $\tau \sim U[0,\zeta_{x_0,p_0}]$.  In words, we flow along the Hamiltonian for $\tau$ units of time, where $\tau$ is a uniform random variable with maximum value $\zeta_{x_0,p_0}$ (note that $\varphi_{\zeta_{x_0,p_0}}(x_0,p_0) = (x_0,p_0)$).

Firstly, note that $\pi$-irreducibility is more straightforward to see here.  To reach any set $A \in \B$ with $\pi(A) > 0$, we first consider the single contour $C_{x_0,p_0}$, and specifically the component of this contour that is connected to $(x_0,p_0)$.  Let $C_{x_0}$ be the projection of this component onto $\X$.  Then any nonempty set $A' \subset C_{x_0}$ has positive probability of occuring, as the next point is chosen from a density with support all of $C_{x_0}$.  As the contours are composed of single components, and cover the entire space, then for any $A$, the probability of choosing a contour for which this argument can be applied is greater than zero.  We provide a figure in the supplementary material to offer more intuition \citep{supplement}.

We introduce some additional notation in this section.  We define the \emph{microcanonical} expectation of a real-valued function $f(x_t,p_t)$, where $(x_t,p_t) = \varphi_t(x_0,p_0)$, i.e. the solution to (\ref{eqn:hamilton}) for $t$ units of time initialised at $(x_0,p_0)$, as
\begin{equation}
\left< f(x_0,p_0) \right> := \zeta_{x_0,p_0}^{-1} \int_0^{\zeta_{x_0,p_0}} f(x_s,p_s)ds.
\end{equation}
This is simply the time expectation of $f$ from uniformly sampling across $C_{x_0,p_0}$.

We first introduce a result from the Physics literature (e.g. \citep{goldstein1965classical}) which relates the kinetic and potential energies.

\begin{thm} (Virial Theorem). \label{thm:virial}
Under Hamiltonian flow $(x_{s},p_{s}) = \varphi_s(x_0,p_0)$ we have
\begin{equation} \label{eqn:vt1}
\left< x_0\nabla U(x_0) \right> = \left< p_0^2 \right>.
\end{equation}
\end{thm}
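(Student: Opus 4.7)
The plan is to use the classical mechanics trick of differentiating $x_t p_t$ along the Hamiltonian flow and then exploiting periodicity of the orbit on the contour $C_{x_0,p_0}$.

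First, I would apply Hamilton's equations (\ref{eqn:hamilton}) (which with $M = I$ read $\dot x_t = p_t$ and $\dot p_t = -\nabla U(x_t)$) to compute
\begin{equation*}
\frac{d}{dt}\bigl(x_t p_t\bigr) \;=\; \dot x_t \, p_t + x_t \, \dot p_t \;=\; p_t^2 - x_t\nabla U(x_t).
\end{equation*}
This identity is exact along any solution of (\ref{eqn:hamilton}), so it holds pointwise on the contour traced out by $\varphi_s(x_0,p_0)$.

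Next, I would integrate this identity over one full period of the orbit, namely over $s \in [0,\zeta_{x_0,p_0}]$. By the fundamental theorem of calculus the left-hand side collapses to $x_{\zeta_{x_0,p_0}} p_{\zeta_{x_0,p_0}} - x_0 p_0$, which vanishes because the flow is periodic: $\varphi_{\zeta_{x_0,p_0}}(x_0,p_0) = (x_0,p_0)$. Hence
\begin{equation*}
\int_0^{\zeta_{x_0,p_0}} p_s^2 \, ds \;=\; \int_0^{\zeta_{x_0,p_0}} x_s \nabla U(x_s)\, ds.
\end{equation*}

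Finally, dividing both sides by $\zeta_{x_0,p_0}$ converts each integral into a microcanonical average in the sense defined just before the statement, giving $\langle p_0^2\rangle = \langle x_0\nabla U(x_0)\rangle$ as required. The only subtlety worth flagging is that the microcanonical average depends only on the contour $C_{x_0,p_0}$, not on the chosen base point $(x_0,p_0)$, because shifting the base point merely reparametrises the periodic integral; this is what justifies writing the result in the form (\ref{eqn:vt1}) with $(x_0,p_0)$ appearing on both sides. I do not foresee a serious obstacle here: the argument is essentially the standard derivation of the virial theorem, and the hypotheses guaranteed in Section \ref{sec:dynamic} (a closed, periodic orbit with known finite period) are exactly what is needed to justify the boundary term being zero.
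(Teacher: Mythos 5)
Your proof is correct and follows essentially the same route as the paper: both differentiate the virial function $G_t = x_t p_t$ along the flow using Hamilton's equations and then kill the boundary term via periodicity of the orbit before dividing by $\zeta_{x_0,p_0}$ to obtain the microcanonical averages. No gaps.
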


\begin{proof} Define the \emph{virial} function $G_t = x_tp_t$.  From the fundamental theorem of Calculus we have
\[
\left< \dot{G}_0 \right> = \frac{G_{\zeta_{x_0,p_0}} - G_0}{\zeta_{x_0,p_0}} = 0,
\]
where $\dot{G}_t := dG_t/dt$.  In this case
\[
\dot{G}_t = x_t \dot{p}_t + p_t\dot{x}_t = -x_t \nabla U(x_t) + p_t^2,
\]
meaning
\[
\left< x_0 \nabla U(x_0) \right> = \left< p_0^2 \right>,
\]
as required.
\end{proof}

We can now state and prove the main result of this section.

\begin{thm} \label{thm:dynamic}
For the one-dimensional Exponential Family class of distributions with density given by (\ref{eqn:efsmooth}), the dynamic Hamiltonian Monte Carlo method produces a geometrically ergodic Markov chain for any value of $\beta > 0$.
\end{thm}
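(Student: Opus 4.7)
The plan is to establish the standard geometric drift condition on the Lyapunov function $V(x) = 1 + U(x)$. The essential feature of the idealised dynamic scheme is that $\tau$ is drawn uniformly on $[0, \zeta_{x_0,p_0}]$, so for any integrable $f$
\[
Pf(x_0) = \int \mu^G(dp_0)\, \langle f \rangle_{x_0,p_0},
\]
which reduces the computation of $PU(x_0)$ to analysing the microcanonical average $\langle U \rangle_{x_0,p_0}$ and then integrating over the Gaussian initial momentum.

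The central computation combines two conservation laws. Conservation of the Hamiltonian along the orbit gives $\langle U \rangle + \langle p^2 \rangle/2 = H(x_0,p_0)$, and Theorem \ref{thm:virial} supplies $\langle p^2 \rangle = \langle x \nabla U \rangle$. For this specific family, the algebraic identity
\[
x \nabla U(x) = x^2(1+x^2)^{\beta/2-1} = (1+x^2)^{\beta/2} - (1+x^2)^{\beta/2-1} = \beta U(x) - (1+x^2)^{\beta/2-1}
\]
lets me eliminate $\langle x \nabla U \rangle$ and solve explicitly, obtaining
\[
\langle U \rangle_{x_0,p_0} = \frac{2 H(x_0,p_0) + \langle (1+x^2)^{\beta/2-1} \rangle_{x_0,p_0}}{\beta + 2}.
\]
The coefficient $2/(\beta+2)$ is strictly less than one for every $\beta > 0$, which is the source of the geometric contraction; the only work left is to show that the remainder term is of strictly lower order than $H(x_0,p_0)$ as $\|x_0\| \to \infty$.

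For $\beta \leq 2$ the integrand $(1+x^2)^{\beta/2-1}$ is bounded above by $1$, so the correction is $O(1)$ uniformly in $(x_0,p_0)$. For $\beta > 2$ the integrand is increasing in $|x|$ and is therefore maximised at the turning points $\pm x_{\max}$ where $U(x_{\max}) = H$, giving $\langle (1+x^2)^{\beta/2-1} \rangle \leq (\beta H)^{1 - 2/\beta}$, which is strictly sublinear in $H$. Averaging over $p_0 \sim \mu^G$ and using $H(x_0,p_0) = U(x_0) + p_0^2/2$ yields
\[
PU(x_0) \leq \frac{2}{\beta + 2}\, U(x_0) + o(U(x_0)) + O(1),
\]
so $PV(x_0) \leq \lambda V(x_0) + b\, \mathbbm{1}_C(x_0)$ for some $\lambda < 1$, a compact sublevel set $C$ of $V$, and a finite $b$. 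Combined with the $\pi$-irreducibility sketched after Figure \ref{fig:hmcperiod}, aperiodicity (which follows from the diffuse law of $\tau$ together with the continuous refreshment of $p_0$), and the smallness of compact sublevel sets, the Meyn--Tweedie criterion delivers geometric ergodicity.

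The main obstacle I anticipate is the $\beta > 2$ regime, in which the pointwise bound on the remainder and the subsequent Gaussian integral over $p_0$ must be controlled simultaneously; this requires extracting a term that is genuinely $o(U(x_0))$, e.g.\ via subadditivity of $u \mapsto u^{1-2/\beta}$ applied to $(U(x_0) + p_0^2/2)^{1-2/\beta}$. A secondary concern is the small-set/minorisation argument: because there is no Metropolis--Hastings accept--reject step, Proposition \ref{prop:mh1} does not apply directly, and minorisation on compacts must instead be built from the strict positivity of the joint density of $(p_0, \tau)$ combined with the smoothness of the flow, so that the induced one-step transition density on $\X$ is bounded below by a non-trivial measure for every $x_0$ in the compact set.
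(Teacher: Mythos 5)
Your proof is correct and rests on the same two pillars as the paper's: conservation of the Hamiltonian along the orbit and the Virial theorem (Theorem \ref{thm:virial}). The difference is in where the Virial correction is absorbed. The paper puts it into the Lyapunov function itself, taking $V(x)=U(x)+xU'(x)/2+1$ (the printed statement omits the factor $1/2$, but the later inequality $x_0U'(x_0)/2\geq \beta U(x_0)/2-1$ makes clear this is what is intended), so that conservation plus Virial give the \emph{exact} identity $PV(x_0)=U(x_0)+3/2$ with no remainder; the contraction then comes from comparing $U$ with $V\sim(1+\beta/2)U$, using only the asymptotic relation $xU'(x)\to\beta U(x)$. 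You instead keep $V=1+U$ and exploit the exact algebraic identity $xU'(x)=\beta U(x)-(1+x^2)^{\beta/2-1}$ to solve a linear equation for $\left< U\right>$, arriving at the same contraction factor $2/(\beta+2)=1/(1+\beta/2)$, at the cost of a remainder term that must be shown to be of lower order than $H$ --- which you handle correctly, including the turning-point bound $(\beta H)^{1-2/\beta}$ and the subadditivity step needed to integrate it over $p_0$ when $\beta>2$. Your route is more computational but yields an explicit expression for $\left< U\right>$ and makes the origin of the rate transparent; the paper's is slicker and avoids the remainder entirely. Your closing remarks on irreducibility, aperiodicity and minorisation on compact sets are also apposite: Proposition \ref{prop:mh1} indeed does not apply to the idealised scheme, and the paper itself only sketches irreducibility and leaves the small-set argument implicit, so you are if anything more careful on that front.
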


\begin{proof}
Note that by conservation of the Hamiltonian, we have
\begin{align} \label{eqn:conserveh}
\int \left< U(x_0) + p_0^2/2 \right> \mu^G(dp_0)  &= \int H(x_0,p_0)\mu^G(dp_0) = U(x_0) + 1/2,
\end{align}
Choose the Lyapunov function $V(x) = U(x) + xU'(x) + 1$.  Using Theorem \ref{thm:virial}, we can re-write the above expression
\[
PV(x_0)  = U(x_0) + 3/2.
\]
Note also that for any $\eta > 0$ there is an $M_\eta<\infty$ such that whenever $|x_0|>M_\eta$
\[
(1+\eta)U(x_0) \geq U(x_0) + 3/2.
\]
The proof will be complete if we can find a $\lambda < 1$ such that $(1+\eta)U(x_0) \leq \lambda V(x_0)$ for suitably large $|x_0|$.  Now $x_0U'(x_0) \to \beta U(x_0)$ here as $|x_0| \to \infty$, meaning that there is an $M<\infty$ such that whenever $|x_0|>M$
\[
x_0 U'(x_0)/2 \geq \beta U(x_0)/2 - 1.
\]
Taking $|x_0|\geq \max(M_\eta,M)$ we can therefore re-write the inequality of interest $(1+\eta)U(x_0) \leq \lambda V(x_0)$ as
\[
(1+\eta)U(x_0) \leq \lambda (1+\beta/2) U(x),
\]
which will be true if
\[
\lambda \geq \frac{1+\eta}{1+\beta/2}.
\]
Choosing $\eta<\beta/2$ ensures $\lambda <1$ and also gives the desired inequality $PV(x_0) \leq \lambda V(x_0)$ whenever $|x_0| > \max(M_\eta,M)$, showing that the resulting Markov chain will be geometrically ergodic.
\end{proof}

\section{Discussion}
\label{sec:discussion}

We have established conditions under which geometric ergodicity will and will not hold for Markov chains produced by the Hamiltonian Monte Carlo method.  Here we discuss how our results can be extended in various ways, as well as how they translate to standard implementations in widely used software \citep{carpenter2016stan}.

\subsection{Dynamic implementations}

Allowing the integration time in HMC to depend on the current point in the chain without an exact integrator will typically mean that some adjustments to $\alpha(x_0,x_T)$ must be made to ensure that $\pi(\cdot)$ is still preserved.  The reason is that the approximate flow map $\varphi_T$ may no longer be reversible, as if $T_1 := T(x_0,p_0)$ and $T_2 := T(x_T,p_T)$ then $\varphi^{-1}_{T_2} \circ \varphi_{T_1}$ will typically not be the identity map if $T_1 \neq T_2$.  The two possible ways of changing the integration time $T = L\e$ are to adjust either $L$ or $\e$.  Increasing $L$ requires more computations per transition, while this is not necessarily true for $\e$.  In the No-U-Turn sampler a binary tree approach is introduced to ensure preservation of detailed balance when $L$ is altered in different parts of the space \citep{hoffman-gelman:2013}.  We are not aware of any implementations involving adjustment of $\e$, however it is likely that similar modifications to $\alpha(x_0,x_T)$ are possible here also.  Adjusting $\e$ may be a sensible option in some cases, as the leapfrog method is known to `almost' preserve the modified Hamiltonian
\[
\tilde{H}(x,p) = H(x,p) + \left( \frac{1}{12} p^t \nabla^t\nabla U(x) p - \frac{1}{24}\nabla U(x)^t \nabla U(x) \right) \e^2 + O(\e^4),
\]
as shown for example in \citep{leimkuhler2004simulating}.  When $\pi(x)$ is not log-concave in the tails and hence the elements of $\nabla U$ and $\nabla^t \nabla U$ become negligible as $\|x\| \to \infty$, this implies that $\e$ can be increased for larger $\|x\|$ without compromising on numerical accuracy.

\subsection{Extension to other integrators}

The fixed integration time results in Section \ref{sec:fixed} refer specifically to the leapfrog integrator implementation of HMC (aside from Proposition \ref{prop:heavyexact}).  It should be possible to use the same approach when analysing other explicit symplectic integrators, however for schemes which rely on implicit methods (e.g. \citep{girolami2011riemann}) then composing multiple steps of the integrator as in Proposition \ref{prop:mprop} cannot be done so cleanly.  Implicit methods are needed when the Hamiltonian is non-separable, and can often resolve stiffness issues such as those characterised in Theorem \ref{thm:light}.

To construct ergodicity results for the most general version of Hamiltonian Monte Carlo (i.e. without restricting attention to $\mathbb{R}^n$) we note that there are many ways to construct drift conditions in line with the purely geometric framework introduced in \citep{betancourt2014geometric}.  We also point out that for the one-dimensional Exponential Family, choosing the Riemannian metric $G(x) = \|\nabla^2U(x)\|$ and employing the approach of \citep{girolami2011riemann} is mathematically equivalent to applying the transformation $x' = \text{sgn}(x)\|x\|^{\beta/2}$ with corresponding density
\[
\pi(x') \propto \|x'\|^{2/(\beta - 1)}\exp \left( -x'^2/\beta^2 \right).
\]
This new density will have Gaussian tails for any $\beta >0$, suggesting a well-behaved sampler can be constructed.  Further discussion on the relationship between geometric Markov chain Monte Carlo methods and parameter transformations is given in \citep{livingstone2014information}.

\subsection{Honest bounds}

Geometric ergodicity is often called a \emph{qualitative} bound, as an explicit upper bound on the geometric rate $\rho$ is not established when using the techniques of \citep{roberts1996geometric}.  With some modifications, however, quantitative bounds can be constructed (e.g. \citep{jones2001honest}).  We have refrained from doing this here, as these bounds are also often too conservative to be of use in practice \citep{jones2001honest}.

Monte Carlo estimates for non-asymptotic quantitative bounds using the Ricci curvature approach of \citep{ollivier2009ricci} are applied to Hamiltonian Monte Carlo in \citep{seiler2014positive}.  We note that the applicability of these bounds relies on the assumption of \emph{positive curvature} in some Wasserstein distance for the underlying Markov chain.  When this distance is chosen to be Total Variation, then this is a strictly stronger condition than geometric ergodicity [see Corollary 22 in \citep{ollivier2009ricci}], so we feel that our results are a useful pre-cursor to understanding when these estimated bounds are informative in practice.

In the case of MALA, when $\|\nabla U(x)\|$ grows at a faster than linear rate for large $\|x\|$ then it is shown in \citep{bou2012nonasymptotic} that useful inferences for functionals concentrated in the centre of the space can be made by setting a small enough value for $\e$.  It is likely that the same analysis can be done with HMC, and that the result would be similar, but we leave such explorations for future work.

\subsection{Practitioner guidelines}

The main conclusion of our work for practitioners implementing the method in a bespoke manner is to consider the form of $\|\nabla U(x)\|$.  If this term either grows very fast or becomes negligibly small when $\|x\|$ is large then it is likely that the Markov chains produced will struggle to explore the tails of $\pi(\cdot)$ effectively.  When the gradient grows at a faster than linear rate then a suitably small value for $\e$ must be chosen to counteract this, while when it shrinks then the integration time $T$ must be made sufficiently large.  Of course in either scenario if there is a re-parametrisation of the model that may not suffer these difficulties then this should be applied.  Users implementing the method in the Stan software \citep{carpenter2016stan} should note that both of these instances are captured by standard output diagnostics.  Numerical trajectories that become unstable due to large gradients are classed as `divergences', while a failure to move far enough because of negligible gradients is recorded through the `maximum tree depth reached' warning.  If this happens and $\pi(\cdot)$ is known to be proper then the user should set as large a maximum tree depth as is computationally feasible when tail exploration is of keen interest.

\section*{Acknowledgements}

We thank Alexandros Beskos, Gareth Roberts, Krzysztof \L{}atuszy\'{n}ski, Gabriel Stoltz and Mark Rowland for useful discussions.  SL thanks Nawaf Bou--Rabee for pointing him to \citep{cances2007theoretical}.

SL was supported by a PhD scholarship from Xerox Research Centre Europe and EPSRC grant EP/K014463/1 for this project.  SB was supported by EPSRC fellowship EP/K005723/1, MB is funded by EPSRC grant EP/J016934/1, and SB and MB also received a 2014 EPRSC NCSML Award for PDRA Collaboration for this project.  MG is funded by an EPSRC Established Career Research Fellowship, EP/J016934/1, a Royal Society Wolfson Research Merit Award, and EPSRC grants EP/P020720/1, EP/J016934/3, EP/K034154/1.

\bibliographystyle{imsart-nameyear}
\bibliography{mybibliography}


\appendix

\sname{Supplement A}

\section{Examples of $\pi$-irreducibility}

The below example shows how the Hamiltonian Monte Carlo proposal transition can produce a method which is not $\pi$-irreducible, and hence will not be ergodic.

\begin{ex}  \label{ex:irr}
Take $\pi(x) \propto e^{-x^2/2}$, meaning $\nabla U(x) = x$, and set $L = 2$.  Then the HMC proposal becomes
\begin{align*}
x_{2\e} 
&= x_0 - \e^2 x_0 - \e^2 (x_0 - \e^2 x_0 + \e p_0) + 2\e p_0, \\
&= (1 - 2\e^2 + \e^4)x_0 + (2\e - \e^3)p_0.
\end{align*}
Setting $\e = \sqrt{2}$ means $2\e - \e^3 = 0$, so that
\[
x_{2\e} = (1 - 4 + 4)x_0 = x_0.
\]
With this transition the proposal kernel is simply $Q(x,\cdot) = \delta_x(\cdot)$, so the chain is not $\pi$-irreducible unless $\pi(\cdot) = \delta_x(\cdot)$.
\end{ex}

The following diagram give intuition for the $\pi$-irreducibility argument of the idealised Hamiltonian Monte Carlo method.

\vspace{-0.3cm}
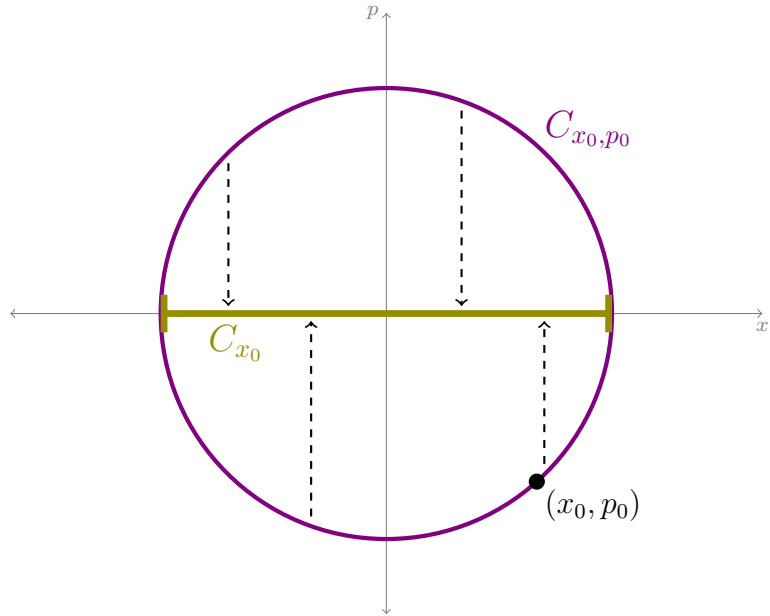
\begin{figure}[H]
\begin{center}
\begin{tikzpicture}
\draw [<->, thin, gray] (0,4) node [left] {$p$} -- (0,0) -- 
(5,0) node [below] {$x$};
\draw [<->, thin, gray] (0,-4) -- (0,0) -- (-5,0);
\draw [violet, ultra thick] (0,0) circle [radius=3];
\draw [|-|, olive, line width=0.09cm] (-3,0) -- (3,0);
\draw [dashed, thick, ->] (1,2.7) -- (1,0.1);
\draw [dashed, thick, ->] (-2.1,2) -- (-2.1,0.1);
\draw [dashed, thick, ->] (-1,-2.7) -- (-1,-0.1);
\draw [dashed, thick, ->] (2.1,-2) -- (2.1,-0.1);
\draw[fill] (2,-2.236) circle [radius=0.1];
\node [below right] at (2,-2.236) {\large $(x_0,p_0)$};
\node [above right, violet] at (2, 2.1) {\Large $C_{x_0,p_0}$};
\node [below, olive] at (-2,-0.05) {\Large $C_{x_0}$};
\end{tikzpicture}
\end{center}
\caption{\label{fig:hmcperiod} The contour $C_{x_0,p_0} = \{ (y,z) \in \R^2 : y^2 + z^2 = 9\}$ for the Hamiltonian flow with Gaussian target $\pi(x) \propto e^{-x^2/2}$, with current point $(x_0,p_0)$ lying on the circle of radius 3, and its projection onto the set $C_{x_0} = [-3,3]$.}
\end{figure}

\section{Connections between HMC and Langevin dynamics}

Hamiltonian Monte Carlo is based on interspersing Hamiltonian dynamics given by the equations
\[
\dot{x} = p, ~~ \dot{p} = -\nabla U(x)
\]
with intermittent re-sampling of $p \sim N(0,I)$ to inject stochasticity into the system. After a small time period the $x$-coordinate will be
\[
x_T = x_0 - \int_0^T\int_0^t \nabla U(x_s)dsdt  + Tp_0
\]
Taking $\sqrt{\delta t} \ll 1$ then for suitably regular $\nabla U(x)$ one can approximate this with the expression
\begin{align*}
x_{\sqrt{\delta t}} &\approx x_0 - \int_0^{\sqrt{\delta t}} t\nabla U(x_0) dt  + \sqrt{\delta t} p_0, \\
&= x_0 - (1/2)\nabla U(x_0)\delta t  + \sqrt{\delta t}p_0.
\end{align*}
Hence, if the dynamics are only performed for a short period before momentum re-sampling, the dynamics will be close to those of an overdamped Langevin diffusion described by the stochastic differential equation
\[
dX_t = -(1/2)\nabla U(X_t)dt + dW_t.
\]
If instead $T$ is typically large in between momentum refreshments, then one can instead consider underdamped Langevin dynamics, described by the system
\begin{align*}
dX_t &= V_t dt, \\
dV_t &= -\nabla U(X_t)dt - \gamma V_t dt + \sqrt{2\gamma}dW_t,
\end{align*}
for some $\gamma>0$. This system more obviously relates to HMC, since it consists of a Hamiltonian part combined with some stochasticity, which is also injected into the momentum variable $V_t$. The stochasticity is however in this case continuously injected in the form of an Ornstein--Uhlenbeck (OU) process. As such, as the dynamics evolve the conservative Hamiltonian flow is constantly perturbed by small random adjustments to the momentum. This is closely connected to the behaviour of generalized HMC, in which integration times are typically smaller and the momentum is only partially refreshed using $p_{new} \sim N(\xi p_{old}, (1-\xi^2)I)$. Indeed, if $\xi := e^{-\gamma t}$ then this step is an exact solution to the OU part of the system.

\section{Proof of inwards convergence for the Exponential family model class}

We provide a verbose proof in the case $\beta\in(1,2)$, to elaborate on the short version provided in the main text.  In what follows $U(x):=\alpha|x|^{\beta}$ for some $\alpha>0$,
and $U^{(k)}(x):=d^{k}U(x)/dx^{k}.$ We precede the main result with
a technical lemma.
\begin{lem}
\label{lem:distance}For every $i\in\{1,...,L\}$, if $p_{0}=o(x_{0}^{\beta-1})$
then

\begin{align*}
(x_{i\varepsilon}-x_{0}) & =-\frac{i^{2}\varepsilon^{2}}{2}U'(x_{0})+\frac{\varepsilon^{4}}{2}\sum_{j=1}^{i-1}(i-j)j^{2}U^{(2)}(x_{0})U'(x_{0})-i\varepsilon p_{0}+o(x_{0}^{2\beta-3}).
\end{align*}
\end{lem}
\begin{proof}
Direct calculation gives
\begin{align*}
(x_{i\varepsilon}-x_{0}) & =-\frac{i\varepsilon^{2}}{2}U'(x_{0})-\varepsilon^{2}\sum_{j=1}^{i-1}(i-j)U'(x_{j\varepsilon})-i\varepsilon p_{0}\\
 & =-\frac{i\varepsilon^{2}}{2}U'(x_{0})-\varepsilon^{2}\sum_{j=1}^{i-1}(i-j)\left[U'(x_{0})+\sum_{k=1}^{\infty}U^{(k+1)}(x_{0})(x_{j\varepsilon}-x_{0})^{k}\frac{1}{k!}\right]-i\varepsilon p_{0}\\
 & =-\frac{i^{2}\varepsilon^{2}}{2}U'(x_{0})+\frac{\varepsilon^{4}}{2}\sum_{j=1}^{i-1}(i-j)j^{2}U^{(2)}(x_{0})U'(x_{0})-i\varepsilon p_{0}+o(x_{0}^{2\beta-3}).
\end{align*}
\end{proof}
\begin{prop}
If $\beta\in(1,2)$ then HMC converges inwards.
\end{prop}
\begin{proof}
We have
\begin{align*}
K(p_{0})-K(p_{L\varepsilon}) & =\frac{1}{2}p_{0}^{2}-\frac{1}{2}\left(p_{0}-\frac{\varepsilon}{2}(U'(x_{0})+U'(x_{L\varepsilon}))-\varepsilon\sum_{i=1}^{L-1}U'(x_{i\varepsilon})\right)^{2}\\
 & =\frac{\varepsilon}{2}p_{0}\left[U'(x_{0})+U'(x_{L\varepsilon})+2\sum_{i=1}^{L-1}U'(x_{i\varepsilon})\right]\\
 & \qquad -\frac{\varepsilon^{2}}{8}\left[U'(x_{0})+U'(x_{L\varepsilon})+2\sum_{i=1}^{L-1}U'(x_{i\varepsilon})\right]^{2}.
\end{align*}

Take $p_{0}=o(x_{0}^{\beta-1})$, and note that this occurs with probability
one as $x_{0}\to\infty$. We write $K(p_{0})-K(p_{L\varepsilon})=\kappa_{1}+\kappa_{2}$
where $\kappa_{1}:=\varepsilon p_{0}\left[U'(x_{0})+U'(x_{L\varepsilon})+2\sum_{i=1}^{L-1}U'(x_{i\varepsilon})\right]/2$
and $\kappa_{2}:=-\varepsilon^{2}\left[U'(x_{0})+U'(x_{L\varepsilon})+2\sum_{i=1}^{L-1}U'(x_{i\varepsilon})\right]^{2}/8$.
Then
\begin{align*}
\kappa_{1} & =\frac{\varepsilon}{2}p_{0}\left[U'(x_{0})+U'(x_{L\varepsilon})+2\sum_{i=1}^{L-1}U'(x_{i\varepsilon})\right]\\
 & =\frac{\varepsilon}{2}p_{0}\left[2LU'(x_{0})+\sum_{k=1}^{\infty}U^{(k+1)}(x_{0})(x_{L\varepsilon}-x_{0})^{k}\frac{1}{k!}+2\sum_{k=1}^{\infty}\frac{1}{k!}U^{(k+1)}(x_{0})\sum_{i=1}^{L-1}(x_{i\varepsilon}-x_{0})^{k}\right]\\
 & =L\varepsilon p_{0}U'(x_{0})+o(x^{3\beta-4}).
\end{align*}
And similarly up to $o(x_{0}^{3\beta-4})$ terms
\begin{align*}
\kappa_{2} & =-\frac{\varepsilon^{2}}{8}\left[U'(x_{0})+U'(x_{L\varepsilon})+2\sum_{i=1}^{L-1}U'(x_{i\varepsilon})\right]^{2}\\
 & =-\frac{\varepsilon^{2}}{8}\left[2LU'(x_{0})+\sum_{k=1}^{\infty}U^{(k+1)}(x_{0})(x_{L\varepsilon}-x_{0})^{k}\frac{1}{k!}+2\sum_{k=1}^{\infty}\frac{1}{k!}U^{(k+1)}(x_{0})\sum_{i=1}^{L-1}(x_{i\varepsilon}-x_{0})^{k}\right]^{2}\\
 & =-\frac{\varepsilon^{2}}{8}\left[4L^{2}U'(x_{0})^{2}+4LU'(x_{0})U^{(2)}(x)(x_{L\varepsilon}-x_{0})+8LU'(x_{0})U^{(2)}(x_{0})\sum_{i=1}^{L-1}(x_{i\varepsilon}-x_{0})\right].
\end{align*}
Since $(x_{L\varepsilon}-x_{0})=-L^{2}\varepsilon^{2}U'(x_{0})/2+o(x^{\beta-1})$
then
\begin{align*}
\kappa_{2} & =-\frac{\varepsilon^{2}}{8}\left(4L^{2}U'(x_{0})^{2}-4L\frac{L^{2}\varepsilon^{2}}{2}U'(x_{0})^{2}U^{(2)}(x_{0})-8L\frac{\varepsilon^{2}}{2}U'(x_{0})^{2}U^{(2)}(x_{0})\sum_{i=1}^{L-1}i^{2}\right)\\
 & =-\frac{\varepsilon^{2}}{8}\left(4L^{2}U'(x_{0})^{2}-\varepsilon^{2}\left[2L^{3}+4L\sum_{i=1}^{L-1}i^{2}\right]U'(x_{0})^{2}U^{(2)}(x_{0})+...\right)+o(x_{0}^{3\beta-4}).
\end{align*}
Now turning to $U(x_{0})-U(x_{L\varepsilon})$ and using Lemma \ref{lem:distance}
we have
\begin{align*}
U(x_{0})-U(x_{L\varepsilon}) & =-U'(x_{0})(x_{L\varepsilon}-x_{0})-\frac{1}{2}U^{(2)}(x_{0})(x_{L\varepsilon}-x_{0})^{2}+o(x^{3\beta-4})\\
 & =\frac{L^{2}\varepsilon^{2}}{2}U'(x_{0})^{2}-L\varepsilon p_{0}U'(x_{0})\\
 & \qquad -\varepsilon^{4}\left[\frac{L^{4}}{8}+\frac{1}{2}\sum_{i=1}^{L-1}(L-i)i^{2}\right]U^{(2)}(x_{0})U'(x_{0})^{2}+o(x^{3\beta-4}).
\end{align*}
So combining gives up to $o\left(x_{0}^{3\beta-4}\right)$ terms then $H(x_{0},p_{0})-H(x_{L\varepsilon},p_{L\varepsilon})$ is
\[
\varepsilon^{4}\left[\frac{1}{4}L^{3}-\frac{1}{8}L^{4}+\frac{1}{2}\sum_{i=1}^{L-1}Li^{2}-\frac{1}{2}\left(\sum_{i=1}^{L-1}(L-i)i^{2}\right)\right]U'(x_{0})^{2}U^{(2)}(x_{0}).
\]
The coefficient of the leading order term divided by $\varepsilon^{4}$
is $L^{3}/4-L^{4}/8+\sum_{i=1}^{L-1}(L-(L-i))i^{2}/2=L^{2}/8$. Since
this is $>0$ then as $x_{0}\to\infty$ the result is proved. An analogous
argument holds as $x_{0}\to-\infty$.
\end{proof}


\end{document}